\documentclass[journal]{IEEEtran}
\usepackage{float}
\usepackage{xcolor,soul,framed} 
\usepackage{mathrsfs}
\usepackage{amsfonts}
\usepackage{bbm}
\usepackage{amssymb}
\usepackage{mwe,subcaption,tikz}
\usepackage{array}
\usepackage{amsthm}
\usepackage{mdwmath}
\usepackage{mdwtab}
\usepackage{eqparbox}
\usepackage{url}
\usepackage{cite}
\usepackage{amsmath}
\usepackage{algorithm}
\usepackage{algpseudocode}
\usepackage{caption}
\usepackage[utf8]{inputenc}

\usepackage{mathtools}

\newtheorem{theorem}{Theorem}
\newtheorem{lemma}{Lemma}

\DeclareMathOperator{\Tr}{Tr}
\DeclareMathOperator{\Diag}{Diag}

\usepackage{amsmath, amssymb, amsfonts, bm, amsthm}
\tikzset{boximg/.style={remember picture,red,thick,draw,inner sep=0pt,outer sep=0pt}}
\colorlet{shadecolor}{yellow}

\begin{document}
\title{Pilot-Free Optimal Control  over Wireless Networks: A Control-Aided Channel Prediction Approach}
    
\author{Minjie~Tang,~\IEEEmembership{Member, IEEE}, Zunqi Li,~\IEEEmembership{Graduate Student Member, IEEE},\\ Photios A. Stavrou,~\IEEEmembership{Senior Member, IEEE}, and  Marios Kountouris,~\IEEEmembership{Fellow, IEEE} 
\thanks{Minjie Tang and Photios A. Stavrou are with the Department of Communication Systems,  EURECOM, France (e-mail: \{minjie.tang, fotios.stavrou\}@eurecom.fr). Zunqi Li is with the School of Electronics and Information Engineering, Harbin Institute of Technology, Harbin, China (e-mail:  lizunqi@stu.hit.edu.cn). Marios Kountouris is with the Department of Communication Systems,  EURECOM, France, and the Department of Computer Science and Artificial Intelligence, University of 
Granada, Spain (kountour@eurecom.fr; mariosk@ugr.es).}}

\maketitle


\maketitle

\begin{abstract}
A  recurring theme in optimal controller design for wireless networked control systems (WNCS) is the reliance on real-time channel state information (CSI). However, acquiring accurate CSI \emph{a priori} is notoriously challenging due to the time-varying nature of wireless channels.  
In this work, we propose a \emph{pilot-free} framework for optimal control over wireless channels in which control commands are generated from plant states together with control-aided channel prediction. For linear plants operating over an orthogonal frequency-division multiplexing (OFDM) architecture, channel prediction is performed via a Kalman filter (KF), and the optimal control policy is derived from the Bellman principle. To alleviate the curse of dimensionality in computing the optimal control policy, we approximate the solution using a coupled algebraic Riccati equation (CARE), which can be computed efficiently via a stochastic approximation (SA) algorithm. Rigorous performance guarantees are established by proving the stability of both the channel predictor and the closed-loop system under the resulting control policy, providing sufficient conditions for the existence and uniqueness of a stabilizing approximate CARE solution, and establishing convergence of the SA-based control algorithm. The framework is further extended to nonlinear plants under general wireless architectures by combining a KalmanNet-based predictor with a Markov-modulated deep deterministic policy gradient (MM-DDPG) controller. Numerical results show that the proposed pilot-free approach outperforms benchmark schemes in both control performance and channel prediction accuracy for linear and nonlinear scenarios.
\end{abstract}

\begin{IEEEkeywords}
Pilot-free communication, optimal control,  reinforcement learning, system stability, channel prediction.
\end{IEEEkeywords}


%
\IEEEpeerreviewmaketitle


\section{Introduction}

\subsection{Background}

\IEEEPARstart{T}{he} rapid advances in 5G wireless networks and edge computing have accelerated the  development of wireless networked control systems (WNCSs), in which feedback loops are closed over wireless communication links~\cite{park2017wireless}. Owing to their flexibility, scalability, and ease of deployment, WNCSs have been adopted in a wide range of applications, including autonomous driving~\cite{sabuau2016optimal}, cooperative unmanned aerial vehicle (UAV) swarms~\cite{restrepo2022robust}, and the industrial Internet of Things (IIoT)~\cite{wang2023trust}. A typical WNCS consists of a (potentially unstable) dynamic plant equipped with co-located actuators, a remote controller, and a time-varying wireless link between the controller and the actuator (cf.\ Fig.~\ref{fig:architecture}). The remote controller receives real-time state feedback and generates intermittent control commands, which are transmitted over the wireless channel to the actuator and applied to stabilize the plant. Closing the loop over a wireless medium inevitably introduces communication impairments, such as fading and noise, that can degrade control performance and jeopardize stability.
From a control-design standpoint, it is therefore desirable to adapt the control policy to the instantaneous channel conditions, which requires accurate channel state information (CSI). In practice, however, acquiring timely and reliable CSI is challenging due to the fast time variation of wireless channels and the overhead associated with channel probing. This motivates the following question: \emph{Can we design an optimal controller without \emph{a priori} CSI?}

 \begin{figure*}[t]
    \centering
    \includegraphics[height=3.3cm,width=12cm]{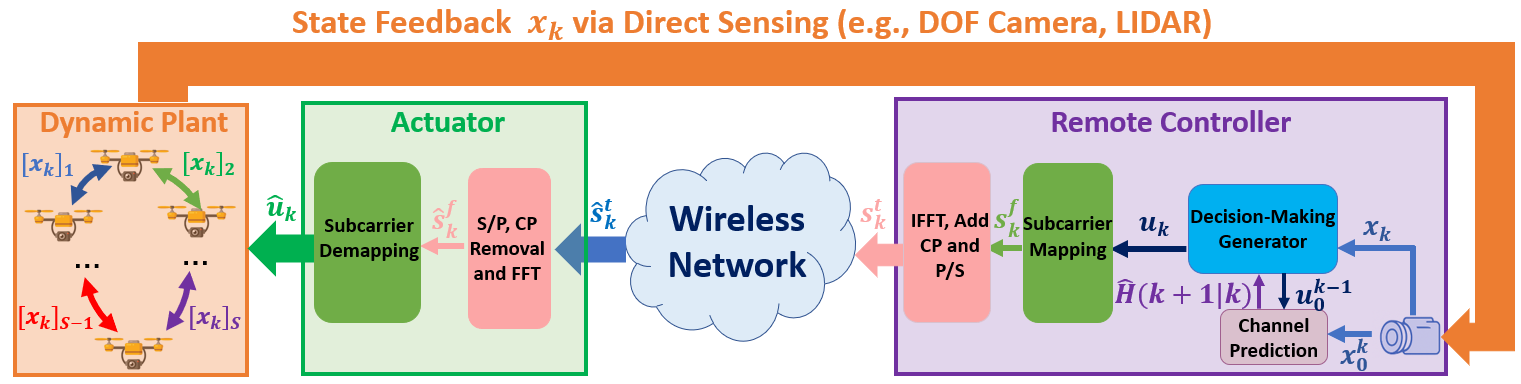}
    \caption{Typical WNCS architecture under the proposed pilot-free framework.}
    \label{fig:architecture}
\end{figure*}

\subsection{ {Prior Art}}
Most existing controller-design approaches for WNCSs are developed under static or otherwise simplified communication assumptions, primarily for linear plants. For example, proportional-integral-derivative (PID) control has been widely used in networked settings~\cite{mao2024maximization,schluter2021stability,dastjerdi2022closed}, but its heuristic nature typically requires extensive tuning. More systematic methods, such as linear quadratic regulation (LQR)~\cite{ilka2022novel,yang2022linear} and model predictive control (MPC)~\cite{fink2024minimal,wang2025online,lorenzetti2022linear}, provide principled designs for linear systems; however, they commonly assume ideal, static, or perfectly known communication channels. When deployed over wireless links, their performance can deteriorate and stability may be compromised due to channel uncertainty. To improve robustness, several works adopt simplified random channel models, including additive white Gaussian noise (AWGN)~\cite{stavrou2021lqg}, packet drops~\cite{mishra2017stabilizing,wu2007design}, and finite-state Markov packet-loss models~\cite{impicciatore2023optimal,xie2009stability,xu2019distributed}. While these models facilitate analysis and synthesis, they do not capture key characteristics such as multipath of practical fading channels. More recent studies consider control over fading links~\cite{tang2022online,shen2020averaging,su2019stabilization}, but typically rely on further simplifications such as independent and identically distributed (i.i.d.) fading~\cite{tang2022online,shen2020averaging} or finite-state Markov fading models~\cite{su2019stabilization,tzortzis2020jump}. Such models neglect temporal correlation and the continuous, high-dimensional channel states induced by multipath propagation. In addition, most of the above literature focuses on linear plants. Extending linear designs to nonlinear dynamics in a brute-force manner can lead to instability or substantial performance degradation. Finally, many existing approaches assume that the controller has access to perfect CSI, which is difficult to guarantee in  time-varying wireless environments due to channel-estimation latency and pilot overhead.

Accurate CSI typically requires explicit channel acquisition (e.g., estimation or prediction). Conventional approaches are predominantly pilot-aided~\cite{Huang2022JointPilot,Yuan2021DataAidedCE,Shi2021DeterministicPilot}: the transmitter (the remote controller in our setting) sends known pilot sequences so that the receiver (the actuator) can estimate the channel, after which the estimated CSI is fed back for transmitter-side processing (e.g., controller design).
While effective, pilot-aided acquisition incurs non-negligible spectral and energy overhead, which becomes especially burdensome in massive multiple-input multiple-output (MIMO) systems. Moreover, pilot contamination due to the reuse of non-orthogonal pilots across users or antennas can significantly degrade estimation accuracy and limit scalability.
To reduce pilot overhead, a variety of alternatives have been explored, including blind~\cite{AbedMeraim2021MisspecifiedCRB} and semi-blind methods~\cite{zhao2024decentralized}, compressed sensing approaches~\cite{Xiao2024ChannelEstimation}, and deep learning-based techniques~\cite{Ma2021ModelDriven}. Blind and semi-blind methods exploit statistical or structural properties of the transmitted signals, but often suffer from slow convergence, high computational complexity, and sensitivity to modeling assumptions. Compressed sensing leverages channel sparsity to reduce the number of measurements, but its performance depends critically on the validity of the sparsity model and can be fragile in the presence of noise and model mismatch. Deep learning-based approaches can capture nonlinear channel dynamics and temporal correlations; however, they typically require large amounts of labeled training data that are generated using pilots, which reintroduces substantial offline overhead and may limit adaptability to changing channel conditions.

To design controllers for nonlinear plants, a variety of methods have been proposed. Classical nonlinear control techniques, including feedback linearization~\cite{liu2022predictor,wu2019performance,gadginmath2024data}, adaptive control~\cite{xiang2025nonlinear}, and sliding-mode control~\cite{mousavi2023barrier}, often require restrictive structural assumptions on the dynamics (e.g., input--output decoupling, matched uncertainty, or linear parameterizability). Such conditions can be difficult to satisfy in complex environments with strong nonlinearities and uncertainty, which limits the applicability of these approaches.
More recently, reinforcement learning (RL)-based methods~\cite{shi2024fully,jiang2022adaptive} have gained attention because they can learn policies through interaction, without imposing explicit parametric forms on the nonlinear dynamics. Nevertheless, most existing RL-based controllers are developed under idealized~\cite{shi2024fully} or simplified~\cite{jiang2022adaptive} communication models (e.g., i.i.d.\ packet drops). In addition, they often assume access to instantaneous CSI, which is difficult to obtain reliably in practical WNCSs with fast time-varying channels.
\par In contrast to existing works that assume either perfect CSI or exogenous channel acquisition mechanisms, the present paper considers a fundamentally different information structure in which the channel state is not directly observed and no dedicated probing is available. Instead, channel knowledge must be inferred from the closed-loop interaction between control actions and plant state transitions. This endogenous information acquisition mechanism couples channel prediction and system control in a non-separable manner and leads to a covariance-dependent Riccati recursion and stability conditions that differ from classical linear quadratic Gaussian (LQG) or Markov jump formulations.

\subsection{Main Contributions} 
In this work, we develop a pilot-free optimal control framework for both linear and nonlinear plants operating over unreliable wireless fading channels. The proposed framework explicitly accounts for temporal channel correlation and high-dimensional channel states induced by multipath propagation, providing a more faithful model of practical wireless environments. The main idea is to exploit control signals to \emph{simultaneously regulate the plant and enable control-aided channel prediction from real-time state observations}, thereby avoiding the need for \emph{a priori} CSI. The main contributions are summarized as follows:

\begin{itemize}
\item[1)] \textbf{Pilot-free WNCS framework.} 
We propose a \emph{pilot-free} framework for WNCSs. In contrast to prior work that typically treats controller design and channel acquisition separately and assumes that CSI is either perfectly known or obtained via dedicated pilot transmissions~\cite{mao2024maximization,schluter2021stability,dastjerdi2022closed,ilka2022novel,yang2022linear,fink2024minimal,wang2025online}, our framework reuses real-time control commands as \emph{implicit pilots} to facilitate channel prediction. To the best of our knowledge, this is the first approach that leverages control signals directly for channel prediction in WNCSs. This paradigm enables controller design without \emph{a priori} CSI, reducing pilot overhead while maintaining strong control performance.


\item[2)] \textbf{Pilot-free optimal control for linear OFDM systems.}
We study a pilot-free optimal control problem for linear plants over an orthogonal frequency-division multiplexing (OFDM) architecture. We formulate coupled problems for channel prediction and system control. The optimal predictor is characterized by a Kalman filter (KF)~\cite{simon2006optimal}, while the optimal control policy follows from the Bellman optimality principle~\cite{bertsekas2012dynamic}. To mitigate the curse of dimensionality in solving the resulting continuous-state Bellman equation, we develop a tractable approximation via statistical quantization and interpret the approximate solution through a coupled algebraic Riccati equation (CARE). The CARE is computed using an online stochastic-approximation (SA) algorithm~\cite{borkar2008stochastic} that exploits control-aided channel prediction. We establish performance guarantees by proving: (i) mean-square stability of the channel prediction process and stability of the closed-loop plant, (ii) sufficient conditions for the existence and uniqueness of a stabilizing {approximate} CARE solution, and (iii) almost sure convergence of the proposed SA-based algorithm.


\item[3)] \textbf{Extension to nonlinear plants under general wireless architectures.}
We extend the pilot-free framework to nonlinear plants operating over general communication models, including MIMO, orthogonal time frequency space (OTFS)\cite{hadani2017orthogonal}, and affine frequency division multiplexing (AFDM)\cite{bemani2023affine} architectures. In this setting, we employ KalmanNet~\cite{revach2022kalmannet} for channel prediction and a Markov-modulated deep deterministic policy gradient (MM-DDPG) method~\cite{cai2021modular} for system control. To the best of our knowledge, this is the first work to integrate KalmanNet for channel prediction in nonlinear WNCSs. Compared with extended and unscented Kalman filters, KalmanNet avoids Jacobian and sigma-point computations and does not require exact noise statistics, while retaining a model-aware structure that improves robustness under strong nonlinearities and model mismatch. Moreover, MM-DDPG explicitly incorporates the predicted Markovian channel state into the control policy update, enabling stable channel-adaptive system control in continuous, high-dimensional action spaces.

\end{itemize}

\par{\emph{Notation:} Uppercase and lowercase boldface letters denote matrices and vectors, respectively. The operators $(\cdot)^T$, $(\cdot)^H$ and $\Tr(\cdot)$ denote the transpose, Hermitian transpose, and trace, respectively. The symbol $\mathbf{0}_{m \times n}$ and $\mathbf{0}_m$ denote an $m \times n$ matrix and $m \times m$ all-zero matrices, respectively, and $\mathbf{I}_m$ represents the $m \times m$ identity matrix. For a sequence $(a_1,\ldots,a_n)$, $\Diag(a_1,\ldots,a_n)$ denotes the diagonal matrix with diagonal entries $a_1,\ldots,a_n$. The sets $\mathbb{R}^{m \times n}$ and $\mathbb{C}^{m\times n}$ denote real and complex $m\times n$ matrices, respectively; $\mathbb{S}_+^m$ and $\mathbb{S}^m_{++}$ denote the sets of $m\times m$ positive semidefinite and positive definite matrices, respectively; $\mathbb{Z}_+$ denotes the set of nonnegative integers; and $\mathbb{R}$ and $\mathbb{C}$ denote the sets of real and complex numbers. The norms $||\mathbf{A}||$, $\|\mathbf{A}\|_F$, and $||\mathbf{a}||$ denote the spectral norm of a matrix $\mathbf{A}$, the Frobenius norm of $\mathbf{A}$, and the Euclidean norm of a vector $\mathbf{a}$, respectively. The ``$\ell_0$-norms'' $\|\mathbf{A}\|_0$ and $\|\mathbf{a}\|_0$ count the number of nonzero entries in $\mathbf{A}$ and $\mathbf{a}$, respectively. 
For a matrix $\mathbf{A}$, $[\mathbf{A}]_{i,:}$ and $[\mathbf{A}]_{:,i}$ denote its $i$th row and $i$th column, respectively, and for a vector $\mathbf{a}$, $[\mathbf{a}]_i$ denotes its $i$th entry.
}

\section{System Model}
\label{sec:system_model}
This section presents the proposed pilot-free architecture for the WNCS. We consider a linear dynamical system operating over an OFDM-based wireless network, as illustrated in Fig.~\ref{fig:architecture}. 


\subsection{Decision-Making at the Dynamic Plant}
\label{subsec:dynamic_plant_model}
The physical plant in Fig. \ref{fig:architecture} typically {involves} many spatially distributed state variables that capture their temporal evolution. For example, in a chemical process, the state may include temperature, humidity, and concentration at different locations; in an aircraft, it may include atmospheric pressure, airspeed, and engine thrust. Their dynamics are commonly modeled by first-order coupled linear difference equations of the form
\begin{align}\label{eq: decision-making model}
\mathbf{x}_{k+1} = \mathbf{A}\mathbf{x}_k + \mathbf{B}\widehat{\mathbf{u}}_k + \mathbf{w}_k,\quad  k=0,1,2,\dots,
\end{align}
where $\mathbf{x}_k \in \mathbb{C}^{S \times 1}$ denotes the plant state at time slot $k$, with initial condition $\mathbf{x}_0 \sim \mathcal{CN}(\mathbf{0}_{S \times 1}, \sigma_x^2 \mathbf{I}_S)$. The vector $\widehat{\mathbf{u}}_k \in \mathbb{C}^{N\times 1}$ denotes the received (noise-contaminated) control command. Moreover, $\mathbf{A}\in\mathbb{R}^{S\times S}$ and $\mathbf{B}\in\mathbb{R}^{S\times N}$  are the internal plant dynamics and actuation (input) matrices, respectively, and the pair $(\mathbf{A},\mathbf{B})$ is assumed controllable.
The process noise $\mathbf{w}_k\sim \mathcal{CN}(\mathbf{0}_{S \times 1}, \mathbf{W})$ is additive with finite covariance $\mathbf{W} \in \mathbb{S}^{S}_+$.

\subsection{Signal Generation and Transmission at the Remote Controller}
Following Fig.~\ref{fig:architecture}, the remote controller monitors the plant state history $\mathbf{x}_0^k = \left\{\mathbf{x}_0, \ldots, \mathbf{x}_k\right\}$ using a depth-of-field (DoF) camera, and simultaneously predicts the wireless channel based on the observed states $\mathbf{x}_0^k$ and the control command history $\mathbf{u}_{0}^{k-1} = \left\{\mathbf{u}_0, \ldots, \mathbf{u}_{k-1}\right\}$. The predicted fading matrix for time slot $k+1$, given information up to slot $k$, is denoted by
\begin{align}\label{eq:estimation_generation}
     \widehat{\mathbf{H}}(k+1|k)= f_{\mathrm{pred}}(\mathbf{x}_0^k, \mathbf{u}_0^{k-1}),
\end{align}
where $f_{\mathrm{pred}}(\cdot)$ is the channel prediction operator, whose form depends on the adopted prediction strategy and will be specified in Sections~\ref{sec:linear_ofdm} and~\ref{sec:nonlinear_general}. 
The prediction $\widehat{\mathbf{H}}(k+1|k) \in \mathbb{C}^{N \times N}$ is modeled as a diagonal matrix constructed from the predicted complex fading vector
$\widehat{\mathbf{h}}(k+1|k) = ([\widehat{\mathbf{h}}(k+1|k)]_1, \ldots, [\widehat{\mathbf{h}}(k+1|k)]_N)^T \in \mathbb{C}^{N}$, where $[\widehat{\mathbf{h}}(k+1|k)]_i \in \mathbb{C}$ denotes the predicted fading gain on subcarrier $i$. Accordingly, $\widehat{\mathbf{H}}(k+1|k) = \Diag(\widehat{\mathbf{h}}(k+1|k))$.

The control command $\mathbf{u}_k$ is generated based on the plant state {history  $\mathbf{x}_0^k$} and the predicted channel gain $\widehat{\mathbf{H}}(k+1|k)$, as follows\footnote{As will be shown in Section~\ref{sec:linear_ofdm}, the information structure of $\mathbf{u}_k$ in~\eqref{eq:control_generation} is derived from the solution to Problem~\ref{problem:decision_making} and depends only on the current plant state $\mathbf{x}_k$, rather than on the entire plant-state history $\mathbf{x}_0^{k}$.} 
\begin{align}\label{eq:control_generation}
\mathbf{u}_k = f_{\mathrm{cont}}({\mathbf{x}_0^k},    \widehat{\mathbf{H}}(k+1|k)),
\end{align}
where $f_{\mathrm{cont}}(\cdot)$ denotes the control policy, which will be specified in Section~\ref{sec:linear_ofdm} and Section~\ref{sec:nonlinear_general}.   
By incorporating $\widehat{\mathbf{H}}(k+1|k)$ into the control policy design $f_{\mathrm{cont}}(\cdot)$, the resulting control command $\mathbf{u}_k \in \mathbb{C}^{N \times 1}$ adapts to wireless fading conditions, thereby improving the reliability of closed-loop regulation. Although $\widehat{\mathbf{H}}(k+1|k)$ is obtained from the history $\{\mathbf{x}_0^{k}, \mathbf{u}_0^{k-1}\}$ through the channel prediction module, we explicitly treat $\mathbf{x}_k$ and $\widehat{\mathbf{H}}(k+1|k)$ as distinct inputs to $f_{\mathrm{cont}}(\cdot)$. This choice is consistent with the modular architecture in Fig.~\ref{fig:architecture}.

The control command $\mathbf{u}_k$ is then mapped onto predetermined subcarriers in the frequency domain, yielding the frequency-domain transmit vector $\mathbf{s}_k^{f} \in \mathbb{C}^{N \times 1}$. We adopt a fixed, disjoint subcarrier assignment in which each control component $[\mathbf{u}_k]_i$ is mapped to exactly one unique subcarrier (i.e., without overlap). The resulting subcarrier allocation is described by a binary matrix $\mathbf{P} \in \{0,1\}^{N \times N}$, which satisfies the following constraints:

\begin{itemize}
    \item \textbf{Bandwidth utilization constraint:} $\|\mathbf{P}\|_0 = N$. This enforces full utilization of the available subcarriers.
    \item \textbf{Disjoint mapping constraint:} $[\mathbf{P}]_{:,i}^T [\mathbf{P}]_{:,j} = \mathbf{0}_{N\times N}$ for all $i \neq j$, where $i,j \in \{1, \dots, N\}$. This guarantees that distinct control components are assigned to nonoverlapping subcarriers.
    \item \textbf{Bijective mapping constraint:} $\|[\mathbf{P}]_{:,i}\|_0 = 1$ and $\|[\mathbf{P}]_{j,:}\|_0 = 1$ for all $i,j \in \{1, \dots, N\}$. Hence, each control component $[\mathbf{u}_k]_i$ is mapped to exactly one subcarrier, and each subcarrier is used by exactly one control component.
\end{itemize}

The frequency-domain symbol vector $\mathbf{s}_k^{f} \in \mathbb{C}^{N \times 1}$ 
is converted into a time-domain block $\tilde{\mathbf{s}}_k^{t} \in \mathbb{C}^{N \times 1}$ via the inverse fast Fourier transform (IFFT), i.e.,
\begin{align}
    \tilde{\mathbf{s}}_k^{t} = \mathbf{F}^H \mathbf{s}_k^f,
\end{align}
where $\mathbf{F}\in\mathbb{C}^{N\times N}$ denotes the unitary discrete Fourier transform (DFT) matrix. 
To mitigate intersymbol interference induced by multipath propagation, 
a cyclic prefix (CP) of length $L_{\mathrm{cp}}$ is appended by copying the last 
$L_{\mathrm{cp}}$ samples of $\tilde{\mathbf{s}}_k^{t}$ to the beginning of the block, yielding
\begin{align}
    \mathbf{s}_k^{t} 
    = \big[ \tilde{s}_{k,N-L_{\mathrm{cp}}+1}^t,\ldots,\tilde{s}_{k,N}^t, 
              \tilde{s}_{k,1}^t,\ldots,\tilde{s}_{k,N}^t \big]^T,
\end{align}
where $\mathbf{s}_k^{t} \in \mathbb{C}^{(N+L_{\mathrm{cp}})\times 1}$ 
denotes the CP-extended transmit vector. After CP insertion, $\mathbf{s}_k^{t}$ is converted from parallel to serial and undergoes radio-frequency (RF) processing prior to transmission over the wireless channel to the plant.

\subsection{Signal Reception at the Dynamic Plant}
\label{subsec:signal_reception_model}
Following Fig.~\ref{fig:architecture}, the received serial time-domain waveform is first converted into a parallel block via serial-to-parallel (S/P) conversion. The resulting received vector is
\begin{align}
    \widehat{\mathbf{s}}_k^{t} = \mathbf{H}_{k+1}^{t} \mathbf{s}_k^t + \mathbf{n}_k^t,
\end{align}
where $\mathbf{H}_{k+1}^{t} \in \mathbb{C}^{(N+L_{\mathrm{cp}})\times (N+L_{\mathrm{cp}})}$ is the Toeplitz convolution matrix induced by the channel impulse response $\mathbf{h}_{k+1}^t=[h_{0,k+1}^t,\ldots,h_{L_h-1,k+1}^t]^T\in\mathbb{C}^{L_h\times 1}$. Here, $L_{h}$ denotes the effective channel length (i.e., the number of significant multipath taps), and $\mathbf{n}_k^t \sim \mathcal{CN}(\mathbf{0}_{N+L_{\mathrm{cp}}},\sigma_n^2 \mathbf{I}_{N+L_{\mathrm{cp}}})$ is the AWGN in the time domain.

After discarding the first $L_{\mathrm{cp}}$ samples corresponding to the cyclic prefix, 
the effective $N$-sample received vector is
\begin{align}
    \widehat{\mathbf{s}}_k^{r} 
    &= \mathbf{R}\,\widehat{\mathbf{s}}_k^{t}
=\mathbf{H}_{k+1}^{\mathrm{circ}} \tilde{\mathbf{s}}_k^t + \tilde{\mathbf{n}}_k^t,
\end{align}
where $\mathbf{R}\in\{0,1\}^{N\times(N+L_{\mathrm{cp}})}$ is a selection matrix that 
removes the CP by extracting the last $N$ samples of the received block. The matrix $\mathbf{H}_{k+1}^{\mathrm{circ}} \in \mathbb{C}^{N\times N}$ denotes the resulting circulant convolution matrix induced by $\mathbf{h}_{k+1}^t$ (i.e., $\mathbf{H}_{k+1}^{\mathrm{circ}}=\mathbf{R}\mathbf{H}_{k+1}^{t}\mathbf{R}^{T}$ under the CP-length condition), and $\tilde{\mathbf{n}}_k^t \sim \mathcal{CN}(\mathbf{0}_{N\times 1}, \sigma_n^2 \mathbf{I}_N)$ is the AWGN vector after CP removal.  

Applying the Fast Fourier Transform (FFT) to $\widehat{\mathbf{s}}_k^{r}$ gives\footnote{We assume that the plant state $\mathbf{x}_k \in \mathbb{C}^{S \times 1}$, the control signal $\mathbf{u}_k \in \mathbb{C}^{N \times 1}$, the received control input $\widehat{\mathbf{u}}_k \in \mathbb{C}^{N \times 1}$, and the channel fading matrices $\mathbf{H}_k, \mathbf{H}_k^c \in \mathbb{C}^{N \times N}$ are all complex-valued, in order to capture the complex baseband representation of wireless transmissions (see, e.g., \cite{Huang2022JointPilot,Yuan2021DataAidedCE,Shi2021DeterministicPilot,AbedMeraim2021MisspecifiedCRB}). This modeling captures both amplitude attenuation and phase rotation induced by multipath fading. Unlike much of the existing WNCS literature (see, e.g.,~\cite{xie2009stability,xu2019distributed,tang2022online}), which assumes real-valued signals throughout, our formulation provides a more faithful representation of practical wireless environments.
}
\begin{align}
    \widehat{\mathbf{s}}_k^{f} 
    &= \mathbf{F}\,\widehat{\mathbf{s}}_k^{r} =\mathbf{H}_{k+1}\mathbf{s}_k^f+\mathbf{n}_k,
\end{align}
where $\mathbf{F}\in\mathbb{C}^{N\times N}$ is the unitary DFT matrix (implemented via an FFT) and $\mathbf{n}_k\sim\mathcal{CN}(\mathbf{0}_{N\times 1},\sigma_n^2\mathbf{I}_N)$ is the frequency-domain AWGN.
The frequency-domain channel matrix is $\mathbf{H}_{k+1}=\mathbf{F}\mathbf{H}_{k+1}^{\mathrm{circ}}\mathbf{F}^H=\mathrm{Diag}(h_{1,k+1},\ldots,h_{N,k+1})\in\mathbb{C}^{N\times N}$, which is diagonal since $\mathbf{H}_{k+1}^{\mathrm{circ}}$ is circulant.\footnote{In Section~\ref{sec:nonlinear_general}, we extend the framework to general communication architectures in which $\mathbf{H}_{k+1}$ is not necessarily diagonal, and the control components $[\mathbf{u}_k]_i$ and $[\mathbf{u}_k]_j$, for $i \neq j$ and $i,j \in \{1, \dots, N\}$, may become correlated cross subcarriers during transmission.} Each subcarrier gain $h_{i,k+1} \in \mathbb{C}$ evolves according to the Gauss-Markov process~\cite{lu2019robust}
\begin{align}\label{eq: channel dynamics}
    h_{i,k+1} = \alpha h_{i,k} + \sqrt{1-\alpha^2} v_{i,k},
\end{align}
where $\alpha\in[-1,1]$ is the temporal correlation coefficient and $v_{i,k}\sim\mathcal{CN}(0,\sigma_v^2)$ is i.i.d.\ circularly symmetric complex Gaussian noise. The initial condition is $h_{i,0}\sim\mathcal{CN}(0,1)$.

Beyond the frequency-domain structure, the temporal evolution of each subcarrier is modeled by the Gauss-Markov process in~\eqref{eq: channel dynamics}, where the correlation coefficient $\alpha$ satisfies  $\alpha\in[-1,1]$. This condition is physically reasonable: values $|\alpha| > 1$ would lead to an unstable channel recursion with unbounded average channel power, implying unrealistically large received power (and hence SNR) over time. As a representative special case for isotropic scattering with a moving terminal, Clarke’s model gives $\alpha = J_0(2\pi f_D T_s)$, where $J_0(\cdot)$ is the Bessel function of the first kind of order zero, $f_D$ is the maximum Doppler frequency, and $T_s$ is the sampling interval. Since $J_0(z)$ is bounded for all real $z$ (with range approximately $[-0.4028,\,1]$), the condition  $\alpha\in[-1,1]$ is always satisfied in practice.

The plant then recovers the control command $\widehat{\mathbf{u}}_k \in \mathbb{C}^{N \times 1}$ from the received frequency-domain signal $\widehat{\mathbf{s}}_k^f$ as
\begin{align}\label{eq:control receving model}
\widehat{\mathbf{u}}_k = \mathbf{P}^T \widehat{\mathbf{s}}_k^{f}=\mathbf{H}_{k+1}^c\mathbf{u}_k+\mathbf{n}_k^c,
\end{align}
where $\mathbf{H}_{k+1}^c=\mathbf{P}^T\mathbf{H}_{k+1}$ and $\mathbf{n}_k^c=\mathbf{P}^T\mathbf{n}_k$.
{The recovered control command $\widehat{\mathbf{u}}_k$ is subsequently applied to the plant dynamics given in~\eqref{eq: decision-making model}, as discussed in Section~\ref{subsec:dynamic_plant_model}.}

\subsection{Pilot-Free Communication Paradigm for the WNCS}
The models in Sections~\ref{subsec:dynamic_plant_model} - \ref{subsec:signal_reception_model} highlight a pilot-free communication paradigm for WNCSs, in which the control signal $\mathbf{u}_k$ plays a dual role: it drives the plant dynamics and simultaneously provides information for implicit prediction of the channel $\mathbf{H}_{k+1}$ through the relationship among $\mathbf{x}_k$, $\mathbf{u}_k$, and its noise-corrupted observation $\widehat{\mathbf{u}}_k$. Specifically, the remote controller generates $\mathbf{u}_k$ from the current plant state $\mathbf{x}_k$ and the predicted channel $\widehat{\mathbf{H}}(k+1|k)$ via~\eqref{eq:control_generation}, while the prediction of ${\mathbf{H}}_{k+1}$ is performed using the history of control signals $\mathbf{u}_0^{k-1}$ and state trajectories $\mathbf{x}_0^k$ via~\eqref{eq:estimation_generation}. 
In contrast to conventional WNCS designs (see, e.g.,~\cite{tang2022online,shen2020averaging,su2019stabilization}, which assume perfect CSI or rely on dedicated pilot transmission to support controller design), the proposed pilot-free approach reuses the control signals for channel prediction, thereby reducing signaling overhead and improving spectral and energy efficiency.

In the following sections, we detail the design of the control signal $\mathbf{u}_k$ in~\eqref{eq:control_generation} and the channel predictor $\widehat{\mathbf{H}}(k+1|k)$ in~\eqref{eq:estimation_generation} under the proposed \textit{pilot-free} framework.

\section{Linear Systems over OFDM}
\label{sec:linear_ofdm}
In this section, we develop methods to design the channel prediction operator $f_{\mathrm{pred}}(\mathbf{x}_0^k, \mathbf{u}_0^{k-1})$ in ~\eqref{eq:estimation_generation} and the control policy $f_{\mathrm{cont}}(\mathbf{x}_k, \widehat{\mathbf{H}}(k+1|k))$ in~\eqref{eq:control_generation} for a linear plant operating over an OFDM communication architecture.

\subsection{Problem Formulation}
We seek to jointly improve channel-prediction accuracy for the time-varying channel gains $\left\{\mathbf{H}_1^c,\mathbf{H}_2^c, \ldots \right\}$ and the effectiveness of control inputs $\left\{\mathbf{u}_0,\mathbf{u}_1, \ldots \right\}$ so as to enhance closed-loop plant stability.

Specifically, the channel prediction policy $\pi^p = \{\widehat{\mathbf{H}}(1|0), \widehat{\mathbf{H}}(2|1), \ldots\}$ aims to accurately predict the channel sequence $\left\{\mathbf{H}_1^c,\mathbf{H}_2^c, \ldots \right\}$  appearing in~\eqref{eq:control receving model}. This objective is formalized by the following optimization problem.

\newtheorem{problem}{Problem}
\begin{problem}
[Channel Prediction]\label{problem: channel_estimation}
\begin{align}
&\min_{\substack{\pi^p\\ \eqref{eq: decision-making model}\text{--}\eqref{eq:control receving model}}} \limsup_{K \to \infty} \frac{1}{K} \sum_{k=0}^{K-1} \mathbb{E}\left[c_{p}(\widehat{\mathbf{H}}(k+1|k),\mathbf{H}_{k+1}^c) \mid \mathbf{x}_0^k,\mathbf{u}_0^{k-1} \right], 
 \nonumber
\end{align}
where the per-stage cost is defined as
\begin{align}
c_p(\widehat{\mathbf{H}}(k+1|k), \mathbf{H}_{k+1}^c) \triangleq \big\|\widehat{\mathbf{H}}(k+1|k)-\mathbf{H}_{k+1}^{c}\big\|_{F}^{2}.
\end{align}
\end{problem}

The control policy $\pi^c = \{\mathbf{u}_0, \mathbf{u}_1, \ldots\}$ aims to stabilize the dynamic plant over the wireless link, which leads to the following optimization problem.

\begin{problem}[Optimal Control]
\begin{align}
\min_{\substack{\pi^c\\ \eqref{eq: decision-making model}\text{--}\eqref{eq:control receving model}}} \limsup_{K \to \infty} \frac{1}{K} \sum_{k=0}^{K-1} \mathbb{E}\left[ c_d(\mathbf{x}_k, \mathbf{u}_k) \right], 
\end{align}
with per-stage cost
\begin{align}
c_d(\mathbf{x}_k, \mathbf{u}_k) \triangleq \mathbf{x}_k^H \mathbf{Q} \mathbf{x}_k + \mathbf{u}_k^H \mathbf{R} \mathbf{u}_k,
\end{align}
where $\mathbf{Q} \in\mathbb{S}_{++}^{S}$ and $\mathbf{R} \in \mathbb{S}_{++}^{N}$ are positive definite weighting matrices that penalize the state deviation and control effort, respectively.
\label{problem:decision_making}
\end{problem}

\newtheorem{remark}{Remark}
\begin{remark}[Coupling Between Problems~\ref{problem: channel_estimation} and~\ref{problem:decision_making}]
Although Problems~\ref{problem: channel_estimation} and~\ref{problem:decision_making} are stated separately, they are inherently coupled in the proposed pilot-free framework. In particular, as will be shown in Section~\ref{subsec:solution to Problem 2}, the solution to Problem~\ref{problem:decision_making} depends on the predicted channel $\widehat{\mathbf{H}}(k+1|k)$, which is produced by solving Problem~\ref{problem: channel_estimation}. 
\end{remark}


\subsection{Solution to Problem \ref{problem: channel_estimation}}
\label{subsec:channel_estimation_linear_algorithm}
\noindent\textbf{Optimal solution and algorithm implementation.} The optimal solution to Problem  \ref{problem: channel_estimation} is given by the minimum mean-square error (MMSE) predictor of ${\mathbf{H}}_{k+1}^c$ at each time slot, i.e., $\widehat{\mathbf{H}}(k+1|k) 
    = \arg\min_{\widehat{\mathbf{H}}(k+1|k)} 
    \mathbb{E}[ \| \widehat{\mathbf{H}}(k+1|k) - \mathbf{H}_{k+1}^c \|^2_F 
    \,|\, \mathbf{x}_0^k,\, \mathbf{u}_0^{k-1} ].$

To derive an explicit MMSE predictor, we combine the plant dynamics in~\eqref{eq: decision-making model} with the control-reception model in~\eqref{eq:control receving model}, which yields the following channel-observation relation:
\begin{align}\label{eq: channel observation model}
\mathbf{x}_{k}=\mathbf{A}\mathbf{x}_{k-1}+\mathbf{B}\mathbf{U}_{k-1}\mathbf{h}_{k}^c+\mathbf{B}\mathbf{n}_{k-1}^c+\mathbf{w}_{k-1},
\end{align}
where $\mathbf{U}_{k} = \Diag\left(\left[\mathbf{u}_{k}\right]_1, \dots, \left[\mathbf{u}_{k}\right]_{N}\right) \in \mathbb{C}^{N\times N}$ denotes the diagonal matrix formed from the control input, and $\mathbf{h}_{k}^c = \left[[\mathbf{H}_k^c]_{1,1}, [\mathbf{H}_k^c]_{2,2}, \dots, [\mathbf{H}_k^c]_{N,N}\right]^H \in \mathbb{C}^{N\times 1}$ collects the (diagonal) subcarrier gains.

Moreover, the channel evolution in~\eqref{eq: channel dynamics} can be written compactly in vector form as
\begin{align}\label{eq: channel dynamic model}
\mathbf{h}_{k+1}^c=\alpha\mathbf{h}_{k}^c+\mathbf{v}_{k},
\end{align}
where $\mathbf{v}_k\in\mathbb{C}^{N\times 1}$ collects the innovation terms, with entries
$[\mathbf{v}_k]_i=\sqrt{1-\alpha^2}\,v_{i,k}$ for $i\in\{1,\ldots,N\}$.

Note that~\eqref{eq: channel dynamic model} specifies a linear state-evolution model for $\mathbf{h}_k^{c}$, while~\eqref{eq: channel observation model} provides a linear observation model for $\mathbf{h}_k^{c}$. Consequently, Problem~\ref{problem: channel_estimation} reduces to an MMSE prediction problem for the vectorized channel gain $\mathbf{h}_k^c$, which can be solved optimally via a KF as follows \\
\noindent  {Define:}
\begin{align}
   &\!\!\!\!\widehat{\mathbf{h}}(k|k) \triangleq 
\mathbb{E}[\mathbf{h}_k^c \,|\, 
    \mathbf{x}_0^{k},\, \mathbf{U}_0^{k-1} ], \\&\!\!\!\!
    \widehat{\mathbf{h}}(k|k-1) \triangleq 
\mathbb{E}[\mathbf{h}_{k}^c \,|\, 
    \mathbf{x}_0^{k-1},\, \mathbf{U}_0^{k-2} ],
\\&
    \!\!\!\!\Sigma(k|k) \triangleq 
\mathbb{E}[(\mathbf{h}_k^c - \widehat{\mathbf{h}}(k|k))
    (\mathbf{h}_k^c - \widehat{\mathbf{h}}(k|k))^H \,| \mathbf{x}_0^{k},\, \mathbf{U}_0^{k-1} ],\\&\!\!\!\!
    \Sigma(k|k-1)\triangleq 
\mathbb{E}[(\mathbf{h}_{k}^c - \widehat{\mathbf{h}}(k|k-1))
    (\mathbf{h}_{k}^c - \widehat{\mathbf{h}}(k|k-1))^H \,| \nonumber\\
    &\!\!\!\!\mathbf{x}_0^{k-1},\, \mathbf{U}_0^{k-2} ], 
\end{align}
where $\widehat{\mathbf{h}}(k|k), \widehat{\mathbf{h}}(k|k-1) \in \mathbb{C}^{N \times 1}$ denote the posterior estimate and prior prediction of the vectorized channel gain $\mathbf{h}_k^c$, respectively, and $\Sigma(k|k), \Sigma(k|k-1) \in \mathbb{S}_+^{N}$ are the corresponding posterior and prior error covariance matrices.

Assume that the initial channel prediction satisfies 
$\widehat{\mathbf{h}}(1|0) \sim \mathcal{CN}(\mathbf{0}_{N \times 1}, \mathbf{I}_N)$. Then, the optimal solution to 
Problem~\ref{problem: channel_estimation} is the MMSE predictor
\begin{align}
   \!\!\!\!\widehat{\mathbf{H}}(k+1|k) 
   &= \arg\min_{\widehat{\mathbf{H}}} 
    \mathbb{E}[ 
        \| \widehat{\mathbf{H}}- {\mathbf{H}}_{k+1}^c \|^2_F 
        \,|\, \mathbf{x}_0^k,\, \mathbf{u}_0^{k-1} 
    ] \nonumber \\
    &= \Diag( 
        [\widehat{\mathbf{h}}(k+1|k)]_1, \dots, [\widehat{\mathbf{h}}(k+1|k)]_N 
    ),
\end{align}
where $\widehat{\mathbf{h}}(k+1|k)$ is the one-step {lookahead}  Kalman prediction produced by Algorithm~\ref{algorithm:channel_estimation_linear}.

\begin{remark}[Equivalence of MMSE Prediction for $\mathbf{H}_k^c$ and $\mathbf{h}_k^c$]
There is a one-to-one correspondence between the diagonal matrix $\mathbf{H}_k^c$ and the vector $\mathbf{h}_k^c$. Specifically, $\mathbf{h}_k^{c}$ collects the diagonal entries of $\mathbf{H}_k^{c}$, and $\mathbf{H}_k^{c}=\Diag(\mathbf{h}_k^{c})$. Therefore, MMSE prediction of $\mathbf{H}_k^{c}$ is equivalent to MMSE prediction of $\mathbf{h}_k^{c}$.
\end{remark}



To implement Algorithm~\ref{algorithm:channel_estimation_linear}, we follow the standard KF recursion. Specifically, at each time slot $k$, the algorithm consists of two steps:
\begin{itemize}
    \item \textbf{Estimation step:} Update the posterior channel estimate $\widehat{\mathbf{h}}(k|k)$ and the associated error covariance $\Sigma(k|k)$ using the observed state transition $\left\{\mathbf{x}_k,\mathbf{x}_{k-1}\right\}$ and the applied control input $\mathbf{u}_{k-1}$.  
    \item \textbf{Prediction step:} Propagate the posterior estimate using the channel dynamics in~\eqref{eq: channel dynamic model} to obtain the one-step prediction $\widehat{\mathbf{h}}(k+1|k)$ and its error covariance $\Sigma(k+1|k)$.  
\end{itemize}

As a result, Algorithm~\ref{algorithm:channel_estimation_linear} provides both the one-step {lookahead} channel prediction $\widehat{\mathbf{H}}(k+1|k)$ and the filtered channel estimate $\widehat{\mathbf{H}}(k|k)$ at each time slot.

\begin{algorithm}[t]\small
\caption{Channel Prediction in Linear OFDM Systems Under the Pilot-Free Framework}
\label{alg:kalman}
\begin{algorithmic}[1]
\State \textbf{Initialization:}
\State \hspace{1em} Set initial prior error covariance $\Sigma(1|0) =  \mathbf{I}_{N}$
\State \hspace{1em} Set $\widehat{\mathbf{h}}(1|0) \sim \mathcal{CN}(\mathbf{0}_{N\times 1}, \Sigma(1|0))$

\For{$k = 1, 2, \dots$}
    \State Construct $\mathbf{U}_{k-1} = \Diag([\mathbf{u}_{k-1}]_1, \dots, [\mathbf{u}_{k-1}]_{N})$, 
    
    \State where $\mathbf{u}_{k-1}$ is provided by Algorithm~\ref{algorithm:decision_making_linear}

    \State \textbf{$\bullet$ Estimation Step (for $\mathbf{h}_{k}^c$):}
       \State Compute the Kalman gain
    \State \quad $\mathbf{K}_{k} = \Sigma(k|k-1) \mathbf{U}_{k-1}^H \mathbf{B}^T   \left( \mathbf{B} \mathbf{U}_{k-1}  \Sigma(k|k-1)\right.$ \State \quad$ \left. \mathbf{U}_{k-1}^H \mathbf{B}^T   + \sigma_n^2\mathbf{B}\mathbf{B}^T+\mathbf{W} \right)^{-1}$,

    \State Update the posterior estimate
    \State \quad $\widehat{\mathbf{h}}(k|k) = \widehat{\mathbf{h}}(k|k-1) + \mathbf{K}_{k}  \left( \mathbf{x}_k - \mathbf{A} \mathbf{x}_{k-1} - \mathbf{B} \mathbf{U}_{k-1} \right.$ \State \quad$\left.\times \widehat{\mathbf{h}}(k|
    k-1) \right)$,

 \State Update the posterior error covariance
    \State \quad $\Sigma(k|k) = (\mathbf{I}_N - \mathbf{K}_{k} \mathbf{B} \mathbf{U}_{k-1}) \Sigma(k|k-1)$\State\quad$\times  (\mathbf{I}_N - \mathbf{K}_{k} \mathbf{B}\mathbf{U}_{k-1})^H +  \mathbf{K}_{k}(\sigma_n^2\mathbf{B}\mathbf{B}^T+\mathbf{W}) \mathbf{K}_{k}^H$.

    \State \textbf{$\bullet$ Prediction Step (for $\mathbf{h}_{k+1}^c$):}
    \State \quad $\widehat{\mathbf{h}}(k+1|k) = \alpha\widehat{\mathbf{h}}(k|k)$,
    \State \quad $\Sigma(k+1|k) =\alpha^2 \Sigma(k|k)  + \mathbf{V}$,
    \State \quad where $\mathbf{V}=\sigma_\upsilon^2\Diag(1-\alpha^2,...,1-\alpha^2)\in\mathbb{C}^{N\times N}$.
\EndFor
\end{algorithmic}
\label{algorithm:channel_estimation_linear}
\end{algorithm}

\noindent\textbf{Channel prediction stability.}
We next analyze the performance of Algorithm~\ref{algorithm:channel_estimation_linear}, with a particular focus on its long-term channel prediction stability, as characterized in the following theorem.


\begin{theorem}[Stability of Channel Prediction]\label{thm:kalman_bound}
Under the proposed pilot-free control framework, suppose the channel state evolves according to the linear dynamics~\eqref{eq: channel dynamic model} and is related to the plant evolution through the linear observation model~\eqref{eq: channel observation model}. Then, the {one-step lookahead} channel prediction error covariance $\Sigma(k+1|k)$ generated by Algorithm \ref{alg:kalman} is mean-square stable, in the sense that
\begin{align}\label{eq:channel_estimation_bound_linear}
    \limsup_{K \rightarrow \infty}\frac{1}{K}\sum_{k=0}^{K-1} &\mathbb{E}[\Tr(\Sigma(k+1|k))] \nonumber\\
    &\leq  \frac{\sigma_\upsilon^2N}{1-\alpha^2}\mathbf{1}_{|\alpha|<1}+N\mathbf{1}_{|\alpha|=1}<\infty.
\end{align}
\end{theorem}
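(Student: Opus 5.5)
The plan is to prove a deterministic, pathwise bound on $\Tr(\Sigma(k+1|k))$ that holds for every realization of the control sequence $\mathbf{U}_0^{k-1}$. The expectation and the Cesàro limit then follow trivially. The argument has two ingredients: a monotonicity property of the measurement update (the posterior covariance never exceeds the prior covariance in the PSD order), followed by a scalar linear recursion on the trace driven by the prediction step of Algorithm~\ref{alg:kalman}.

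\textbf{Step 1 (measurement update does not increase covariance).} Write $\mathbf{C}_{k}\triangleq\mathbf{B}\mathbf{U}_{k-1}$ and $\mathbf{N}\triangleq\sigma_n^2\mathbf{B}\mathbf{B}^T+\mathbf{W}$. For an arbitrary gain $\mathbf{G}$, define the Joseph form
\begin{align}
J(\mathbf{G})\triangleq(\mathbf{I}_N-\mathbf{G}\mathbf{C}_k)\Sigma(k|k-1)(\mathbf{I}_N-\mathbf{G}\mathbf{C}_k)^H+\mathbf{G}\mathbf{N}\mathbf{G}^H .
\end{align}
Completing the square with $\mathbf{S}_k\triangleq\mathbf{C}_k\Sigma(k|k-1)\mathbf{C}_k^H+\mathbf{N}$ gives
\begin{align}
J(\mathbf{G})=J(\mathbf{K}_k)+(\mathbf{G}-\mathbf{K}_k)\mathbf{S}_k(\mathbf{G}-\mathbf{K}_k)^H ,
\end{align}
where $\mathbf{K}_k=\Sigma(k|k-1)\mathbf{C}_k^H\mathbf{S}_k^{-1}$ is exactly the gain used in the algorithm. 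Hence $J(\mathbf{K}_k)\preceq J(\mathbf{G})$ for every $\mathbf{G}$. Taking $\mathbf{G}=\mathbf{0}$ yields
\begin{align}
\Sigma(k|k)=J(\mathbf{K}_k)\preceq J(\mathbf{0})=\Sigma(k|k-1).
\end{align}
This requires $\mathbf{S}_k$ to be invertible, which the algorithm already presupposes when it forms $\mathbf{K}_k$. If $\mathbf{W}\succ\mathbf{0}$, invertibility holds automatically, since $\mathbf{S}_k\succeq\mathbf{W}$. Importantly, the inequality holds regardless of the values of the control inputs, including $\mathbf{u}_{k-1}=\mathbf{0}$, for which $J(\mathbf{K}_k)=\Sigma(k|k-1)$.

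\textbf{Step 2 (trace recursion).} Combining Step 1 with the prediction step gives
\begin{align}
\Sigma(k+1|k)=\alpha^2\Sigma(k|k)+\mathbf{V}\preceq\alpha^2\Sigma(k|k-1)+\mathbf{V}.
\end{align}
Let $t_k\triangleq\Tr(\Sigma(k|k-1))$. Then
\begin{align}
t_{k+1}\le\alpha^2 t_k+\sigma_\upsilon^2(1-\alpha^2)N,\qquad t_1=N .
\end{align}
We now split into two cases.
\begin{itemize}
\item If $|\alpha|<1$, unrolling the recursion gives
\begin{align}
t_{k+1}\le\alpha^{2k}N+\sigma_\upsilon^2 N(1-\alpha^{2k})\le\alpha^{2k}N+\sigma_\upsilon^2N .
\end{align}
\item If $|\alpha|=1$, then $\mathbf{V}=\mathbf{0}$, so $t_{k+1}\le t_k\le\dots\le t_1=N$.
\end{itemize}

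\textbf{Step 3 (averaging).} Because the bounds in Step 2 are deterministic, they also bound $\mathbb{E}[t_{k+1}]$. In the case $|\alpha|<1$, the transient term contributes
\begin{align}
\frac{1}{K}\sum_{k=0}^{K-1}\alpha^{2k}N\le\frac{N}{K(1-\alpha^2)}\to0 .
\end{align}
Therefore the $\limsup$ is at most $\sigma_\upsilon^2N\le\sigma_\upsilon^2N/(1-\alpha^2)$. In the case $|\alpha|=1$, the average is at most $N$. This establishes \eqref{eq:channel_estimation_bound_linear}.

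The only potential obstacle is Step 1. Because the "observation matrix" $\mathbf{B}\mathbf{U}_{k-1}$ is random and control-dependent, one cannot appeal to steady-state Riccati or observability arguments. The completing-the-square (Joseph-form optimality) argument sidesteps this entirely, since it holds for each fixed realization of $\mathbf{U}_{k-1}$. The one point to check there is the invertibility of $\mathbf{S}_k$; a pseudo-inverse version of the same identity would handle a singular $\mathbf{S}_k$ if needed.
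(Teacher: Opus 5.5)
Your proof is correct and uses the same idea as the paper's: the measurement update can only shrink the covariance, $\Sigma(k|k)\preceq\Sigma(k|k-1)$, so $\Sigma(k+1|k)$ is dominated by the pure-prediction (no-observation) recursion. Your stepwise trace induction is cleaner than the paper's single-observation comparison, and your computation shows correctly that with the $\mathbf{V}$ of Algorithm~1 the no-observation limit is $\sigma_\upsilon^2\mathbf{I}_N$, not $\frac{\sigma_\upsilon^2}{1-\alpha^2}\mathbf{I}_N$ as the paper asserts, so the stated bound holds with slack.
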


\begin{IEEEproof}
The channel recursion in~\eqref{eq: channel dynamic model} is time-invariant. For $|\alpha|<1$, it admits a stationary distribution with $\lim_{k\rightarrow\infty}\mathbf{h}_k^c \sim \mathcal{CN}(0, \frac{\sigma_\upsilon^2}{1 - \alpha^2} \mathbf{I}_{N})$, whereas for $|\alpha|=1$ the channel power does not contract, and we have $\mathbf{h}_k^c=\mathbf{h}_0^{c}\sim\mathcal{CN}(\mathbf{0},\mathbf{I}_N), \forall k=0,1,...$
Consider first the degenerate case with no observations, i.e., $\mathbf{u}_k = \mathbf{0}_{N \times 1}$ for all $k$, so that $\mathbf{B}\mathbf{U}_{k}=\mathbf{0}$ and the Kalman recursion reduces to pure prediction. Then the prediction error covariance satisfies $\limsup_{k \rightarrow \infty} \Sigma(k+1|k) = \frac{\sigma_\upsilon^2}{1 - \alpha^2} \mathbf{I}_{N}\mathbf{1}_{|\alpha|<1}+\mathbf{I}_{N} \mathbf{1}_{|\alpha|=1}$ and therefore $\limsup_{k \rightarrow \infty} \Tr(\Sigma(k+1|k)) = \frac{N \sigma_\upsilon^2}{1 - \alpha^2}\mathbf{1}_{|\alpha|<1}+N\mathbf{1}_{|\alpha|=1}$. 
When $\mathbf{u}_{k} \neq \mathbf{0}_{N\times 1}$ for some time slots, the prediction performance improves relative to the no-observation case. Fix $k$ and compare two scenarios: (i) $\mathbf{u}_{k-1}=\mathbf{0}_{N\times 1}$ and (ii) $\mathbf{u}_{k-1}\neq\mathbf{0}_{N\times 1}$, while setting $\mathbf{u}_{\bar{k}}=\mathbf{0}_{N\times 1}$ for all $\bar{k}\ge k$. 
In scenario (i), no measurement update occurs and thus $\Sigma^{u}(k|k)=\Sigma(k|k-1)$. In scenario (ii), the Kalman update yields
$\Sigma^o(k|k) = \Sigma(k|k-1) - \Sigma(k|k-1)\mathbf{U}_{k-1}^H\mathbf{B}^T(\mathbf{B}\mathbf{U}_{k-1}\Sigma(k|k-1)\mathbf{U}_{k-1}^H\mathbf{B}^T+\sigma_n^2\mathbf{B}\mathbf{B}^T+\mathbf{W})^{-1}\mathbf{B}\mathbf{U}_{k-1}\Sigma(k|k-1) \preceq \Sigma(k|k-1)$, which satisfies $\Sigma^o(k|k) \preceq \Sigma^u(k|k)$, with strict inequality whenever the measurement provides nontrivial information. Since subsequent steps are pure predictions with the same linear dynamics, it follows that for all $\bar{k}\ge k+1$,
$\Sigma^{o}(\bar{k}|\bar{k}-1)\preceq \Sigma^{u}(\bar{k}|\bar{k}-1)$.
Therefore, the time-average trace under Algorithm~\ref{algorithm:channel_estimation_linear} is upper bounded by the no-observation case, which gives~\eqref{eq:channel_estimation_bound_linear}. This completes the proof.
\end{IEEEproof}

\subsection{Solution to Problem \ref{problem:decision_making}}
\label{subsec:solution to Problem 2}

\noindent\textbf{Optimal solution.}
Let $\widehat{\mathbf{H}}(k+1|k) = \Diag( 
        [\widehat{\mathbf{h}}(k+1|k)]_1, \dots, [\widehat{\mathbf{h}}(k+1|k)]_N 
    ) \in \mathbb{C}^{N\times N}$ denote the one-step {lookahead} channel prediction at time slot $k$. The optimal solution to Problem~\ref{problem:decision_making} can be characterized by a Markov-modulated Bellman optimality equation, as follows. 

\begin{theorem}[Markov-Modulated Bellman Equation] \label{theorem: bellman equation} 
Suppose that for each $\widehat{\mathbf{H}}(k+1|k)$ there exists a unique positive definite matrix $\mathbf{P}(\widehat{\mathbf{H}}(k+1|k))$ satisfying
    \begin{align}\label{eq:original riccati}
    &\mathbf{P}(\widehat{\mathbf{H}}(k+1|k))
    = \mathbf{Q} + \mathbf{A}^T \mathbf{P}(\alpha \widehat{\mathbf{H}}(k+1|k)) \mathbf{A}  \nonumber\\
    &- \mathbf{A}^T \mathbf{P}(\alpha \widehat{\mathbf{H}}(k+1|k)) \mathbf{B} \widehat{\mathbf{H}}(k+1|k) \mathbf{M}^{-1}
(\widehat{\mathbf{H}}(k+1|k))^H \nonumber\\&\times \mathbf{B}^T \mathbf{P}(\alpha \widehat{\mathbf{H}}(k+1|k)) \mathbf{A},
    \end{align}
where
\begin{align}\label{eq:M_matrix}
\mathbf{M} &\triangleq \widehat{\mathbf{H}}(k+1|k)^{H}\mathbf{B}^{T}\mathbf{P}\!\left(\alpha\widehat{\mathbf{H}}(k+1|k)\right)\mathbf{B}\widehat{\mathbf{H}}(k+1|k)
+ \mathbf{R} \nonumber\\
&\quad + \Tr\!\Big(\mathbf{B}^{T}\mathbf{P}\!\left(\alpha\widehat{\mathbf{H}}(k+1|k)\right)\mathbf{B}\,\Sigma(k+1|k)\Big)\mathbf{I}_{N}.
\end{align}
Then the optimal solution to Problem~\ref{problem:decision_making} is characterized by the following Markov-modulated Bellman optimality equation
\begin{align}\label{eq: Bellman equation}
&\rho(\widehat{\mathbf{H}}(k+1|k))  + V(\mathbf{x}_k, \widehat{\mathbf{H}}(k+1|k) ) 
= \min_{\mathbf{u}_k} \; \mathbb{E} [ 
    c_d(\mathbf{x}_k, \mathbf{u}_k) 
    \nonumber\\&\quad + V(\mathbf{x}_{k+1}, \widehat{\mathbf{H}}(k+2|k+1) ) 
    \,|\, \mathbf{x}_k, \widehat{\mathbf{H}}(k+1|k),  \mathbf{u}_k 
\Big],
\end{align}
where
\begin{itemize}
    \item $\rho(\widehat{\mathbf{H}}(k+1|k)) = \Tr(\sigma_n^2\mathbf{B}^T \mathbf{P}(\alpha \widehat{\mathbf{H}}(k+1|k)) \mathbf{B} + \mathbf{P}(\alpha \widehat{\mathbf{H}}(k+1|k)) \mathbf{W} )$ denotes the per-stage bias, and $\limsup_{K \to \infty} \frac{1}{K} \sum_{k=0}^{K-1} \mathbb{E}[\rho(\widehat{\mathbf{H}}(k+1|k))]$ equals the optimal average cost in Problem~\ref{problem:decision_making}.

    \item $V(\mathbf{x}_k, \widehat{\mathbf{H}}(k+1|k)) = \mathbf{x}_k^H \mathbf{P}(\widehat{\mathbf{H}}(k+1|k)) \mathbf{x}_k$ is the value function. 
\end{itemize}
Moreover, the optimal control solution $\mathbf{u}_k^*$ attaining the minimum in~\eqref{eq: Bellman equation} is
\begin{align}\label{eq:optimal_control_form}
    &\mathbf{u}_k^* 
    = -\mathbf{M}^{-1} (\widehat{\mathbf{H}}(k+1|k))^H \mathbf{B}^T\mathbf{P}(\alpha \widehat{\mathbf{H}}(k+1|k)) \mathbf{A} \mathbf{x}_k,
    \end{align}
with $\mathbf{M}$ given in~\eqref{eq:M_matrix}.
\end{theorem}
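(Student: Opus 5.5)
This is a verification argument. We guess the quadratic value function $V(\mathbf{x},\widehat{\mathbf{H}})=\mathbf{x}^H\mathbf{P}(\widehat{\mathbf{H}})\mathbf{x}$, substitute it into the right-hand side of~\eqref{eq: Bellman equation}, and carry out the conditional expectation and the minimization over $\mathbf{u}_k$ explicitly. We then check that the minimized expression reproduces the left-hand side exactly when $\mathbf{P}$ satisfies~\eqref{eq:original riccati} and $\rho$ is as stated. Finally, a standard average-cost verification (telescoping) argument shows that $\rho$ gives the optimal average cost and that $\mathbf{u}_k^*$ in~\eqref{eq:optimal_control_form} attains it.

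\textbf{Step 1 (conditional expectation).} Given $(\mathbf{x}_k,\widehat{\mathbf{H}}(k+1|k),\mathbf{u}_k)$, write $\mathbf{H}_{k+1}^c=\widehat{\mathbf{H}}(k+1|k)+\mathbf{E}_{k+1}$. Here $\mathbf{E}_{k+1}=\Diag(\mathbf{e}_{k+1})$, and by Algorithm~\ref{alg:kalman} the error $\mathbf{e}_{k+1}$ has zero conditional mean and covariance $\Sigma(k+1|k)$. Then
\[
\mathbf{x}_{k+1}=\mathbf{A}\mathbf{x}_k+\mathbf{B}\widehat{\mathbf{H}}(k+1|k)\mathbf{u}_k+\mathbf{B}\mathbf{E}_{k+1}\mathbf{u}_k+\mathbf{B}\mathbf{n}_k^c+\mathbf{w}_k .
\]
The three zero-mean terms are mutually uncorrelated and uncorrelated with the deterministic part. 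We also need the law of the next-stage argument. The next-stage parameter $\widehat{\mathbf{H}}(k+2|k+1)=\alpha\widehat{\mathbf{H}}(k+1|k)+(\text{innovation correction})$, and we replace it by its conditional mean $\alpha\widehat{\mathbf{H}}(k+1|k)$ inside $\mathbf{P}(\cdot)$. This is exactly the structure~\eqref{eq:original riccati} presupposes. Justifying it is the main obstacle: it needs either that the innovation term's effect on $\mathbf{P}$ is neglected (a certainty-equivalence-type step for the modulating state) or an argument that $\mathbf{P}$ is evaluated at the predicted mean. I would state it explicitly as the modeling step, noting that $\widehat{\mathbf{h}}(k+2|k+1)=\alpha\widehat{\mathbf{h}}(k+1|k)+\alpha\mathbf{K}_{k+1}(\cdot)$ has conditional mean $\alpha\widehat{\mathbf{h}}(k+1|k)$, and treat $\mathbf{P}(\alpha\widehat{\mathbf{H}}(k+1|k))=:\bar{\mathbf{P}}$ as the next-stage weight.

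\textbf{Step 2 (completing the square).} Write $\mathbf{D}=\Diag(\mathbf{u}_k)$, so that $\mathbf{E}_{k+1}\mathbf{u}_k=\mathbf{D}\mathbf{e}_{k+1}$. The expected cost-to-go is then
\[
\mathbf{x}_k^H(\mathbf{Q}+\mathbf{A}^T\bar{\mathbf{P}}\mathbf{A})\mathbf{x}_k+\mathbf{u}_k^H(\mathbf{R}+\widehat{\mathbf{H}}^H\mathbf{B}^T\bar{\mathbf{P}}\mathbf{B}\widehat{\mathbf{H}})\mathbf{u}_k+2\Re\{\mathbf{u}_k^H\widehat{\mathbf{H}}^H\mathbf{B}^T\bar{\mathbf{P}}\mathbf{A}\mathbf{x}_k\}
+\Tr(\mathbf{B}^T\bar{\mathbf{P}}\mathbf{B}\,\mathbf{D}\Sigma\mathbf{D}^H)+\Tr(\sigma_n^2\mathbf{B}^T\bar{\mathbf{P}}\mathbf{B}+\bar{\mathbf{P}}\mathbf{W}).
\]
The last trace is $\rho(\widehat{\mathbf{H}}(k+1|k))$. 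The term $\Tr(\mathbf{B}^T\bar{\mathbf{P}}\mathbf{B}\mathbf{D}\Sigma\mathbf{D}^H)$ is a quadratic form in $\mathbf{u}_k$; to reach the form~\eqref{eq:M_matrix} I would bound or replace it by $\Tr(\mathbf{B}^T\bar{\mathbf{P}}\mathbf{B}\Sigma)\|\mathbf{u}_k\|^2$. One route uses the entrywise structure: $\Sigma$ is diagonal when the prior is isotropic and the gains decouple per subcarrier, and the coupling is then symmetrized. This is a second delicate point, and I would flag it as the step where the paper's form of $\mathbf{M}$ is adopted. The expression is strictly convex in $\mathbf{u}_k$ since $\mathbf{M}\succeq\mathbf{R}\succ0$, so setting the Wirtinger gradient to zero gives~\eqref{eq:optimal_control_form}. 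Substituting back yields $\mathbf{x}_k^H[\mathbf{Q}+\mathbf{A}^T\bar{\mathbf{P}}\mathbf{A}-\mathbf{A}^T\bar{\mathbf{P}}\mathbf{B}\widehat{\mathbf{H}}\mathbf{M}^{-1}\widehat{\mathbf{H}}^H\mathbf{B}^T\bar{\mathbf{P}}\mathbf{A}]\mathbf{x}_k+\rho$, which by~\eqref{eq:original riccati} equals $V(\mathbf{x}_k,\widehat{\mathbf{H}}(k+1|k))+\rho$.

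\textbf{Step 3 (optimality of the average cost).} For any admissible policy, take expectations of the Bellman inequality $c_d+\mathbb{E}[V_{k+1}]\ge V_k+\rho_k$ and sum over $k=0,\dots,K-1$. Dividing by $K$ gives $\frac1K\sum\mathbb{E}[c_d]\ge\frac1K\sum\mathbb{E}[\rho_k]+\frac1K(\mathbb{E}V_0-\mathbb{E}V_K)$. Since $V\ge0$, the last term is at least $-\frac1K\mathbb{E}V_0\to0$ under the lim sup, which gives the lower bound. Under $\mathbf{u}_k^*$ equality holds stepwise. It then remains to show $\frac1K\mathbb{E}[V_K]\to0$, i.e.\ a mean-square boundedness of the closed loop, using positivity and uniqueness of $\mathbf{P}$ as a Lyapunov certificate together with the bounded $\Tr\Sigma$ from Theorem~\ref{thm:kalman_bound}. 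This completes the characterization.
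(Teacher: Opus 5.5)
Your route differs from the paper's. The paper runs a finite-horizon backward induction from $V_K=\mathbf{x}_K^H\mathbf{Q}\mathbf{x}_K$, obtains a time-varying kernel recursion for $\mathbf{P}_k(\cdot)$, and reaches the average-cost statement by citing convergence of the reversed Riccati iteration and a finite-to-infinite-horizon result. You instead verify the posited fixed point directly and telescope, which would make that limit argument unnecessary. However, verification needs the one-step identity to hold \emph{exactly and for every} $\mathbf{u}_k$, and Step~2 cannot give that. With $\mathbf{G}=\mathbf{B}^T\bar{\mathbf{P}}\mathbf{B}$, the term you computed is $\Tr(\mathbf{G}\mathbf{D}\Sigma\mathbf{D}^H)=\mathbf{u}_k^H(\mathbf{G}\circ\Sigma^T)\mathbf{u}_k$ (a Hadamard product), not $\Tr(\mathbf{G}\Sigma)\|\mathbf{u}_k\|^2$. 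For diagonal $\Sigma$ it equals $\sum_i[\mathbf{G}]_{i,i}[\Sigma]_{i,i}|[\mathbf{u}_k]_i|^2$, which is at most, and for $N\ge2$ generically strictly less than, the claimed term. Moreover, $\Sigma(k+1|k)$ is not diagonal in general: the measurement update in Algorithm~\ref{alg:kalman} subtracts a term built from $\mathbf{U}_{k-1}^H\mathbf{B}^T(\cdot)^{-1}\mathbf{B}\mathbf{U}_{k-1}$, which couples subcarriers whenever $\mathbf{B}$ is not diagonal (as in the numerical example). So the per-subcarrier decoupling you invoke is unavailable. For non-diagonal $\Sigma$ the replacement is not even an upper bound. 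Take $N=2$, $\mathbf{G}$ and $\Sigma$ with unit diagonals and off-diagonal entries $-0.9$ and $0.9$, and $\mathbf{u}_k=(1,-1)^T/\sqrt{2}$: the exact term is $1.81$, whereas $\Tr(\mathbf{G}\Sigma)\|\mathbf{u}_k\|^2=0.38$. The exact one-step minimizer therefore has $\mathbf{G}\circ\Sigma^T$ in place of $\Tr(\mathbf{G}\Sigma)\mathbf{I}_N$ in $\mathbf{M}$. If you fall back on the diagonal-case over-bound, you only get $c_d(\mathbf{x}_k,\mathbf{u}_k^*)+\mathbb{E}[V_{k+1}]\le V_k+\rho$. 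That makes $\rho$ a guaranteed-cost bound for $\mathbf{u}^*$, and the inequality for arbitrary controls that Step~3 relies on is lost.

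Step~1 fails for a similar reason. The quantity $\widehat{\mathbf{h}}(k+2|k+1)=\alpha\widehat{\mathbf{h}}(k+1|k)+\alpha\mathbf{K}_{k+1}(\cdot)$ is affine in exactly the noise that drives $\mathbf{x}_{k+1}$, and $\mathbf{K}_{k+1}$ depends on $\mathbf{U}_k$. Since $\mathbf{P}(\cdot)$ is nonlinear, $\mathbb{E}[\mathbf{x}_{k+1}^H\mathbf{P}(\widehat{\mathbf{H}}(k+2|k+1))\mathbf{x}_{k+1}]$ generally differs from your conditional-mean substitute by a $\mathbf{u}_k$-dependent amount (the dual effect), which shifts the minimizer. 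Step~3 also has the bookkeeping reversed. From $V\ge0$ you get $\frac{1}{K}(\mathbb{E}V_0-\mathbb{E}V_K)\le\frac{1}{K}\mathbb{E}V_0$, which is what makes the equality case under $\mathbf{u}^*$ harmless. The lower bound for a competing policy is precisely where $\mathbb{E}[V_K]/K\to0$ is needed. That requires restricting the admissible class (e.g.\ to mean-square stable policies) and a uniform bound on $\mathbf{P}(\cdot)$, which pointwise existence does not supply. In addition, the laws of $\widehat{\mathbf{H}}(k+1|k)$ and $\Sigma(k+1|k)$ depend on the applied controls through the Kalman filter. The telescoped bound therefore involves $\mathbb{E}[\rho]$ along the competitor's own trajectory and does not compare it with $\mathbf{u}^*$. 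The paper's proof does not close Steps~1 and~2 either. At stage $K-1$ it writes the uncertainty term directly as $\mathbf{u}_{K-1}^H\Tr(\mathbf{B}^T\mathbf{Q}\mathbf{B}\Sigma(K|K-1))\mathbf{u}_{K-1}$, and its backward recursion uses $\mathbf{P}_{k+1}(\alpha\widehat{\mathbf{H}}(k+1|k))$ as the next-stage kernel, both without justification. So you have correctly located the two places where the claimed identity is not derivable. Your proposal, like the paper's proof, only establishes the statement for a surrogate model with three features: the modulating state evolves deterministically as $\widehat{\mathbf{H}}\mapsto\alpha\widehat{\mathbf{H}}$; $\Sigma$ is exogenous; and the multiplicative-noise second moment is defined to be $\Tr(\mathbf{G}\Sigma)\|\mathbf{u}_k\|^2$. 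The last is what one gets if the channel error acts as a full matrix with independent zero-mean columns of covariance $\Sigma$. Within that surrogate, your verification argument with a corrected Step~3 is sound and more direct than the paper's limit argument.
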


\begin{IEEEproof}
    See Appendix~\ref{proof:theorem_2}.
\end{IEEEproof}

A key implication of Theorem~\ref{theorem: bellman equation} is that the optimal control law $\mathbf{u}_k^*$ in~\eqref{eq:optimal_control_form} depends not only on the plant state $\mathbf{x}_k$ and the predicted channel gain $\widehat{\mathbf{H}}(k+1|k)$, but also on the channel prediction quality through the {one-step lookahead} a priori error covariance $\Sigma(k+1|k)$. 

In particular, when the prediction is less accurate (i.e., $\|\Sigma(k+1|k)\|$ is large), 
the uncertainty-dependent regularization term $\Tr\!\Big(\mathbf{B}^{T}\mathbf{P}\!\big(\alpha\widehat{\mathbf{H}}(k+1|k)\big)\mathbf{B}\,\Sigma(k+1|k)\Big)\mathbf{I}_{N}$ {in \eqref{eq:M_matrix}} increases the effective control penalty inside the inverse gain matrix in~\eqref{eq:optimal_control_form}. As a result, the magnitude of the control input $\|\mathbf{u}_k^{*}\|$ decreases, yielding a more conservative action.
This behavior reflects a natural robustness effect: poorer channel prediction leads the controller to attenuate actuation to mitigate performance degradation under channel uncertainty.

We further note that the kernel Riccati-type recursion in~\eqref{eq:original riccati} differs fundamentally from the conventional algebraic Riccati equation in LQR control. In the classical setting, the value function is parameterized by a constant kernel matrix $\mathbf{P}$, and the optimal LQR gain is obtained by solving a fixed-point equation in $\mathbf{P}$ (e.g., via Riccati iterations). 
In contrast, \eqref{eq:original riccati} defines a fixed-point mapping over a \emph{kernel} $\mathbf{P}(\cdot)$ indexed by the continuous Markov state $\widehat{\mathbf{H}}(k|k-1)$ induced by the channel dynamics. This functional dependence on a continuous Markov state substantially increases the computational complexity of solving Problem~\ref{problem:decision_making} compared with standard LQR.

\noindent\textbf{Plant stability.}
The stability of the closed-loop plant under the solution to Problem~\ref{problem:decision_making} is not immediate, since the control input $\mathbf{u}_k$ is computed from the predicted channel $\widehat{\mathbf{H}}(k+1|k)$, which in general differs from the true channel $\mathbf{H}_{k+1}^{c}$. 

To address this issue, we use Lyapunov arguments and exploit the structural properties of the OFDM-based system to establish the following theorem.

\begin{theorem}[Plant Stability]
 Under the sufficient conditions in Theorem~\ref{theorem: bellman equation}, the linear OFDM system operating under the proposed pilot-free framework is stable under the optimal control law $\mathbf{u}_k^*$ in~\eqref{eq:optimal_control_form}, in the sense that $\limsup_{K \rightarrow \infty} \frac{1}{K} \sum_{k=0}^{K-1} \mathbb{E}[\|\mathbf{x}_k\|^2] < \infty$. Furthermore, the long-term average plant state energy $\limsup_{K \rightarrow \infty} \frac{1}{K} \sum_{k=0}^{K-1} \mathbb{E}[\|\mathbf{x}_k\|^2]$ is monotonically nondecreasing in the channel-prediction MSE proxy $\limsup_{K \rightarrow \infty} \frac{1}{K} \sum_{k=0}^{K-1} \mathbb{E}[\operatorname{Tr}(\Sigma(k+1|k))]$.
\end{theorem}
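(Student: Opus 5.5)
We plan to use the value function $V(\mathbf{x}_k,\widehat{\mathbf{H}}(k+1|k))=\mathbf{x}_k^H\mathbf{P}(\widehat{\mathbf{H}}(k+1|k))\mathbf{x}_k$ from Theorem~\ref{theorem: bellman equation} as a stochastic Lyapunov function and derive a drift inequality. We first write the closed loop under $\mathbf{u}_k^*$ as
\[
\mathbf{x}_{k+1}=\mathbf{A}\mathbf{x}_k+\mathbf{B}\widehat{\mathbf{H}}(k+1|k)\mathbf{u}_k^*+\mathbf{B}\big(\mathbf{H}_{k+1}^c-\widehat{\mathbf{H}}(k+1|k)\big)\mathbf{u}_k^*+\mathbf{B}\mathbf{n}_k^c+\mathbf{w}_k .
\]
In the OFDM structure, $\mathbf{H}_{k+1}^c$ acts entrywise on $\mathbf{u}_k$ (we identify $\mathbf{P}$ with the permutation, as in the observation model~\eqref{eq: channel observation model}). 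Given the information $(\mathbf{x}_0^k,\mathbf{u}_0^{k-1})$, the prediction error $\mathbf{e}_{k+1}=\mathbf{h}_{k+1}^c-\widehat{\mathbf{h}}(k+1|k)$ is zero mean with covariance $\Sigma(k+1|k)$, and it is independent of $\mathbf{n}_k^c$ and $\mathbf{w}_k$.

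\textbf{Conditional expectation.} Since $\mathbf{P}(\alpha\widehat{\mathbf{H}})$ is the kernel used for $\mathbb{E}[V(\mathbf{x}_{k+1},\cdot)\mid\cdot]$, the cross terms vanish when we take the conditional expectation of the next-step value. The error term then contributes $\mathbb{E}[\mathbf{u}^H\mathbf{E}^H\mathbf{B}^T\mathbf{P}\mathbf{B}\mathbf{E}\mathbf{u}]$. The key OFDM step is to bound this by $\Tr(\mathbf{B}^T\mathbf{P}\mathbf{B}\Sigma)\|\mathbf{u}\|^2$. Because $\mathbf{E}$ is diagonal, $\mathbf{E}\mathbf{u}=\mathbf{U}\mathbf{e}$, so the term equals $\Tr(\mathbf{U}^H\mathbf{B}^T\mathbf{P}\mathbf{B}\mathbf{U}\Sigma)\le \|\mathbf{u}\|_\infty^2\Tr(\mathbf{B}^T\mathbf{P}\mathbf{B}\Sigma)$ up to the diagonal structure of $\Sigma$. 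This is exactly the regularizer that appears in $\mathbf{M}$, so the quadratic form in $\mathbf{u}$ is dominated by $\mathbf{u}^H(\mathbf{M}-\mathbf{R})\mathbf{u}$. Substituting $\mathbf{u}_k^*$ and completing the square, and then using the Riccati identity~\eqref{eq:original riccati}, we obtain
\[
\mathbb{E}[V_{k+1}\mid\mathbf{x}_k,\widehat{\mathbf{H}}(k+1|k)]\le V_k-\mathbf{x}_k^H\mathbf{Q}\mathbf{x}_k-\mathbf{u}_k^{*H}\mathbf{R}\mathbf{u}_k^*+\rho(\widehat{\mathbf{H}}(k+1|k)).
\]
A complication is that $\mathbf{P}$ is evaluated at $\widehat{\mathbf{H}}(k+2|k+1)$ rather than at $\alpha\widehat{\mathbf{H}}(k+1|k)$. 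We handle this with the Bellman equation~\eqref{eq: Bellman equation} itself, which already encodes the conditional expectation of $V$ over the next Markov state. We take that characterization as given from Theorem~\ref{theorem: bellman equation}.

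\textbf{Telescoping.} Next we take full expectations, sum over $k=0,\dots,K-1$, and divide by $K$. Since $V\ge 0$ and $\mathbf{Q}\succeq\lambda_{\min}(\mathbf{Q})\mathbf{I}$, this gives
\[
\lambda_{\min}(\mathbf{Q})\limsup_{K\to\infty}\tfrac1K\textstyle\sum_k\mathbb{E}\|\mathbf{x}_k\|^2\le\limsup_{K\to\infty}\tfrac1K\textstyle\sum_k\mathbb{E}[\rho(\widehat{\mathbf{H}}(k+1|k))].
\]
The right-hand side is finite provided $\sup_{\widehat{\mathbf{H}}}\|\mathbf{P}(\widehat{\mathbf{H}})\|<\infty$ over the relevant range, or at least $\mathbb{E}\|\mathbf{P}\|$ bounded. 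We would justify this with a comparison argument: $\mathbf{P}(\cdot)$ is dominated by the kernel of the policy $\mathbf{u}=\mathbf{0}$, which is finite if $\mathbf{A}$ is stable. For a general $\mathbf{A}$ we would instead use the assumed existence of a positive definite solution together with the fact that $\widehat{\mathbf{h}}(k+1|k)$ has bounded second moment, since $\mathbb{E}\|\widehat{\mathbf{h}}\|^2\le\mathbb{E}\|\mathbf{h}^c\|^2$, which is bounded by the argument of Theorem~\ref{thm:kalman_bound}.

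\textbf{Monotonicity.} For the monotonicity claim, we note that $\Sigma$ enters only through the regularizer in $\mathbf{M}$. The Riccati map in~\eqref{eq:original riccati} is monotone nondecreasing in $\mathbf{M}$: it equals $\min_{\mathbf{K}}$ of a cost increasing in the regularizer, so a larger trace term gives a larger $\mathbf{P}$ by an induction on Riccati iterates. Hence $\mathbf{P}$, $\rho$, and therefore the stationary bound on $\mathbb{E}\|\mathbf{x}_k\|^2$, are nondecreasing in $\Tr\Sigma$. Passing from the per-step ordering to the time-averaged MSE proxy uses the ordering established in the proof of Theorem~\ref{thm:kalman_bound}.

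\textbf{Main obstacle.} We expect the hardest part to be making the diagonal-error bound match the scalar $\Tr(\mathbf{B}^T\mathbf{P}\mathbf{B}\Sigma)\mathbf{I}_N$ regularizer rigorously, and handling the random next-step kernel. If the bound is not exact, we would fall back on a looser inequality using $\lambda_{\max}$, which still yields finiteness.
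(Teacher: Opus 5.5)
Your route is the paper's: the value function $\mathbf{x}_k^H\mathbf{P}(\widehat{\mathbf{H}}(k+1|k))\mathbf{x}_k$ as a stochastic Lyapunov function, a drift bounded by the noise terms $\rho(\widehat{\mathbf{H}}(k+1|k))$ minus a coercive term, telescoping, and then monotonicity of the resulting bound in $\Sigma$. The stability half is fine, and cleaner than the paper's, if you commit to what you say at the end of your drift paragraph. Evaluating the Bellman equation of Theorem~\ref{theorem: bellman equation} at $\mathbf{u}_k^*$ gives $\mathbb{E}[V(\mathbf{x}_{k+1},\widehat{\mathbf{H}}(k+2|k+1))\mid\mathbf{x}_k,\widehat{\mathbf{H}}(k+1|k)]=V(\mathbf{x}_k,\widehat{\mathbf{H}}(k+1|k))+\rho(\widehat{\mathbf{H}}(k+1|k))-c_d(\mathbf{x}_k,\mathbf{u}_k^*)$ exactly. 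This handles the random next-step kernel and the prediction-error term at once. It also gives the uniform constant $\lambda_{\min}(\mathbf{Q})$, whereas the paper uses an unspecified $c_1(\widehat{\mathbf{H}})$ and takes its minimum over an unbounded set.

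Do not re-derive that step through your diagonal-error bound. The inequality $\Tr(\mathbf{U}^H\mathbf{G}\mathbf{U}\Sigma)\le\|\mathbf{u}\|^2\Tr(\mathbf{G}\Sigma)$, with $\mathbf{G}=\mathbf{B}^T\mathbf{P}\mathbf{B}$, holds for diagonal $\Sigma$ but fails in general. For $N=2$, $\mathbf{G}=\mathbf{1}\mathbf{1}^T$, $\Sigma=\epsilon\mathbf{I}_2+\mathbf{v}\mathbf{v}^T$ and $\mathbf{u}=\mathbf{v}=(1,-1)^T$, the left side is $4+2\epsilon$ and the right side is $4\epsilon$. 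Moreover, $\Sigma(k+1|k)$ is generically non-diagonal here, because $\mathbf{B}$ and $\mathbf{W}$ couple the subcarriers in the Kalman update. Your $\lambda_{\max}$ fallback leaves an extra term quadratic in $\mathbf{x}_k$ that need not be dominated by $\mathbf{x}_k^H\mathbf{Q}\mathbf{x}_k$, so it does not ``still yield finiteness.'' Your worry is justified: the paper's derivation of Theorem~\ref{theorem: bellman equation} simply writes the corresponding expectation as an equality, so this step has to be inherited from that theorem.

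On finiteness of the averaged $\rho$, which the paper just asserts: your fallback for general $\mathbf{A}$ (pointwise existence plus bounded $\mathbb{E}\|\widehat{\mathbf{h}}\|^2$) does not bound $\mathbb{E}\|\mathbf{P}(\widehat{\mathbf{H}})\|$. You do not need it. At $\widehat{\mathbf{H}}=\mathbf{0}$, \eqref{eq:original riccati} reduces to $\mathbf{P}(\mathbf{0})=\mathbf{Q}+\mathbf{A}^T\mathbf{P}(\mathbf{0})\mathbf{A}$. So the hypothesis of Theorem~\ref{theorem: bellman equation}, taken over the whole range of the predictor, already forces $\mathbf{A}$ to be Schur, and your comparison with $\mathbf{u}=\mathbf{0}$ then bounds $\mathbf{P}(\cdot)$ uniformly.

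The genuine gap is in the monotonicity claim, and the paper's proof has the same one. What you establish, and what you yourself call ``the stationary bound,'' is that $\lambda_{\min}(\mathbf{Q})^{-1}\limsup_{K\to\infty}\frac{1}{K}\sum_{k=0}^{K-1}\mathbb{E}[\rho(\widehat{\mathbf{H}}(k+1|k))]$ increases with the regularizer. Monotonicity of an upper bound says nothing about monotonicity of $\limsup_{K\to\infty}\frac{1}{K}\sum_{k=0}^{K-1}\mathbb{E}[\|\mathbf{x}_k\|^2]$ itself. The Bellman identity ties $\rho$ to the full stage cost $\mathbf{x}_k^H\mathbf{Q}\mathbf{x}_k+\mathbf{u}_k^H\mathbf{R}\mathbf{u}_k$, not to the state energy.

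Even the bound is not shown to be monotone in the scalar proxy. The regularizer $\Tr(\mathbf{B}^T\mathbf{P}\mathbf{B}\Sigma)$ is monotone in $\Sigma$ in the Loewner order but is not a function of $\Tr(\Sigma)$. In this pilot-free loop, a different $\Sigma$ also changes the control law, and hence the law of $\widehat{\mathbf{H}}(k+1|k)$ over which $\rho$ is averaged. The proof of Theorem~\ref{thm:kalman_bound} only compares observing versus not observing at a single slot, so it gives no ordering between two closed loops.

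The paper concludes that the left-hand side is monotone because the right-hand side is, so your proposal reproduces its argument; but neither proves the second sentence of the theorem. That would require a direct comparison of closed-loop second moments under Loewner-ordered prediction errors, plus a precise meaning for ``monotone in'' a time-averaged scalar proxy.
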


\begin{IEEEproof}
    See Appendix~\ref{proof:theorem_3}.
\end{IEEEproof}

As a result, closed-loop plant stability is closely tied to channel-prediction accuracy: a more accurate prediction $\widehat{\mathbf{H}}(k+1|k)$ yields improved plant stability (i.e., lower average state energy).

\noindent\textbf{Approximate solution and algorithm implementation.}
To compute the optimal control law $\mathbf{u}_k^{*}$ in Theorem~\ref{theorem: bellman equation}, one must evaluate the kernel $\mathbf{P}\!\big(\widehat{\mathbf{H}}(k+1|k)\big)$ by solving the Riccati-type fixed-point equation in~\eqref{eq:original riccati}. This is computationally prohibitive due to the \emph{curse of dimensionality} induced by the continuous state space of $\widehat{\mathbf{H}}(k+1|k)\in\mathbb{C}^{N\times N}$. To mitigate this issue, we approximate the kernel $\mathbf{P}(\cdot)$ by discretizing its dependence on $\widehat{\mathbf{H}}(k+1|k)$ using the statistical structure of the predicted channel.

Specifically, under the channel dynamics in~\eqref{eq: channel dynamic model}, each diagonal coefficient approaches a steady-state distribution. In our normalized model, this implies $[\mathbf{H}_{k+1}^c]_{i,i} \sim \mathcal{CN}(0,1)$ as $k \to \infty$. Since the KF prediction $\widehat{\mathbf{H}}(k+1|k)$ concentrates around $\mathbf{H}_{k+1}^c$ with bounded error covariance $\Sigma(k+1|k)$, we approximate the distribution of $\widehat{\mathbf{H}}(k+1|k)$ by that of $\mathbf{H}_{k+1}^c$ in steady state. 
By the empirical $3\sigma$ rule, more than $99.7\%$ of the probability mass of each diagonal entry of $\widehat{\mathbf{H}}(k+1|k)$ lies within a disk of radius $3$ in the complex plane. 
Leveraging this property, we quantize the complex-valued space of each diagonal entry of $\widehat{\mathbf{H}}(k+1|k)$ by
\begin{itemize}
    \item uniformly partitioning the magnitude interval $[0, 3]$ into $M_r$ radial bins;
    \item uniformly partitioning the phase interval $[-\pi, \pi)$ into $M_\theta$ angular sectors;
    \item introducing an overflow region $\mathcal{B}_0$ for entries with magnitude exceeding $3$, i.e., $|[\widehat{\mathbf{H}}(k+1|k)]_{i,i}| > 3$ for some $i\in\left\{1, \ldots, N\right\}$.
\end{itemize}

The predicted channel gain $\widehat{\mathbf{H}}(k+1|k)$ is thus mapped to a discrete bin index $\ell \in \{0, 1, \dots, L\}$, where each bin corresponds to a quantization region $\mathcal{B}_\ell$ and $L \triangleq (M_rM_\theta+1)^N$. Each region $\mathcal{B}_\ell$ is associated with a representative kernel matrix $\bar{\mathbf{P}}_\ell\in\mathbb{S}_+^{S}$, defined as
\begin{align}
\widehat{\mathbf{H}}(k+1|k) \in \mathcal{B}_\ell  \Rightarrow  \mathbf{P}(\widehat{\mathbf{H}}(k+1|k))=\bar{\mathbf{P}}(\widehat{\mathbf{H}}_\ell) = \bar{\mathbf{P}}_\ell,  
\end{align}
where $\widehat{\mathbf{H}}_\ell \in \mathcal{B}_\ell$ is a representative point (e.g., the bin centroid).
Further, by upper bounding the prior error covariance term using Theorem~\ref{thm:kalman_bound}, we replace $\Sigma(k+1|k)$ with a uniform bound $\bar{\Sigma}$, which yields a tractable approximation of the Riccati-type recursion in~\eqref{eq:original riccati}. In particular, letting $\bar{\Sigma}\ \triangleq\ \frac{\sigma_\upsilon^2}{1-\alpha^2}\mathbf{I}_N\,\mathbf{1}_{|\alpha|<1} +\mathbf{I}_N\,\mathbf{1}_{|\alpha|=1}$, the kernel recursion can be approximated by the following coupled algebraic Riccati equation (CARE)~\cite{do2005discrete} over the discrete kernels $\{\bar{\mathbf{P}}_\ell\}$:
\begin{align}\label{eq: discrete riccati}
 &\bar{\mathbf{P}}_\ell= \mathbf{Q} + \mathbf{A}^T \bar{\mathbf{P}}_{\ell'} \mathbf{A} - \mathbf{A}^T \bar{\mathbf{P}}_{\ell'} \mathbf{B} \widehat{\mathbf{H}}_\ell \Big(\widehat{\mathbf{H}}_\ell^H \mathbf{B}^T \bar{\mathbf{P}}_{\ell'} \mathbf{B} \widehat{\mathbf{H}}_\ell+ \mathbf{R} \nonumber\\
& +\Tr(\mathbf{B}^T\bar{\mathbf{P}}_{\ell'}\mathbf{B}\bar{\Sigma})\mathbf{I}_{N}\Big)^{-1}(\widehat{\mathbf{H}}_\ell)^H \mathbf{B}^T \bar{\mathbf{P}}_{\ell'}\mathbf{A},
\end{align}
where $\ell'$ is the index such that $\alpha\widehat{\mathbf{H}}_\ell\in\mathcal{B}_{\ell'}$, i.e.,
$\bar{\mathbf{P}}_{\ell'}=\bar{\mathbf{P}}(\alpha\widehat{\mathbf{H}}_\ell)$.

The following theorem establishes sufficient conditions for the existence and uniqueness of a stabilizing solution $\{\bar{\mathbf{P}}_\ell\}_{\ell=1}^{L}$ to~\eqref{eq: discrete riccati}.

\begin{theorem}[Sufficient Conditions for Existence and Uniqueness of a Stabilizing Solution to~\eqref{eq: discrete riccati}]\label{theorem: existence of solution}
Suppose that one of the following conditions holds,
\begin{itemize}
    \item[(1)] There exists a {common} feedback matrix $\mathbf{F}$ such that $\mathbf{A}+\mathbf{B}\widehat{\mathbf{H}}_\ell \mathbf{F}$ is Schur for all $\ell$;
    \item[(2)] For every $\ell$, the pair $(\mathbf{A},\mathbf{B}\widehat{\mathbf{H}}_\ell)$ is stabilizable and $(\mathbf{A},\mathbf{Q}^{1/2})$ is detectable (i.e., the system is uniformly detectable and uniformly stabilizable).
\end{itemize}
Then the CARE in \eqref{eq: discrete riccati} admits a unique positive semidefinite stabilizing solution $\left\{\bar {\mathbf{P}}_\ell^*\succeq0\right\}_{\ell=1}^{L}$ in the sense that $\mathbf{A}+\mathbf{B}\widehat{\mathbf{H}}_\ell\mathbf{K}_\ell(\bar{\mathbf{P}}^*_\ell)$ is Schur for all $\ell$, where 
\begin{align}
&\mathbf{K}_\ell(\bar{\mathbf{P}}^*_\ell)=-(
\mathbf{R} 
+ \widehat{\mathbf{H}}_\ell^H \mathbf{B}^T \bar{\mathbf{P}}_{\ell'} \mathbf{B} \widehat{\mathbf{H}}_\ell
+ \Tr(\mathbf{B}^T \bar{\mathbf{P}}_{\ell'}^* \mathbf{B} \bar{\Sigma})\mathbf{I}_{N}
)^{-1}\nonumber\\&\times  
\widehat{\mathbf{H}}_\ell^H \mathbf{B}^T \bar{\mathbf{P}}_{\ell'}^*\mathbf{A}. 
\end{align}
\end{theorem}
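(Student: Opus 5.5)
We will treat \eqref{eq: discrete riccati} as a fixed point of a monotone operator on the finite product cone $(\mathbb{S}_+^S)^{L}$, and argue via value iteration together with a quadratic-cost representation. For each $\ell$ define
\begin{align*}
\mathcal{T}_\ell(\{\mathbf{P}_j\}) \triangleq \min_{\mathbf{K}} \Big[ \mathbf{Q} + \mathbf{K}^H\big(\mathbf{R}+\Tr(\mathbf{B}^T\mathbf{P}_{\ell'}\mathbf{B}\bar{\Sigma})\mathbf{I}_N\big)\mathbf{K} + (\mathbf{A}+\mathbf{B}\widehat{\mathbf{H}}_\ell\mathbf{K})^T\mathbf{P}_{\ell'}(\mathbf{A}+\mathbf{B}\widehat{\mathbf{H}}_\ell\mathbf{K}) \Big].
\end{align*}
Completing the square shows that the minimizer is exactly $\mathbf{K}_\ell(\mathbf{P})$ and that the minimum value equals the right-hand side of \eqref{eq: discrete riccati}. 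Because the map $\ell\mapsto\ell'$ is a deterministic map on the finite index set, the CARE becomes a Riccati equation of a switched system whose mode sequence is $\ell,\ell',\ell'',\dots$.

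Two structural properties follow from the min-representation. First, $\mathcal{T}$ is monotone: each bracket is nondecreasing in $\mathbf{P}_{\ell'}$, including the trace penalty, and a pointwise minimum of monotone maps is monotone. Second, for a fixed gain family, $\mathcal{T}^{\mathbf{K}}$ is affine and monotone, and its fixed point is the cost of the corresponding closed loop.

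Existence would proceed as follows. Start value iteration at $\mathbf{P}^{(0)}_\ell=\mathbf{0}$. By monotonicity, $\mathbf{P}^{(n)}$ is nondecreasing in $n$. For the upper bound, under condition (1) take the common $\mathbf{F}$. Since every closed-loop matrix $\mathbf{A}+\mathbf{B}\widehat{\mathbf{H}}_\ell\mathbf{F}$ is Schur and the mode sequence is eventually periodic, the products along each mode path decay geometrically. The trace term is affine in $\mathbf{P}$ with factor $\|\mathbf{F}\|^2\|\mathbf{B}\|^2\Tr(\bar\Sigma)$, so it must be absorbed. If needed, I would scale $\mathbf{F}$, or invoke condition (1) as a mean-square stability condition for the multiplicative-noise system $\mathbf{x}^+=(\mathbf{A}+\mathbf{B}(\widehat{\mathbf{H}}_\ell+\Delta)\mathbf{F})\mathbf{x}$ with $\mathbb{E}[\Delta\Delta^H]\preceq\bar\Sigma$. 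The fixed-gain iteration $\mathcal{T}^{\mathbf{F}}$ then has a finite fixed point $\bar{\mathbf{P}}^{\mathbf{F}}$, and $\mathbf{P}^{(n)}\preceq\bar{\mathbf{P}}^{\mathbf{F}}$ for all $n$. A bounded monotone sequence in $\mathbb{S}_+^S$ converges, and continuity of $\mathcal{T}$ (the inverse is well defined since $\mathbf{R}\succ0$) gives a fixed point $\bar{\mathbf{P}}^*$.

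Under condition (2) I would build the bound differently. Stabilizability of each pair $(\mathbf{A},\mathbf{B}\widehat{\mathbf{H}}_\ell)$ gives per-mode stabilizing gains. I would then argue that along the eventually periodic mode path, the cost of a finite-horizon deadbeat-like policy is bounded, which yields an upper bound on $\mathbf{P}^{(n)}$.

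For the stabilizing property, set $\mathbf{A}_\ell^{\rm cl}=\mathbf{A}+\mathbf{B}\widehat{\mathbf{H}}_\ell\mathbf{K}_\ell(\bar{\mathbf{P}}^*)$. At the fixed point,
\begin{align*}
\bar{\mathbf{P}}^*_\ell = \mathbf{Q}+\mathbf{K}_\ell^H(\cdots)\mathbf{K}_\ell + (\mathbf{A}_\ell^{\rm cl})^T\bar{\mathbf{P}}^*_{\ell'}\mathbf{A}_\ell^{\rm cl}.
\end{align*}
Iterating this identity along the mode path shows that $\sum_t \|\mathbf{Q}^{1/2}\Phi_t\mathbf{x}\|^2<\infty$, where $\Phi_t$ is the closed-loop transition product. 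Positive definiteness of $\mathbf{Q}$, or detectability of $(\mathbf{A},\mathbf{Q}^{1/2})$ in condition (2) combined with boundedness of $\mathbf{K}_\ell$, then forces $\Phi_t\to0$. The Lyapunov-type identity also makes $\bar{\mathbf{P}}^*_\ell\succeq\mathbf{Q}\succ0$ act as a Lyapunov matrix, which gives the Schur property claimed in the statement.

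For uniqueness, suppose $\tilde{\mathbf{P}}$ is another stabilizing PSD solution. Writing each as a minimum over gains and subtracting gives two inequalities:
\begin{align*}
\bar{\mathbf{P}}^*_\ell-\tilde{\mathbf{P}}_\ell &\preceq (\tilde{\mathbf{A}}^{\rm cl}_\ell)^T(\bar{\mathbf{P}}^*_{\ell'}-\tilde{\mathbf{P}}_{\ell'})\tilde{\mathbf{A}}^{\rm cl}_\ell+\mathcal{E}_\ell,\\
\tilde{\mathbf{P}}_\ell-\bar{\mathbf{P}}^*_\ell &\preceq (\mathbf{A}^{\rm cl}_\ell)^T(\tilde{\mathbf{P}}_{\ell'}-\bar{\mathbf{P}}^*_{\ell'})\mathbf{A}^{\rm cl}_\ell+\mathcal{E}'_\ell.
\end{align*}
Here $\mathcal{E}_\ell,\mathcal{E}'_\ell$ collect the trace-penalty differences, which are linear in $\bar{\mathbf{P}}^*_{\ell'}-\tilde{\mathbf{P}}_{\ell'}$. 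Iterating these inequalities with Schur closed loops drives both differences to zero.

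The main obstacle is the non-standard term $\Tr(\mathbf{B}^T\mathbf{P}_{\ell'}\mathbf{B}\bar\Sigma)\mathbf{I}_N$. It makes the operator of multiplicative-noise (generalized Riccati) type rather than classical, so plain Schur-ness of $\mathbf{A}+\mathbf{B}\widehat{\mathbf{H}}_\ell\mathbf{F}$ does not immediately bound the cost, and it complicates the uniqueness contraction. I would handle it by exploiting that the term enters only as extra input weighting. It can therefore be upper-bounded by the mean-square cost of the uncertain-channel system, which is bounded under the stated conditions (interpreted in the mean-square sense). In the uniqueness step, because the term appears inside a minimization, it contributes with the favourable sign once the minimizing gain of the other solution is used, which I expect lets the comparison argument close.
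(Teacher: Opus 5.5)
The gap is the a priori upper bound in your existence step. It cannot be supplied, because the statement is false as written. You correctly see that $\Tr(\mathbf{B}^T\mathbf{P}_{\ell'}\mathbf{B}\bar\Sigma)\mathbf{I}_N$ turns $\mathcal{T}$ into a multiplicative-noise Riccati operator. Neither of your remedies works, though: rescaling $\mathbf{F}$ cannot help, and reading condition (1) ``in the mean-square sense'' is an extra hypothesis, not a consequence of (1) or (2).

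Take $S=N=1$, $\mathbf{B}=1$, $\alpha=1$ (so $\ell'=\ell$ and $\bar\Sigma=1$), $\mathbf{A}=2$, and a mode with $\widehat{\mathbf{H}}_\ell=1$. Then \eqref{eq: discrete riccati} for that mode reads
\[
p=q+4p\,\frac{p+r}{2p+r}\;\ge\;q+2p\;>\;p\qquad\text{for every }p\ge 0,
\]
so there is no PSD solution. Yet $\mathbf{F}=-2$ gives $\mathbf{A}+\mathbf{B}\widehat{\mathbf{H}}_\ell\mathbf{F}=0$, and condition (2) holds trivially. Equivalently, $\min_K\big(|2+K|^2+|K|^2\big)=2>1$, so no gain makes the linear part of $\mathcal{T}^{K}$ contractive, and $\bar{\mathbf{P}}^{\mathbf{F}}$ does not exist.

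The paper's own proof has the same hole, so you are not missing a trick from it:
its policy-evaluation step asserts that \eqref{eq:eval} has a positive definite solution whenever the closed loops are Schur, but with $\mathbf{K}_\ell=\mathbf{F}=-2$ that equation is $p=q+4(r+p)$, whose only solution is negative;
its uniqueness identity \eqref{eq:11} drops the trace-term difference;
and its condition-(2) branch cites results for the coupled Riccati equation without that term.
An additional hypothesis is needed.

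The natural one is mean-square stabilizability: gains $\{\mathbf{F}_\ell\}$ for which the positive operator $\mathcal{L}^{\mathbf{F}}(\mathbf{Y})_\ell=(\mathbf{A}+\mathbf{B}\widehat{\mathbf{H}}_\ell\mathbf{F}_\ell)^H\mathbf{Y}_{\ell'}(\mathbf{A}+\mathbf{B}\widehat{\mathbf{H}}_\ell\mathbf{F}_\ell)+\Tr(\mathbf{B}^T\mathbf{Y}_{\ell'}\mathbf{B}\bar\Sigma)\mathbf{F}_\ell^H\mathbf{F}_\ell$ has spectral radius below one. Under it, your value iteration from $\mathbf{0}$ is bounded by $\sum_{n\ge 0}(\mathcal{L}^{\mathbf{F}})^n$ applied to $(\mathbf{Q}+\mathbf{F}_\ell^H\mathbf{R}\mathbf{F}_\ell)_\ell$, and it converges as you intended.

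Your uniqueness comparison then closes, but not for the reason you give: Schur closed loops do not control the trace term. The correct reason is that any PSD solution $\tilde{\mathbf{P}}$ with gain $\tilde{\mathbf{K}}$ satisfies $\tilde{\mathbf{P}}-\mathcal{L}^{\tilde{\mathbf{K}}}(\tilde{\mathbf{P}})\succeq\mathbf{Q}\succ\mathbf{0}$. This forces $r_\sigma(\mathcal{L}^{\tilde{\mathbf{K}}})<1$, hence $\bar{\mathbf{P}}^*-\tilde{\mathbf{P}}\preceq(\mathcal{L}^{\tilde{\mathbf{K}}})^n(\bar{\mathbf{P}}^*-\tilde{\mathbf{P}})\to\mathbf{0}$, and symmetrically with the roles swapped.

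Two further steps in your proposal are non sequiturs.

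First, Schur factors along an eventually periodic mode path need not have decaying products. Take $\mathbf{B}=\mathbf{I}$, $\widehat{\mathbf{H}}_{1,2}=\pm\mathbf{I}$ on a $2$-cycle ($\alpha=-1$), $\mathbf{A}=\left(\begin{smallmatrix}0&1\\1&0\end{smallmatrix}\right)$ and $\mathbf{F}=\left(\begin{smallmatrix}0&1\\-1&0\end{smallmatrix}\right)$. Both closed loops are nilpotent, but their product has eigenvalue $4$.

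Second, the fixed-point identity $\bar{\mathbf{P}}^*_\ell\succeq\mathbf{Q}+(\mathbf{A}^{\rm cl}_\ell)^H\bar{\mathbf{P}}^*_{\ell'}\mathbf{A}^{\rm cl}_\ell$ couples two different matrices when $\ell'\neq\ell$. It therefore certifies stability along mode paths, not Schur-ness of each $\mathbf{A}^{\rm cl}_\ell$ separately.
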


\begin{IEEEproof}
See Appendix \ref{proof:theorem 4}.
\end{IEEEproof}

Through this discretization, the optimal control law $\mathbf{u}_k^*$ in~\eqref{eq:optimal_control_form} is approximated by $\tilde{\mathbf{u}}_k \in \mathbb{C}^{N \times 1}$ as
\begin{align}\label{eq:approximate solution}
&\tilde{\mathbf{u}}_k = -(
\mathbf{R} 
+ \widehat{\mathbf{H}}_\ell^H \mathbf{B}^T \bar{\mathbf{P}}_{\ell'} \mathbf{B} \widehat{\mathbf{H}}_\ell
+ \Tr(\mathbf{B}^T \bar{\mathbf{P}}_{\ell'} \mathbf{B} \bar{\Sigma})\mathbf{I}_{N}
)^{-1}\nonumber\\&\times  
\widehat{\mathbf{H}}_\ell^H \mathbf{B}^T \bar{\mathbf{P}}_{\ell'} \mathbf{A} \mathbf{x}_k.
\end{align}

As a result, computing the optimal control law $\mathbf{u}_k^*$ in~\eqref{eq:optimal_control_form}, which in principle requires solving the Bellman equation~\eqref{eq: Bellman equation} over the continuous kernel $\mathbf{P}\!\big(\widehat{\mathbf{H}}(k+1|k)\big)$, can be approximated by the tractable control $\tilde{\mathbf{u}}_k$ in~\eqref{eq:approximate solution}. This approximation is obtained by solving a finite set of coupled fixed-point equations for the discrete kernels $\{\bar{\mathbf{P}}_\ell\}_{\ell=1}^{L}$ in~\eqref{eq: discrete riccati}, thus rendering the computation of the control policy tractable.

\begin{figure}
    \centering
    \includegraphics[height=4cm,width=8cm]{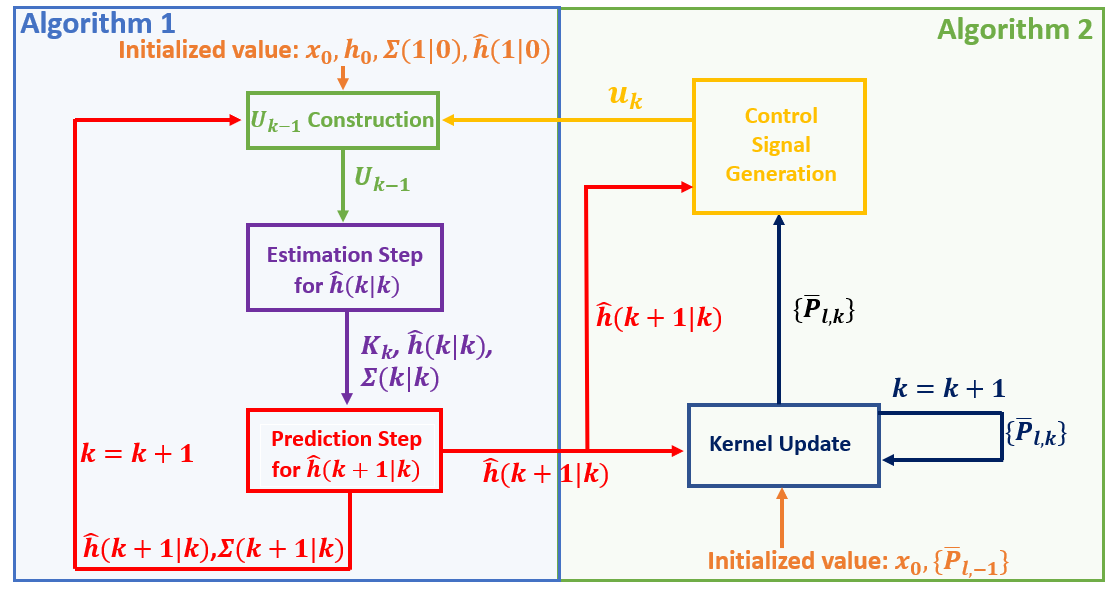}  \caption{Data flow between Algorithm~\ref{algorithm:channel_estimation_linear} and Algorithm~\ref{algorithm:decision_making_linear}, which solve Problem~\ref{problem: channel_estimation} and Problem~\ref{problem:decision_making}, respectively. Algorithm~\ref{algorithm:decision_making_linear} generates the control input $\mathbf{u}_k$ using the one-step channel prediction $\widehat{\mathbf{h}}(k+1|k)$ from Algorithm~\ref{algorithm:channel_estimation_linear}. In turn, Algorithm~\ref{algorithm:channel_estimation_linear} exploits the control history $\mathbf{u}_0^{k-1}$ produced by Algorithm~\ref{algorithm:decision_making_linear}. This illustrates the mutual dependence between Problem~\ref{problem: channel_estimation} and Problem~\ref{problem:decision_making}.}
    \label{fig: algorithm}
\end{figure}

To compute the fixed-point equations in~\eqref{eq: discrete riccati}, we estimate the kernel matrices $\{\bar{\mathbf{P}}_\ell\}$ using an online SA method, as summarized in Algorithm~\ref{algorithm:decision_making_linear} and illustrated in Fig.~\ref{fig: algorithm}. The following lemma establishes the convergence of Algorithm~\ref{algorithm:decision_making_linear}.

\begin{algorithm}[t]\small
\caption{Controller Design for Linear OFDM Systems under the Pilot-Free Framework}
\begin{algorithmic}[1]
\State \textbf{Initialization:}
\State \hspace{1em} Set initial plant state $\mathbf{x}_0\sim \mathcal{CN}(\mathbf{0}_{S\times 1},\sigma_x^2\mathbf{I}_{S})$
\State \hspace{1em} Initialize kernel matrices $\bar{\mathbf{P}}_{i,-1}\in\mathbb{S}_+^S$ $i\in\{1,2,\ldots, L\}$

\For{$k = 0,1,2, \dots$}
    \State $\bullet$ \textbf{ Kernel Update:}
        \State \quad $\bar{\mathbf{P}}_{\ell,k}=\bar{\mathbf{P}}_{\ell,k-1}+\mu_{k} (\mathbf{Q} + \mathbf{A}^T \bar{\mathbf{P}}_{\ell',{k-1}} \mathbf{A} - \mathbf{A}^T \bar{\mathbf{P}}_{\ell',{k-1}} \mathbf{B} $\State \quad $\times \widehat{\mathbf{H}}_\ell \Big(\widehat{\mathbf{H}}_\ell^H \mathbf{B}^T \bar{\mathbf{P}}_{\ell',k-1} \mathbf{B} \widehat{\mathbf{H}}_\ell + \mathbf{R}+\Tr(\mathbf{B}^T\bar{\mathbf{P}}_{\ell',k-1}  $  \State \quad $\times \mathbf{B}\bar{\Sigma})\mathbf{I}_{N}
 )^{-1} \widehat{\mathbf{H}}_\ell^H\mathbf{B}^T \bar{\mathbf{P}}_{\ell',k-1}\mathbf{A}-\bar{\mathbf{P}}_{\ell,k-1})$
        \State \quad $\bar{\mathbf{P}}_{\kappa,k}=\bar{\mathbf{P}}_{\kappa,k-1}, \kappa \in \left\{1,...,L\right\} \backslash \ell$
    \State where $\widehat{\mathbf{H}}(k+1|k)\in\mathcal{B}_\ell$ and $\alpha\widehat{\mathbf{H}}(k+1|k)\in\mathcal{B}_{\ell'}$.
    $\left\{\mu_k\right\}_{k=0}^\infty$  
    \State is the  step-size sequence. $\widehat{\mathbf{H}}(k+1|k)$ is obtained  via \State  Algorithm \ref{algorithm:channel_estimation_linear}.

    \State \textbf{$\bullet$ Generation of Control Signal:}
    \State \quad $\mathbf{u}_k=-( 
    \mathbf{R} 
   + \widehat{\mathbf{H}}_\ell^H \mathbf{B}^T \bar{\mathbf{P}}_{\ell,k} \mathbf{B} \widehat{\mathbf{H}}_\ell+\Tr(\mathbf{B}^T\bar{\mathbf{P}}_{\ell,k} \mathbf{B}$
\State $\quad \times \bar{\Sigma})  \mathbf{I}_{N})^{-1} 
(\widehat{\mathbf{H}}_\ell)^H \mathbf{B}^T \bar{\mathbf{P}}_{\ell,k}  \mathbf{A} \mathbf{x}_k.$
    
\EndFor
\end{algorithmic}
\label{algorithm:decision_making_linear}
\end{algorithm}

\begin{lemma}[Convergence of Algorithm~\ref{algorithm:decision_making_linear}] Suppose that the conditions in Theorem \ref{theorem: existence of solution} hold and that the step-size sequence $\{\mu_k\}_{k=0}^{\infty}$ satisfies the Robbins--Monro conditions: $\sum_{k=0}^\infty \mu_k = \infty$ and $\sum_{k=0}^\infty \mu_k^2 < \infty$. Then the control input $\mathbf{u}_k$ generated by Algorithm~\ref{algorithm:decision_making_linear} converges almost surely to the approximate control law $\tilde{\mathbf{u}}_k$, i.e., $\Pr\left( \lim_{k \rightarrow \infty} \mathbf{u}_k = \tilde{\mathbf{u}}_k \right) = 1$.
\end{lemma}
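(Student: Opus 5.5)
The plan is to read the kernel update in Algorithm~\ref{algorithm:decision_making_linear} as an asynchronous stochastic-approximation recursion of the form
\begin{align*}
\bar{\mathbf{P}}_{\ell,k}=\bar{\mathbf{P}}_{\ell,k-1}+\mu_k\,\mathbb{1}\{\widehat{\mathbf{H}}(k+1|k)\in\mathcal{B}_\ell\}\big(\mathcal{T}_\ell(\bar{\mathbf{P}}_{k-1})-\bar{\mathbf{P}}_{\ell,k-1}\big),
\end{align*}
where $\bar{\mathbf{P}}_{k}=\{\bar{\mathbf{P}}_{\ell,k}\}_{\ell}$ and $\mathcal{T}_\ell$ is the right-hand side of the CARE~\eqref{eq: discrete riccati}. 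Note that $\mathcal{T}_\ell$ is deterministic once the bin index is fixed, because $\widehat{\mathbf{H}}_\ell$ and $\bar{\Sigma}$ are constants. The only randomness in the recursion is therefore which component is updated at each step, and that choice is driven by the Markov process $\widehat{\mathbf{H}}(k+1|k)$ generated by Algorithm~\ref{algorithm:channel_estimation_linear}. I would then apply the ODE method of Borkar~\cite{borkar2008stochastic} for asynchronous SA. Under the Robbins--Monro step sizes, bounded iterates, and each bin being visited with a positive asymptotic frequency, the iterates track the ODE $\dot{\bar{\mathbf{P}}}_\ell=\lambda_\ell(t)\big(\mathcal{T}_\ell(\bar{\mathbf{P}})-\bar{\mathbf{P}}_\ell\big)$, where the frequencies satisfy $\lambda_\ell(t)\ge\underline{\lambda}>0$.

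The key steps, in order, are as follows.

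First, I would establish positive visit frequencies. The Gaussian innovation gives the predicted channel a full-support, ergodic law in steady state, so every bin $\mathcal{B}_\ell$ with positive probability mass is visited infinitely often with positive frequency. Bins that are never visited do not matter for $\tilde{\mathbf{u}}_k$ asymptotically.

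Second, I would show that the ODE has a unique globally asymptotically stable equilibrium, namely the stabilizing solution $\{\bar{\mathbf{P}}^*_\ell\}$ from Theorem~\ref{theorem: existence of solution}. The tool is monotonicity of the Riccati map. Each $\mathcal{T}_\ell$ is order-preserving on $\mathbb{S}_+^S$: it is an infimum over gains $\mathbf{K}$ of affine increasing maps $\mathbf{Q}+\mathbf{K}^H\mathbf{R}\mathbf{K}+(\mathbf{A}+\mathbf{B}\widehat{\mathbf{H}}_\ell\mathbf{K})^H\bar{\mathbf{P}}_{\ell'}(\cdot)+\Tr(\cdot)$-type terms, and the added trace term is linear and increasing in $\bar{\mathbf{P}}_{\ell'}$. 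Either condition (1) or condition (2) of Theorem~\ref{theorem: existence of solution} yields an upper solution $\mathbf{P}^{\mathrm{up}}$ (the cost of a stabilizing policy), and $\mathbf{0}$ is a lower solution. Cooperative-system arguments (Hirsch/Smith), or alternatively a weighted sup-norm contraction of the closed-loop operator around $\bar{\mathbf{P}}^*$, then show that all trajectories starting in the order interval converge to the unique fixed point.

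Third, I would prove boundedness of the iterates. The convex-combination structure $\bar{\mathbf{P}}_{\ell,k}=(1-\mu_k)\bar{\mathbf{P}}_{\ell,k-1}+\mu_k\mathcal{T}_\ell(\cdot)$ holds for $\mu_k\le1$, together with $\mathcal{T}_\ell(\mathbf{P})\preceq\mathbf{Q}+(\mathbf{A}+\mathbf{B}\widehat{\mathbf{H}}_\ell\mathbf{F})^H\mathbf{P}(\cdots)+\mathbf{F}^H\mathbf{R}\mathbf{F}$ for the stabilizing gain $\mathbf{F}$. From these, an inductive bound keeps the iterates in a compact order interval $[\mathbf{0},c\,\mathbf{P}^{\mathrm{up}}]$. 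This avoids having to invoke the Borkar--Meyn scaled-ODE stability test.

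Fourth, with $\bar{\mathbf{P}}_{\ell,k}\to\bar{\mathbf{P}}^*_\ell$ almost surely, continuity of the gain map $\mathbf{P}\mapsto\mathbf{K}_\ell(\mathbf{P})$ gives $\mathbf{K}_\ell(\bar{\mathbf{P}}_{k})-\mathbf{K}_\ell(\bar{\mathbf{P}}^*)\to0$; this map is continuous because the inverted matrix is $\succeq\mathbf{R}\succ0$. Hence $\mathbf{u}_k-\tilde{\mathbf{u}}_k=(\mathbf{K}_{\ell}(\bar{\mathbf{P}}_k)-\mathbf{K}_\ell(\bar{\mathbf{P}}^*))\mathbf{x}_k$. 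I would interpret the claim as the gain difference vanishing. To conclude $\mathbf{u}_k-\tilde{\mathbf{u}}_k\to0$ itself, I would additionally need $\|\mathbf{x}_k\|$ to be controlled. Here I would use that $\mathbb{E}\|\mathbf{x}_k\|^2$ is bounded under the limiting policy (the argument behind the plant-stability theorem), combined with Borel--Cantelli on $\|\mathbf{x}_k\|^2/k^{1+\epsilon}$, and compare this with the convergence rate of the gains.

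The main obstacle is the second step: global asymptotic stability of the asynchronous ODE when the coupling $\ell\mapsto\ell'$ links bins and the non-standard trace term enters $\mathbf{M}$. I would first try to prove that the map $\{\bar{\mathbf{P}}_\ell\}\mapsto\{\mathcal{T}_\ell\}$ is a contraction in a weighted max-norm $\max_\ell\|\mathbf{W}_\ell^{-1/2}(\cdot)\mathbf{W}_\ell^{-1/2}\|$, with the weights $\mathbf{W}_\ell$ taken from Lyapunov solutions of the closed loop at $\bar{\mathbf{P}}^*$. Such a contraction makes the asynchronous result (Borkar, Ch.~7) immediate. If that fails, I would fall back on the monotone-sandwich argument, running the recursion from $\mathbf{0}$ and from $\mathbf{P}^{\mathrm{up}}$.
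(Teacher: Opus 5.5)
\paragraph*{Where the proposal breaks}
Your route is the paper's route. The paper's proof only asserts that the kernel update tracks a stable ODE and invokes standard SA results; it never proves that the iterates stay bounded or that the ODE is globally stable. You try to supply both, and that is where the proposal breaks.

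\paragraph*{Step 3 drops the trace term}
The dominating bound in Step~3 is wrong. Evaluating the Loewner minimization that defines $\mathcal{T}_\ell$ at $\mathbf{K}=\mathbf{F}$ gives
\begin{align*}
\mathcal{T}_\ell(\bar{\mathbf{P}})\preceq{}&\mathbf{Q}+\mathbf{F}^H\big(\mathbf{R}+\Tr(\mathbf{B}^T\bar{\mathbf{P}}_{\ell'}\mathbf{B}\bar{\Sigma})\mathbf{I}_N\big)\mathbf{F}\\
&+(\mathbf{A}+\mathbf{B}\widehat{\mathbf{H}}_\ell\mathbf{F})^H\bar{\mathbf{P}}_{\ell'}(\mathbf{A}+\mathbf{B}\widehat{\mathbf{H}}_\ell\mathbf{F}).
\end{align*}
You dropped $\Tr(\mathbf{B}^T\bar{\mathbf{P}}_{\ell'}\mathbf{B}\bar{\Sigma})\mathbf{F}^H\mathbf{F}$, which is exactly the non-standard term you had flagged as the main obstacle. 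With it restored, the governing operator is $\mathbf{P}\mapsto(\mathbf{A}+\mathbf{B}\widehat{\mathbf{H}}_\ell\mathbf{F})^H\mathbf{P}_{\ell'}(\mathbf{A}+\mathbf{B}\widehat{\mathbf{H}}_\ell\mathbf{F})+\Tr(\mathbf{B}^T\mathbf{P}_{\ell'}\mathbf{B}\bar{\Sigma})\mathbf{F}^H\mathbf{F}$. Its spectral radius can exceed one even when every $\mathbf{A}+\mathbf{B}\widehat{\mathbf{H}}_\ell\mathbf{F}$ is Schur.

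Conditions (1) and (2) only deliver Schur gains. So they give you neither $\mathbf{P}^{\mathrm{up}}$ nor an invariant order interval. The same objection applies to the upper solution in Step~2, and to weights built from Lyapunov solutions of $\mathbf{A}+\mathbf{B}\widehat{\mathbf{H}}_\ell\mathbf{K}_\ell$ alone.

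Here is a concrete case. Take $S=N=1$, $B=1$, $\widehat{H}_\ell=\hat h$ and $\ell'=\ell$. Then $\mathcal{T}(p)=Q+A^2p\,\frac{\bar{\Sigma}p+R}{(|\hat h|^2+\bar{\Sigma})p+R}\ge Q+\frac{A^2\bar{\Sigma}}{|\hat h|^2+\bar{\Sigma}}\,p$. For $A=2$ and $\bar{\Sigma}=\hat h=1$ this gives $\mathcal{T}(p)\ge Q+2p>p$ for all $p\ge0$, so no finite upper solution exists. Yet (1) and (2) hold with $F=-2$, and your Step~3 bound would assert $\mathcal{T}(p)\le Q+4R$.

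\paragraph*{How to repair it}
Draw the upper solution from the \emph{conclusion} of Theorem~\ref{theorem: existence of solution}, not from its hypotheses. $\mathcal{T}$ is a Loewner minimum of increasing maps that are affine in $\bar{\mathbf{P}}$. It is therefore monotone and concave, with $\mathcal{T}(\mathbf{0})_\ell=\mathbf{Q}$. Hence, for $c\ge1$,
\begin{align*}
\mathcal{T}(c\bar{\mathbf{P}}^*)&\preceq c\bar{\mathbf{P}}^*-(c-1)\mathbf{Q},\\
\mathcal{T}(c^{-1}\bar{\mathbf{P}}^*)&\succeq c^{-1}\bar{\mathbf{P}}^*+(1-c^{-1})\mathbf{Q}.
\end{align*}
The first inequality makes $[\mathbf{0},c\bar{\mathbf{P}}^*]$ invariant under the update whenever $\mu_k\le1$. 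Take $\epsilon>0$ with $\mathbf{Q}\succeq\epsilon\bar{\mathbf{P}}^*_\ell$. Then the two inequalities strictly decrease the smallest $c$ with $c^{-1}\bar{\mathbf{P}}^*\preceq\bar{\mathbf{P}}\preceq c\bar{\mathbf{P}}^*$.

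This gives you two things:
\begin{itemize}
\item uniqueness of the PSD fixed point, which you need and which Theorem~\ref{theorem: existence of solution} (uniqueness among stabilizing solutions only) does not directly provide;
\item a max-type Lyapunov function for the asynchronous ODE that is valid for every $\lambda_\ell(t)\ge\underline{\lambda}$.
\end{itemize}
In fact the update has no martingale noise, so your fallback monotone sandwich already closes the argument pathwise. Let $\ell_k$ be the bin visited at step $k$, and run the update from $\mathbf{0}$ and from $c\bar{\mathbf{P}}^*$ along the realized bin sequence. These runs are nondecreasing and nonincreasing respectively, and they bracket $\bar{\mathbf{P}}_k$. Both converge to $\bar{\mathbf{P}}^*$ as long as $\sum_{k:\,\ell_k=\ell}\mu_k=\infty$ for every $\ell$.

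\paragraph*{Smaller points}
\begin{itemize}
\item \textbf{Step~4 cannot be completed as planned.} The Robbins--Monro conditions give no rate: with $\mu_k=1/(k\log k)$ the ODE clock grows like $\log\log k$, so the gain error may decay only like a power of $1/\log k$. Meanwhile $\|\mathbf{x}_k\|$ is a.s.\ unbounded under Gaussian noise. Also, your moment bound holds under the limiting policy, not for the adaptive loop that actually generates $\mathbf{x}_k$. State the result as $\mathbf{K}_\ell(\bar{\mathbf{P}}_k)\to\mathbf{K}_\ell(\bar{\mathbf{P}}^*)$ a.s., which is all the paper's sketch delivers either.
\item \textbf{The choice of $\ell'$.} ``The only randomness is which component is updated'' presupposes that $\ell'$ is fixed by $\alpha\widehat{\mathbf{H}}_\ell$, as in the CARE. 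Algorithm~\ref{algorithm:decision_making_linear} as printed takes $\ell'$ from $\alpha\widehat{\mathbf{H}}(k+1|k)$. In that case the mean field averages over several successors and its zero is not the CARE solution, so make the identification explicit.
\item \textbf{Unvisited bins.} Because bin $\ell$ is coupled to $\ell'$, your remark that unvisited bins do not matter requires the visited set to be closed under $\ell\mapsto\ell'$.
\end{itemize}
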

\begin{IEEEproof}
The almost sure convergence of Algorithm~\ref{algorithm:decision_making_linear} follows from the convergence of the estimated kernel matrices $\bar{\mathbf{P}}_{\ell,k}$ to the stabilizing CARE solution $\bar{\mathbf{P}}_{\ell}$ in \eqref{eq: discrete riccati}. This can be established by showing that the SA kernel update of $\bar{\mathbf{P}}_{\ell,k}$ in Algorithm~\ref{algorithm:decision_making_linear} tracks, in the asymptotic limit, the trajectory of an associated stable ordinary differential equation (ODE), and then invoking standard ODE-based SA arguments; see, e.g.,~\cite{tang2022online}. The detailed proof is omitted for brevity.
\end{IEEEproof}

\section{Extensions to Nonlinear Systems} 
\label{sec:nonlinear_general}
In this section, we extend the proposed pilot-free framework to general nonlinear systems operating over generic communication architectures.

\subsection{Generic System Modeling}
From a communication-theoretic perspective, we adopt an abstract and flexible model
\begin{align}\label{eq: general communication model}
    \widehat{\mathbf{u}}_k = f_{\mathrm{chan}}(\mathbf{H}_{k+1}^c, \mathbf{u}_k) + \mathbf{n}_k,
\end{align}
where $f_{\mathrm{chan}}(\cdot)$ denotes a general (possibly nonlinear) input-output transformation applied to the transmitted control signal $\mathbf{u}_k \in \mathbb{C}^{N_t \times 1}$, parameterized by the channel matrix $\mathbf{H}_{k+1}^c\in\mathbb{C}^{N_r\times N_t}$. Here, $N_t$ is the number of transmitted control signal components at the remote controller, and $N_r$ is the number of received components at the actuator. The term $\mathbf{n}_k$ denotes additive communication noise.
The channel evolves according to a first-order Gauss-Markov process\cite{lu2019robust}
\begin{align}\label{eq: channel dynamic model general}
    \mathbf{H}_{k+1}^c = \alpha \mathbf{H}_k^c + \sqrt{1 - \alpha^2} \, \mathbf{V}_k,
\end{align}
with initial condition $\mathbf{H}_0^c \sim \mathcal{CN}(\mathbf{0}_{N_r \times N_t}, \mathbf{I}_{N_r})$ and innovation noise $\mathbf{V}_k \sim \mathcal{CN}(\mathbf{0}_{N_r \times N_t}, \mathbf{I}_{N_r})$.

This generic model encompasses a broad class of communication systems beyond OFDM, including
\begin{itemize}
    \item \textbf{MIMO Systems:} $f_{\mathrm{chan}}(\mathbf{H}_{k+1}^c, \mathbf{u}_k) = \mathbf{H}_{k+1}^c \mathbf{u}_k$, where $\mathbf{H}_{k+1}^c \in \mathbb{C}^{N_{r} \times N_{t}}$ denotes the spatial-domain channel matrix.

\item \textbf{OTFS Systems\cite{hadani2017orthogonal}:} For $N_t=N_r=N$, one can write $f_{\mathrm{chan}}(\mathbf{H}^c_{k+1}, \mathbf{u}_k) = \mathbf{F}_{\mathrm{sym}}^{-1} \mathbf{H}_{k+1}^c \mathbf{F}_{\mathrm{sym}} \mathbf{u}_k$,  
where $\mathbf{F}_{\mathrm{sym}}\in\mathbb{C}^{N\times N}$ is the symplectic Fourier transform matrix. 

\item \textbf{AFDM Systems\cite{bemani2023affine}:} For $N_t=N_r=N$, one can write $f_{\mathrm{chan}}(\mathbf{H}_{k+1}^c, \mathbf{u}_k) = \mathbf{G}_{\mathrm{AF}}^{-1} \mathbf{H}_{k+1}^c \mathbf{G}_{\mathrm{AF}} \mathbf{u}_k$,  
where $\mathbf{G}_{\mathrm{AF}} \in \mathbb{C}^{N \times N}$ is the affine Fourier transform matrix. 
\end{itemize}

From a control perspective, we generalize the linear plant model in~\eqref{eq: decision-making model} to the abstract form
\begin{align}\label{eq: general decision-making model}
    \mathbf{x}_{k+1} = f_{\mathrm{dyna}}(\mathbf{x}_k, \widehat{\mathbf{u}}_k) + \mathbf{w}_k,
\end{align}
where $f_{\mathrm{dyna}}(\cdot)$ denotes a (possibly nonlinear) dynamics function that maps the current state $\mathbf{x}_k \in \mathbb{C}^{S \times 1}$ and the received control command $\widehat{\mathbf{u}}_k$ to the next state $\mathbf{x}_{k+1} \in \mathbb{C}^{S \times 1}$. The term $\mathbf{w}_k \in \mathbb{C}^{S\times 1}$ denotes the process noise at time slot $k$, which is assumed to be zero-mean with unit variance, i.e., $\mathbb{E}[\mathbf{w}_k]=\mathbf{0}_{S\times 1}$ and $\mathbb{E}[\mathbf{w}_k \mathbf{w}_k^{H}] = \mathbf{I}_{S}$.

In what follows, we generalize the channel-prediction solution (Problem~\ref{problem: channel_estimation}) and the system control solution (Problem~\ref{problem:decision_making}) to the generic system model in (\ref{eq: general communication model}) and (\ref{eq: general decision-making model}).

\subsection{Solution to Channel Prediction}
In the general nonlinear setting described by~\eqref{eq: general communication model} and~\eqref{eq: general decision-making model}, the objective of the channel prediction policy $\pi^c$ is to infer the latent channel state $\mathbf{H}_{k+1}^c$ from the history $\{\mathbf{x}_0^k, \mathbf{u}_0^{k-1}\}$, as formulated in Problem~\ref{problem: channel_estimation}. While this objective parallels the linear case, nonlinear plant dynamics and nonlinear communication effects introduce implicit and nontrivial dependencies between the observations and the channel state. To address these challenges, we employ KalmanNet~\cite{revach2022kalmannet}, a model-aware neural estimator that exploits available system structure to predict $\mathbf{H}_{k+1}^c$.

\noindent\textbf{KalmanNet structure and online inference.} For consistency with Section~\ref{subsec:channel_estimation_linear_algorithm}, we denote the prior and posterior estimates of the vectorized channel $\mathbf{h}_{k+1}^c$ by $\widehat{\mathbf{h}}(k+1|k) \in \mathbb{C}^{N_{r}N_{t} \times 1}$ and $\widehat{\mathbf{h}}(k|k) \in \mathbb{C}^{N_{r}N_{t} \times 1}$, respectively. The corresponding prior and posterior error covariance matrices are denoted by $\Sigma(k+1|k), \Sigma(k|k) \in \mathbb{S}_+^{N_{r}N_{t}}$. At each time slot $k$, the prediction of $\mathbf{h}_{k+1}^c$ is carried out recursively in three stages during online inference, as described next.

\begin{itemize}
 \item \textbf{Innovation embedding:} A nonlinear innovation vector $\mathbf{y}_{k} \in \mathbb{C}^{N_{dy} \times 1}$ is constructed from the prior channel estimate $\widehat{\mathbf{h}}(k|k-1)$ and the observed state transition $\left\{\mathbf{x}_{k-1}, \mathbf{x}_k\right\}$ under decision input $\mathbf{u}_{k-1}$ through a learnable embedding network, i.e., 
\begin{align}\label{eq:kalman net eq 1}
    \mathbf{y}_{k} = \phi_1(\widehat{\mathbf{h}}(k|k-1), \mathbf{x}_{k-1}, \mathbf{u}_{k-1}, \mathbf{x}_{k}),
\end{align}
    where $\phi_1(\cdot)$ is implemented as a multi-layer perceptron (MLP).

    \item \textbf{Gain network and correction:} The Kalman gain $\mathbf{K}_{k}\in\mathbb{C}^{N_{r}N_{t}\times N_{dy}}$ is produced by two learnable submodules
    \begin{align}
       &\mathbf{s}_{k} = \phi_2(\mathbf{s}_{k-1}, \mathbf{y}_{k}), \label{eq:gru} \\&
        \mathbf{K}_{k} = \phi_3(\mathbf{s}_{k}), \label{eq:mlp}
    \end{align}
    where $\phi_2(\cdot)$ is a recurrent neural network (RNN) that maintains a hidden state $\mathbf{s}_k\in\mathbb{C}^{N_{ds}\times 1}$, and $\phi_3(\cdot)$ is an MLP that maps $\mathbf{s}_k$ to the Kalman gain. The posterior estimate is then updated as
    \begin{align}\label{eq:kalman net eq 2}
        \widehat{\mathbf{h}}(k|k) = \widehat{\mathbf{h}}(k|k-1) + \mathbf{K}_{k}  \mathbf{y}_{k}.
    \end{align}

    \item \textbf{Prediction step:} Using the channel dynamics in~\eqref{eq: channel dynamic model general}, the one-step {lookahead} prior is computed as
    \begin{align}
    \widehat{\mathbf{h}}(k+1|k) = \alpha \widehat{\mathbf{h}}(k|k).
    \end{align}
    
\end{itemize}

For initialization, we set $\widehat{\mathbf{h}}(1|0) = \mathbf{0}_{N_{t}N_{r} \times 1}$ and $\mathbf{s}_{0} = \mathbf{0}_{N_{ds} \times 1}$.

\noindent\textbf {Offline Training.}
To train KalmanNet, we assume offline access to labeled sequences $\{(\mathbf{x}_k, \mathbf{u}_k, \mathbf{x}_{k+1})\}_{k=0}^{K}$ obtained from simulation or logged trajectories. The parameters of $\{\phi_1, \phi_2, \phi_3\}$ are optimized by minimizing the state-prediction error
\begin{align}
 &\mathcal{L}(\left\{\phi_i\right\}_{i=1}^3)\nonumber\\&=\frac{1}{K\!+\!1} \sum_{k=0}^K \| f_{\mathrm{dyna}}(\mathbf{x}_k, f_{\mathrm{chan}}(\widehat{\mathbf{h}}(k+1|k), \mathbf{u}_k))\!-\!\mathbf{x}_{k+1} \|^2,
\end{align}
where $\widehat{\mathbf{h}}(k+1|k)$ is one-step {lookahead} channel prediction produced at time slot $k$ by recursively applying \eqref{eq:kalman net eq 1}--\eqref{eq:kalman net eq 2} to the observed states and control inputs. The overall offline training procedure is summarized in Algorithm~\ref{algorithm:kalman net}.

\begin{algorithm}[H]
\small
\caption{Offline Training of KalmanNet}
\begin{algorithmic}[1]
\State Initialize parameters of $\phi_1$, $\phi_2$, $\phi_3$
\State Set $\widehat{\mathbf{h}}(1|0) \gets \mathbf{0}_{N_{r}N_{t}\times 1}$, $\mathbf{s}_{0} \gets \mathbf{0}_{N_{ds}\times 1}$

\For{epoch $k=1$ to \texttt{MaxEpoch}}
    \For{each training sample $(\mathbf{x}_{k-1}, \mathbf{u}_{k-1}, \mathbf{x}_{k})$}
       
        \State $\mathbf{y}_k \gets \phi_1(\widehat{\mathbf{h}}(k|k-1), \mathbf{x}_{k-1}, \mathbf{u}_{k-1}, \mathbf{x}_{k})$ 
        \State $\mathbf{s}_k \gets \phi_2(\mathbf{s}_{k-1}, \mathbf{y}_k)$ 
        \State $\mathbf{K}_k \gets \phi_3(\mathbf{s}_k)$
        \State $\widehat{\mathbf{h}}(k|k) \gets \widehat{\mathbf{h}}(k|k-1)+ \mathbf{K}_k  \mathbf{y}_k$ 
        \State $\widehat{\mathbf{h}}(k+1|k) \gets \alpha \widehat{\mathbf{h}}(k|k)$
        \State $\widehat{\mathbf{x}}_{k+1} \gets f_{\mathrm{dyna}}(\mathbf{x}_k, f_{\mathrm{chan}}(\widehat{\mathbf{h}}(k+1|k), \mathbf{u}_k))$
        \State Compute loss $\mathcal{L}_k = \| \widehat{\mathbf{x}}_{k+1} - \mathbf{x}_{k+1} \|^2$
    \EndFor
    \State Update $\phi_1, \phi_2, \phi_3$ using $\sum_k \mathcal{L}_k$ via gradient descent
\EndFor
\end{algorithmic}
\label{algorithm:kalman net}
\end{algorithm}

\subsection{Solution to the System Control}
As in the linear OFDM setting in Section~\ref{sec:linear_ofdm}, where the optimal control policy is designed in conjunction with the channel-prediction policy, the solution to Problem~\ref{problem:decision_making} under the nonlinear plant dynamics and communication models in~\eqref{eq: general communication model} and~\eqref{eq: general decision-making model} should likewise incorporate the one-step channel prediction $\widehat{\mathbf{H}}(k+1|k)$ produced by KalmanNet.

Since $\widehat{\mathbf{H}}(k+1|k)$ evolves as a temporally correlated Markov process, the induced control environment is nonstationary. Classical actor-critic methods such as DDPG, which are typically developed for stationary discounted-reward settings, may yield unstable or biased updates in this regime because they do not explicitly account for channel-driven variations in the underlying optimality conditions. This motivates a structure-aware, Markov-modulated RL framework that models the evolution of $\widehat{\mathbf{H}}(k+1|k)$, thereby enabling stable policy learning in temporally varying environments.

\noindent\textbf{RL problem formulation.}
We formulate the problem as a continuous-state, average-cost Markov decision process (MDP) with the following components

\begin{itemize}
\item \textbf{State:} The MDP state at each time slot $k$ is {given by} $s_k = (\mathbf{x}_k, \widehat{\mathbf{H}}(k+1|k))$, where $\mathbf{x}_k$ is the plant state and $\widehat{\mathbf{H}}(k+1|k)$ is the one-step channel prediction produced by KalmanNet.
\item \textbf{Action {or Control Policy}:} The action is the control command $a_k = \mathbf{u}_k$, generated according to a policy $\mathbf{u}_k=\pi^{c}(s_k)$.
\item \textbf{Reward:} The per-stage reward is defined as the negative decision-making cost, $r_k = -c_d(\mathbf{x}_k, \mathbf{u}_k)$.
\item \textbf{Objective:} The objective is to maximize the long-term average reward $\max_{\pi^c} \liminf_{K \to \infty} \frac{1}{K} \sum_{k=0}^{K-1} \mathbb{E}^{\pi^c}[r_k].$
\end{itemize}

The optimal control policy $(\pi^c)^*$ can be obtained by optimizing the  $Q$-function $Q^{\pi^c}(s_k, a_k)$, which satisfies the following \emph{Markov-modulated average-reward Bellman equation}~\cite{bertsekas2012dynamic}
\begin{align}\label{Q-bellman}
&Q^{\pi^c}(s_k, a_k) = r_k - \rho(\widehat{\mathbf{H}}(k+1|k)) \nonumber\\&+ + \mathbb{E}\!\left[ Q^{\pi^{c}}\!\big(s_{k+1},\pi^{c}(s_{k+1})\big)\,\big|\,s_k,a_k \right], 
\end{align}
where $\rho(\widehat{\mathbf{H}}(k+1|k))$ is a channel-dependent bias (average-reward baseline) satisfying $\limsup_{K\rightarrow\infty}\frac{1}{K}\sum_{k=0}^{K-1}\mathbb{E}[\rho(\widehat{\mathbf{H}}(k+1|k))]=\liminf_{K \to \infty} \frac{1}{K} \sum_{k=0}^{K-1} \mathbb{E}^{\pi^c}[r_k].$ 

Compared with the standard discounted Bellman equation, the right-hand side {(RHS)} of~\eqref{Q-bellman} subtracts a time-varying baseline $\rho(\widehat{\mathbf{H}}(k+1|k))$, which centers the reward and avoids divergence in the undiscounted (average-reward) setting.

\noindent\textbf{Function approximation and offline training.} We adopt an actor-critic framework~\cite{ma2025dsac} to learn the optimal policy $\pi^{c,*}$ for the above average-reward RL problem. Specifically, we approximate the unknown components in the Bellman equation~\eqref{Q-bellman} using neural networks as follows

\begin{itemize}
\item \textbf{Actor network} $\pi_{\phi_4}(s_k)$: approximates the policy $\pi^c(s_k)$ and outputs the action $a_k = \pi_{\phi_4}(s_k)$.

\item \textbf{Critic network} $Q_{\phi_5}(s_k,a_k)$: approximates the $Q$-function $Q^{\pi^c}(s_k,a_k)$.

\item \textbf{Average-reward network} $\rho_{\phi_6}(\widehat{\mathbf{H}}(k+1|k))$: approximates the channel-dependent baseline $\rho(\widehat{\mathbf{H}}(k+1|k))$.

\item \textbf{Target networks} $Q_{\phi_5'}(s_k,a_k)$ and $\pi_{\phi_4'}(s_k)$: delayed copies of the critic $Q_{\phi_5}(s_k,a_k)$ and actor $\pi_{\phi_4}(s_k)$ used to compute the temporal-difference (TD) targets. They are updated via Polyak averaging
    \[
    \phi_4' \leftarrow \tau \phi_4 + (1-\tau)\phi_4', \quad \phi_5' \leftarrow \tau \phi_5 + (1-\tau)\phi_5',
    \]
    where $\tau \in (0,1)$ is the update rate.
\end{itemize}

To train these networks, we adopt a temporal-difference (TD) learning procedure consisting of two phases: data collection and training. During data collection, the agent interacts with the environment by 
rolling out trajectories under an initialized policy,  generating transition tuples $(s_k, a_k, r_k, s_{k+1})$, where $s_k = (\mathbf{x}_k, \widehat{\mathbf{H}}(k+1|k))$ is the observed state, $a_k$ is the control input $\mathbf{u}_k$ selected by the policy $\pi_{\phi_4}$, $r_k$ is the instantaneous reward, and $s_{k+1}$ is the next state. 
During training, each tuple $(s_k, a_k, r_k, s_{k+1})$ is used to form the TD error
\begin{align}
&\delta_k = r_k - \rho_{\phi_6}(\widehat{\mathbf{H}}(k+1|k)) + Q_{\phi_5'}(s_{k+1}, \pi_{\phi_4'}(s_{k+1}))\nonumber\\&- Q_{\phi_5}(s_k, a_k). \label{eq:td-error}
\end{align}

The actor $\pi_{\phi_4}(s_k)$, critic $Q_{\phi_5}(s_k,a_k)$, and average-reward network $\rho_{\phi_6}(\widehat{\mathbf{H}}(k+1|k))$ are trained jointly using the TD error {given by \eqref{eq:td-error}}.
This yields the following losses: the critic loss 
$\mathcal{L}_{\mathrm{critic}} = \delta_k^2$ for updating $\phi_5$ (enforcing Bellman consistency), the actor loss 
$\mathcal{L}_{\mathrm{actor}} = - Q_{\phi_5}(s_k, \pi_{\phi_4}(s_k))$ for updating $\phi_4$ (favoring actions with high long-term value); and the baseline loss 
$\mathcal{L}_{\mathrm{avg}} = ( \rho_{\phi_6}(\widehat{\mathbf{H}}(k+1|k)) - \left[ r_k + Q_{\phi_5'}(s_{k+1}, \pi_{\phi_4'}(s_{k+1})) - Q_{\phi_5}(s_k, a_k) ] \right)^2$ for updating $\phi_6$, which encourages $\rho_{\phi_6}(\cdot)$ to track the channel-dependent average-reward baseline.

The complete offline training procedure for the control policy $\pi^{c}{(\cdot)}$ is summarized in Algorithm~\ref{alg:mmddpg}.
\begin{algorithm}
\small
\caption{Offline Training of MM-DDPG}
\begin{algorithmic}[1]
\State \textbf{Input:} Replay buffer $\mathcal{D} = \{(s_k, a_k, r_k, s_{k+1})\}$
\State \textbf{Initialize:} Actor $\pi_{\phi_4}$, critic $Q_{\phi_5}$, baseline $\rho_{\phi_6}$, and target networks $\pi_{\phi_4'}$, $Q_{\phi_5'}$
\vspace{1mm}
\State \textbf{Data Collection:}
\For{each rollout step}
    \State Observe state $s_k = (\mathbf{x}_k, \widehat{\mathbf{H}}(k+1|k))$
    \State Select action with exploration: $a_k = \pi_{\phi_4}(s_k) + \epsilon_k$, with $\epsilon_k \sim$ 
    \State $\mathcal{N}(0, \sigma^2 \mathbf{I}_{N_{t}})$
    \State Execute $a_k$, observe $r_k$ and $s_{k+1}$
    \State Store $(s_k, a_k, r_k, s_{k+1})$ in $\mathcal{D}$
\EndFor
\vspace{1mm}
\State \textbf{Training:}
\For{each training step}
    \State Sample a minibatch $\{(s_k, a_k, r_k, s_{k+1})\} \sim \mathcal{D}$
    \State Compute TD error $\delta_k$ as in Eq.~\eqref{eq:td-error}
    \State Update critic: $\phi_5 \leftarrow \arg\min \mathcal{L}_{\mathrm{critic}}$
    \State Update actor: $\phi_4 \leftarrow \arg\min \mathcal{L}_{\mathrm{actor}}$
    \State Update average reward: $\phi_6 \leftarrow \arg\min \mathcal{L}_{\mathrm{avg}}$
    \State Update target networks via Polyak averaging
\EndFor
\end{algorithmic}
\label{alg:mmddpg}
\end{algorithm}

\noindent\textbf{Online policy execution.} During deployment, the remote controller generates the control action $\mathbf{u}_k$ using the trained actor policy $\pi_{\phi_4}$ based on the current plant state $\mathbf{x}_k$ and the one-step channel prediction $\widehat{\mathbf{H}}(k+1|k)$ provided by KalmanNet, i.e., 
\begin{align}
\mathbf{u}_k = \pi_{\phi_4}(\mathbf{x}_k, \widehat{\mathbf{H}}(k+1|k)).
\end{align}
The resulting control input is then applied to the plant through~\eqref{eq: general decision-making model}.

\section{Numerical Results}
We evaluate the proposed pilot-free control framework in two settings: (i) a linear dynamical system over an OFDM architecture and (ii) a nonlinear dynamical system under a MIMO architecture.

For case (i), the plant evolves according to
\begin{align}
\mathbf{x}_{k+1} = \mathbf{A} \mathbf{x}_k + \mathbf{B}\mathbf{H}_{k+1}^c\mathbf{u}_k+\mathbf{B}\mathbf{n}_k^c+\mathbf{w}_k,
\end{align}
where
\begin{align}
\mathbf{A} = \small \begin{bmatrix}
1.02 & 0.01 & 0 & 0 \\
0 & 0.02 & 0.05  & 0 \\
0 & 0 & 0.33 & 0.02 \\
0.04 & 0 & 0 & 0.21
\end{bmatrix}, 
\mathbf{B} = \small \begin{bmatrix}
0.5 & 0 & 0 & 0 \\
0.1 & 0.6 & 0 & 0 \\
0 & 0 & 0.7 & 0.21 \\
0 & 0 & 0 & 0.8
\end{bmatrix}.\nonumber
\end{align} 
The process noise satisfies $\mathbf{w}_k\sim\mathcal{CN}(\mathbf{0},\mathbf{I}_4)$.
The effective channel matrix is $\mathbf{H}_{k+1}^{c}\in\mathbb{C}^{4\times 4}$, whose nonzero entries evolve according to
\begin{align}\label{eq:H-evolutaion-sim}
h_{i,k+1} = 0.95 h_{i,k} + 0.3v_{i,k},
\end{align}
where $v_{i,k}\sim\mathcal{CN}(0,1)$.

For case (ii), the plant follows the nonlinear dynamics
\begin{align}\label{eq:nonlinear systme dynamics}
\!\!\!\!\mathbf{x}_{k+1} = \mathbf{A} \mathbf{x}_k + \mathbf{B}\tanh(\mathbf{H}_{k+1} \mathbf{u}_k) +\mathbf{B}\tanh(\mathbf{n}_k^c)+ \mathbf{w}_k,
\end{align}
where $\mathbf{H}_k \in \mathbb{C}^{4 \times 3}$ evolves as
\begin{align}
\mathbf{H}_{k+1} = 0.95\mathbf{H}_k + 0.3\mathbf{V}_k,
\end{align}
with $\mathbf{V}_k\sim\mathcal{CN}(\mathbf{0}_{N_r\times N_t}, \mathbf{I}_{N_r})$.

We consider six baseline strategies to benchmark the proposed pilot-free framework for channel prediction. Baseline~1 adopts a least-squares (LS) predictor, where the current channel ${\mathbf{H}}_k^c$ is estimated from the tuple $(\mathbf{x}_k, \mathbf{x}_{k-1}, \mathbf{u}_{k-1})$. The one-step {lookahead} prediction is then obtained by assuming temporal continuity, i.e., ${\mathbf{H}}_{k+1}^c \approx {\mathbf{H}}_k^c$. For the nonlinear case, the term $\tanh(\mathbf{H}_k \mathbf{u}_{k-1})$ is locally linearized to facilitate LS prediction.  
Baseline~2 performs blind channel prediction by first estimating the current channel ${\mathbf{H}}_k^c$ solely from the observed state trajectories, without access to control inputs or supervision. Specifically, a singular value decomposition (SVD) is applied to a window of past state measurements to extract {a} dominant low-dimensional structure. The one-step {lookahead} prediction then follows from a temporal continuity assumption, i.e., ${\mathbf{H}}_{k+1}^c \approx {\mathbf{H}}_k^c$.
Baseline~3 is LS-based, but performs channel prediction only every two time slots and interpolates the intermediate channel gains via temporal averaging. 
Baseline~4 relies on explicitly transmitted pilots, where the pilot matrix is chosen to be unitary ($3\times 3$ for the $4\times 3$ MIMO case and 
$4\times 4$ for the $4\times 4$ OFDM case). It applies LS prediction for linear systems and employs a self-supervised deep convolutional neural network (CNN) to infer ${\mathbf{H}}_{k+1}^c$ in the nonlinear setting.
Baselines~5 and~6 employ the extended Kalman filter (EKF) and the unscented Kalman filter (UKF), respectively, to recursively predict the channel $\mathbf{H}_{k+1}^c$ as a hidden state in the nonlinear system using the tuple $(\mathbf{x}_k, \mathbf{x}_{k-1}, \mathbf{u}_{k-1})$. For the UKF, the sigma point parameters are set to $\alpha = 10^{-3}$, $\beta = 2$, and $\kappa = 0$.

We also benchmark the proposed control policy against five baselines in both the linear and nonlinear settings. Baseline~1 applies the proposed policy (Algorithm~2 or~4) with pilot-aided LS channel prediction in the linear case and EKF-based channel prediction using state-action trajectories in the nonlinear case.
Baseline~2 uses the same control policy but with enhanced channel prediction: interpolation-based prediction in the linear case and a deep CNN-based estimate in the nonlinear case.
Baseline~3 is a classical PID controller that generates control inputs solely from the current plant state $\mathbf{x}_k$, without adaptation to the channel state. 
Baseline~4 employs a fixed LQR controller designed offline under nominal conditions, which likewise ignores the time-varying channel $\mathbf{H}_k^c$ during execution. 
Baseline~5 applies the proposed control policy (Algorithm~4) with UKF-based channel prediction in the nonlinear case. 

The CNN baseline uses a temporal convolutional network to predict the channel from the recent state-action trajectories. Specifically, it takes as input five consecutive tuples $\left\{ (\mathbf{x}_t, \mathbf{u}_t, \mathbf{x}_{t+1}) \right\}_{t=k-5}^{k-1}$. Since $\mathbf{x}_t \in \mathbb{C}^{4\times 1}$, $\mathbf{u}_t \in \mathbb{C}^{3\times 1}$, and 
$\mathbf{x}_{t+1} \in \mathbb{C}^{4\times 1}$, stacking the five most recent tuples yields a $5\times 11$ input tensor. This tensor is processed by two 1-D convolutional layers with $32$ and $64$ filters (kernel size 2, ReLU activation), followed by fully connected layers with $128$ and $64$ neurons. The network outputs a 12-dimensional vector, which is reshaped into a $4 \times 3$ channel prediction $\widehat{\mathbf{H}}(k+1|k)$.
The KalmanNet modules $\phi_1$, $\phi_2$, and $\phi_3$ each consist of two fully connected layers with $128$ and $64$ neurons, respectively, using ReLU activations and skip connections to form a lightweight ResNet-style block. The actor-critic modules $\phi_4$, $\phi_5$, and $\phi_6$ adopt the same architecture.

\subsection{Channel Prediction Performance Analysis for the Linear OFDM System}
\begin{figure}
    \centering
    \includegraphics[height=4cm,width=8cm]{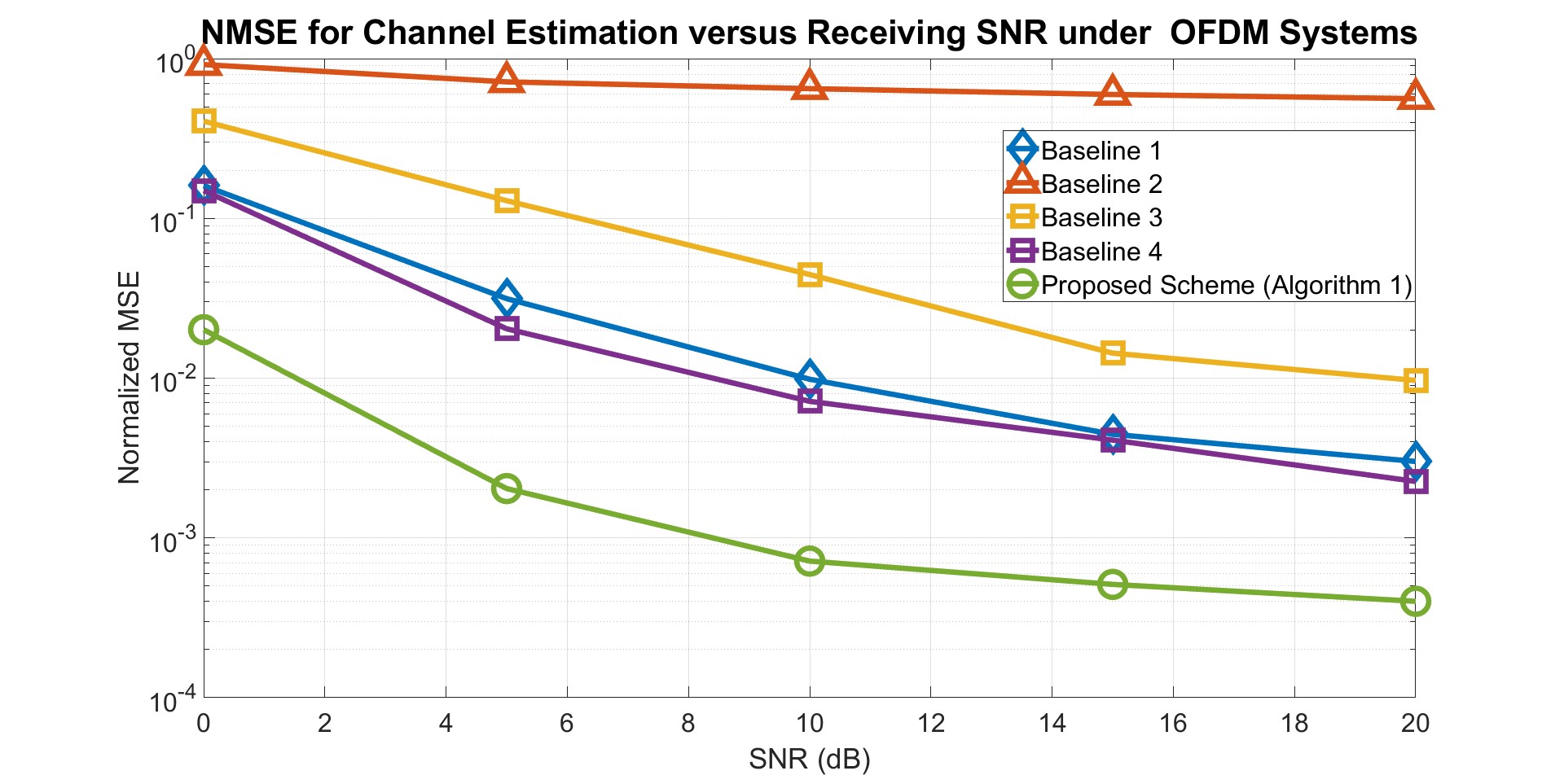}
    \caption{Normalized MSE (NMSE) of channel prediction versus received SNR at the plant for Case~(i) (linear model).}
    \label{fig:linear-ofma-channel-estimation}
\end{figure}

Fig.~\ref{fig:linear-ofma-channel-estimation} shows the NMSE of the one-step channel prediction $\widehat{\mathbf{H}}(k+1|k)$ as a function of the received SNR.
The proposed Algorithm~1 consistently outperforms all baselines across the entire SNR range, achieving approximately one order-of-magnitude reduction in NMSE compared to the baseline schemes. Specifically, Baseline~1 improves with increasing SNR, but does not exploit the temporal correlation in the channel evolution. Baseline~2  exhibits an approximately flat NMSE curve, as it is unsupervised and relies solely on state trajectories. Baseline~3 reduces prediction variance via temporal smoothing, but introduces interpolation bias due to a mismatch with the underlying channel dynamics. In contrast, the proposed method captures both the structural and temporal characteristics of the channel, resulting in substantially improved prediction accuracy.

\subsection{Channel Prediction Performance Analysis for the Nonlinear MIMO System}
\begin{figure}
    \centering
    \includegraphics[height=4cm,width=8cm]{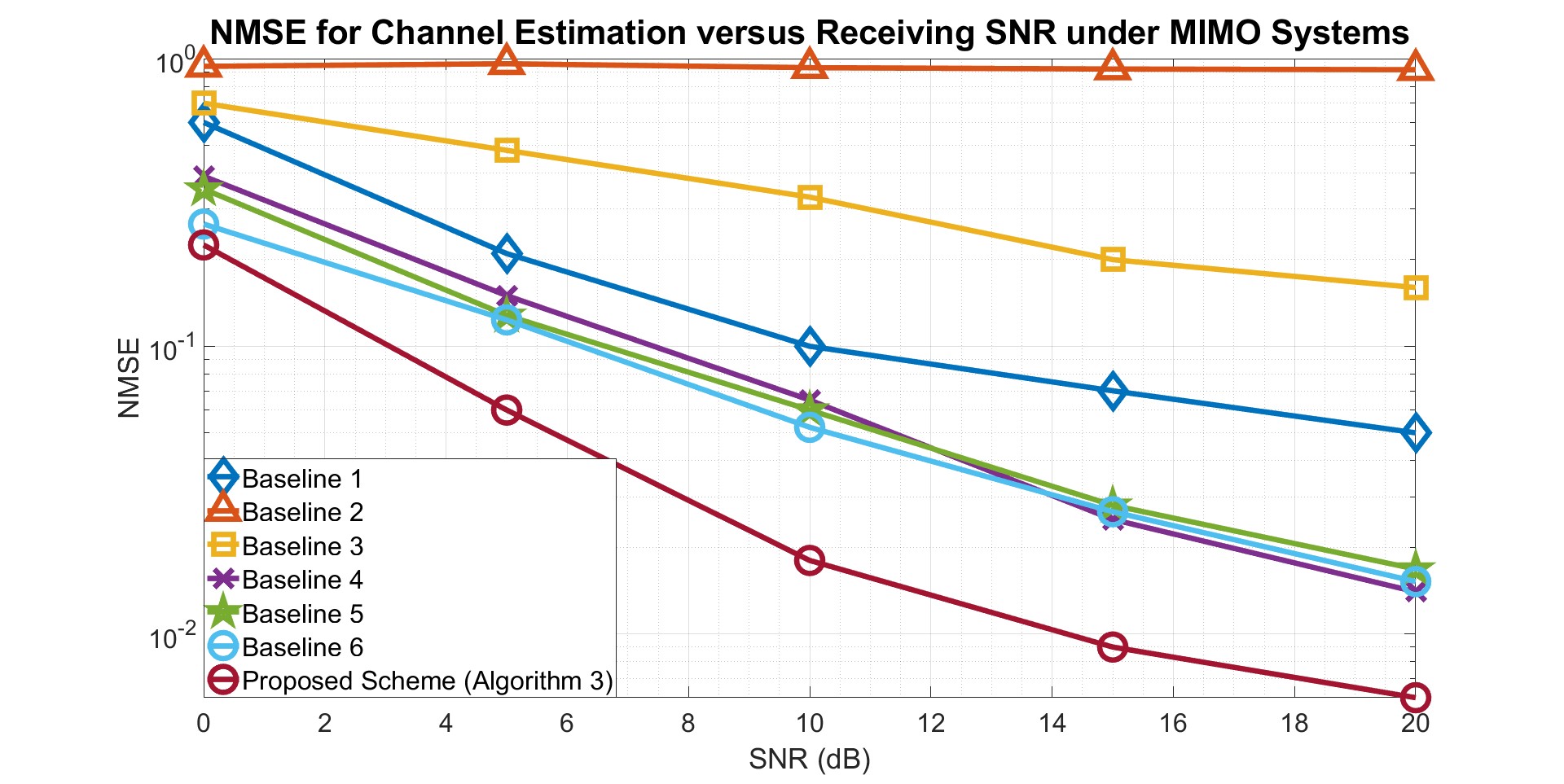}
    \caption{NMSE of channel prediction versus received SNR at the plant for Case~(ii) (nonlinear model).}
    \label{fig:nonlinear-MIMO-channel-estimation}
\end{figure}

Fig.~\ref{fig:nonlinear-MIMO-channel-estimation} shows the normalized MSE (NMSE) of the one-step channel prediction $\widehat{\mathbf{H}}(k+1|k)$ versus received SNR for the nonlinear MIMO setting. The proposed Algorithm~3, which trains the KalmanNet offline, consistently outperforms all baselines, achieving at least 60\% NMSE reduction compared to the baselines. Specifically, Baseline~2 performs poorly due to the absence of control inputs and supervision. Baseline~1 improves with increasing SNR but is limited by mismatch introduced by local linearization of the nonlinear dynamics. Baseline~3 ignores channel evolution, leading to inferior predictions. Baseline~4 captures nonlinear input-output structure but lacks explicit temporal modeling of channel dynamics. Baseline~5 incorporates system dynamics via the EKF, but its performance is constrained by first-order approximations. Baseline~6 mitigates linearization errors through the UKF; however, its performance degrades in the nonlinear MIMO regime when the sigma-point approximation fails to capture the underlying channel statistics. In contrast, the proposed KalmanNet-based predictor combines model structure with data-driven adaptation, enabling accurate and robust channel tracking under nonlinear conditions.

\subsection{Control Performance for the Linear OFDM System}
\begin{figure}
    \centering
    \includegraphics[height=4cm,width=8cm]{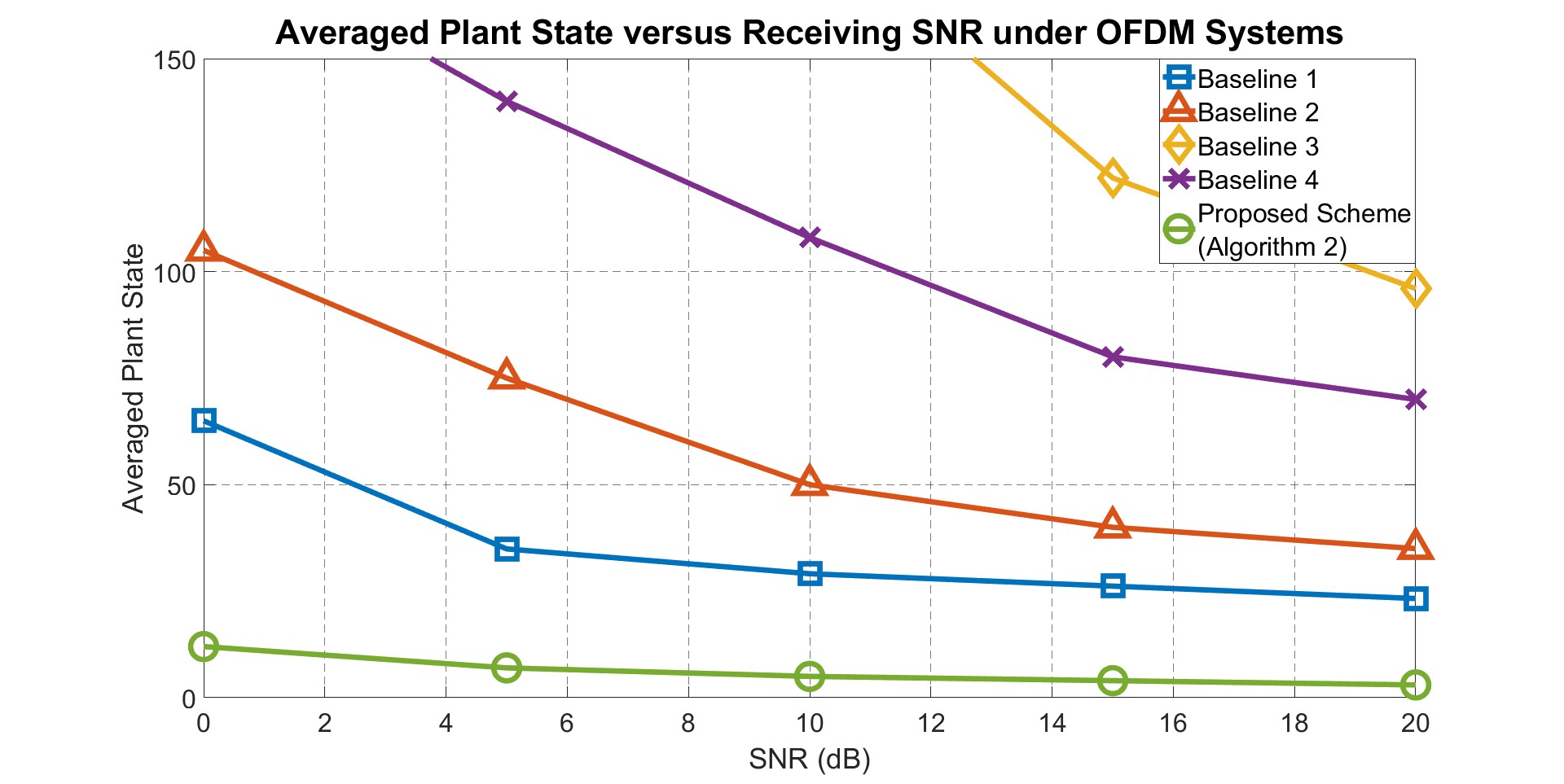}
    \caption{Average plant state energy versus received SNR at the plant for Case~(i).}
    \label{fig:linear-OFDMA-decision-making}
\end{figure}

Fig.~\ref{fig:linear-OFDMA-decision-making} reports the average state energy $\frac{1}{100}\sum_{k=0}^{99}\mathbb{E}[\|\mathbf{x}_k\|^2]$ as a function of the received SNR for the linear OFDM setting. The proposed method achieves the best performance across all SNR values, with at least 80\% reduction in the averaged plant state energy compared to all baselines. Specifically, Baselines~3 and~4, which ignore channel variations, fail to stabilize the plant at low SNR. Baselines~1 and~2 improve as the channel prediction becomes more accurate, but their performance is ultimately limited by channel-prediction error. In contrast, Algorithm~2 combines accurate channel tracking with structure-aware control, yielding robust closed-loop performance.

\subsection{Control Performance for the Nonlinear MIMO System}
\begin{figure}
    \centering
    \includegraphics[height=4cm,width=8cm]{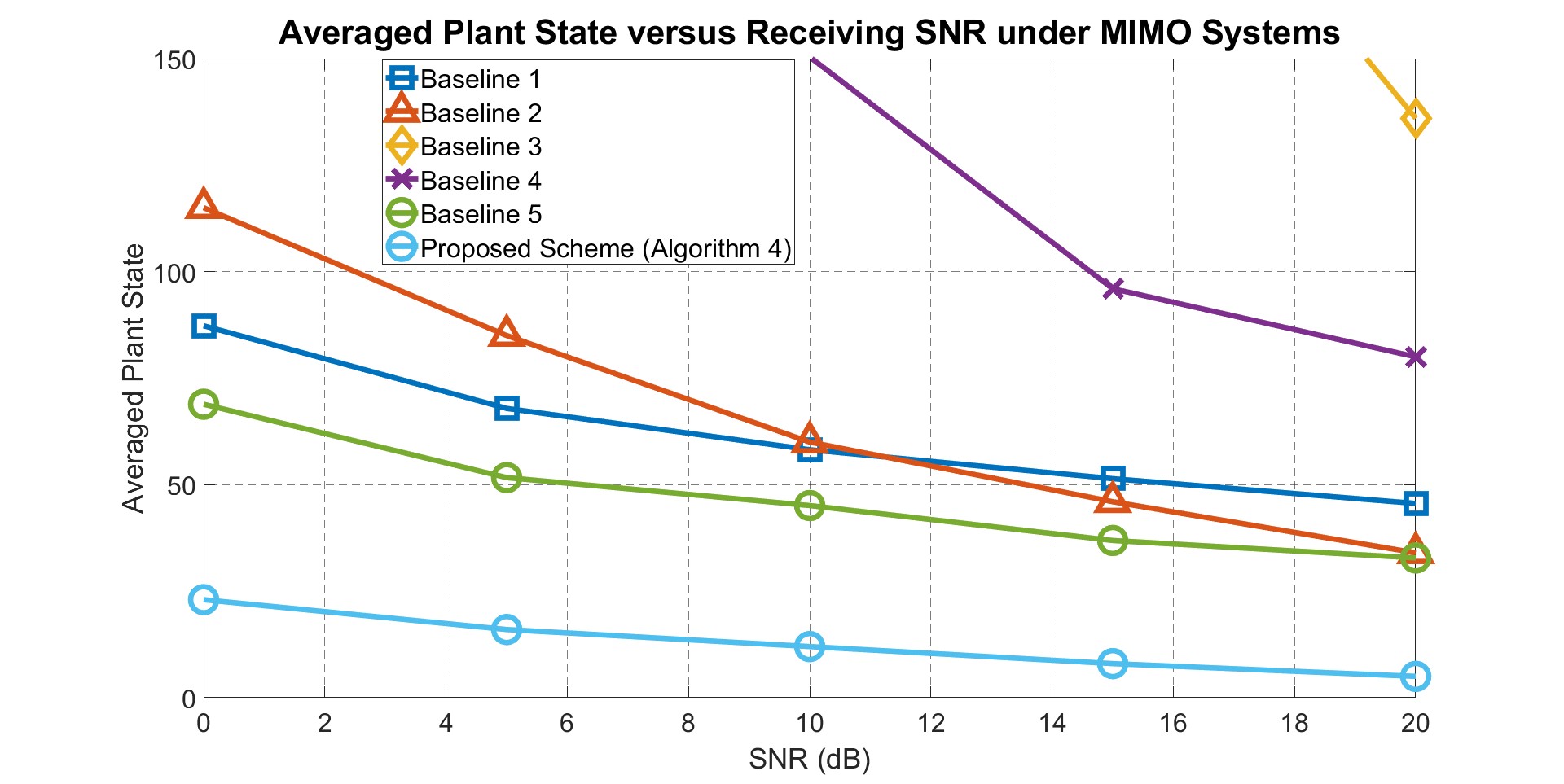}
    \caption{Average plant state energy versus received SNR at the plant for Case~(ii).}
    \label{fig:nonlinear-MIMO-decision-making}
\end{figure}

Fig.~\ref{fig:nonlinear-MIMO-decision-making} plots the average state energy $\frac{1}{100}\sum_{k=0}^{99}\mathbb{E}[\|\mathbf{x}_k\|^2]$ versus the received SNR for the nonlinear MIMO setting. The proposed scheme achieves the best performance across the SNR range, with at least 80\% reduction in the averaged plant state energy compared to all baselines. Specifically, Baselines~3 and~4, which ignore channel dynamics, perform poorly at low SNR. Baseline~1 accounts for channel dynamics but is limited by linearization errors. Baseline~2 captures nonlinear input-output structure but lacks an explicit temporal model of the channel. Baseline~5 uses the UKF to mitigate linearization mismatch; however, its sigma-point approximation may fail to capture the underlying channel statistics in the nonlinear MIMO regime, reducing robustness. In contrast, the proposed method integrates model structure with learning, yielding superior closed-loop stability under time-varying nonlinear conditions.

\subsection{Offline Training Performance of the DNNs}
\begin{figure}
    \centering
    \includegraphics[height=4cm,width=8.5cm]{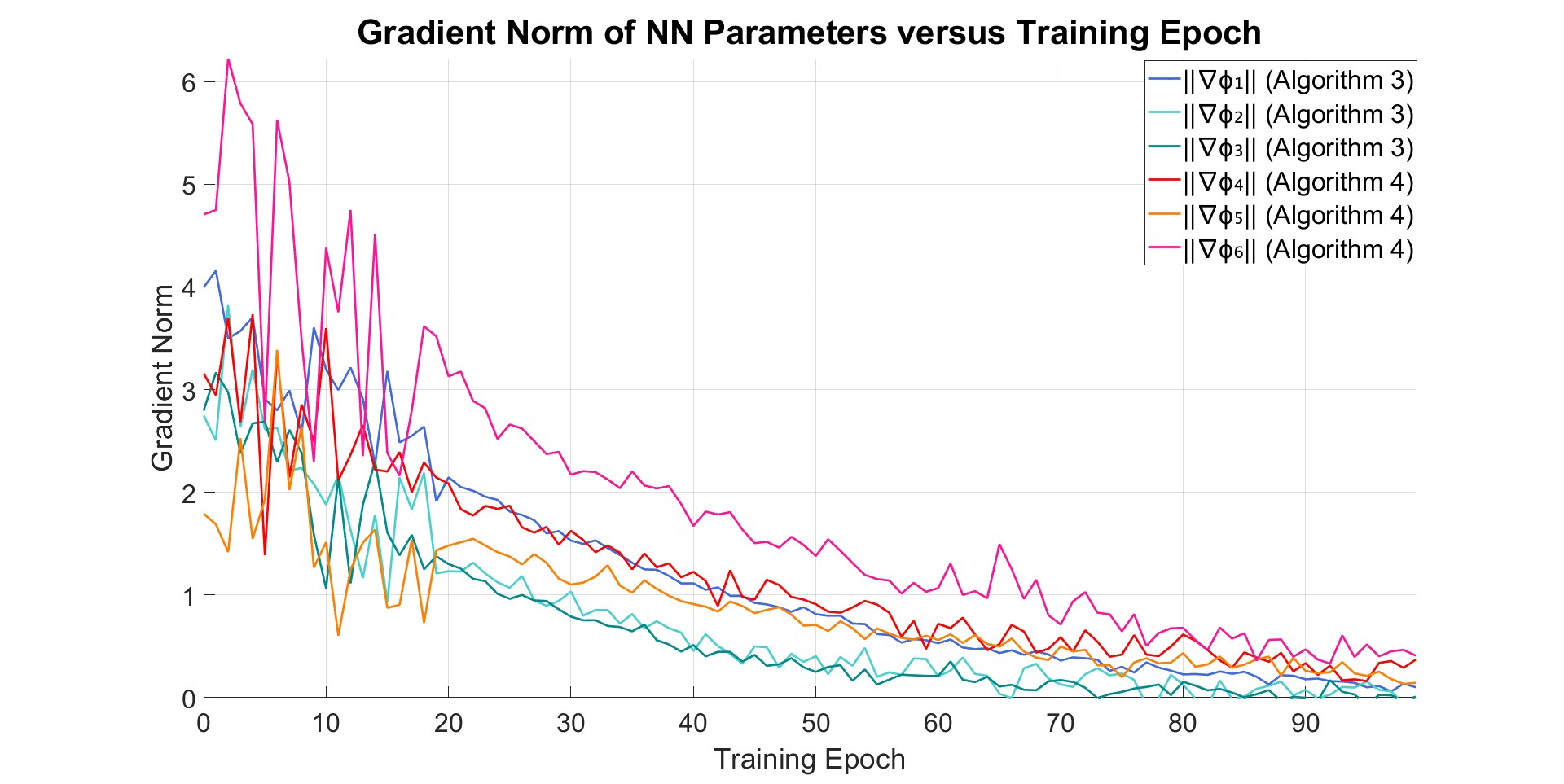}
    \caption{Gradient norm of the network parameters versus training epoch.}
    \label{fig:training curve}
\end{figure}

Fig.~\ref{fig:training curve} plots the gradient norms of the network parameters $\phi_i$, $i\in\{1,2,\ldots,6\}$, during the offline training of Algorithms~3 and~4. The gradient norms decrease steadily over epochs for all networks, suggesting stable optimization and consistent convergence of both the channel-prediction and system control modules.

\subsection{Discussion on Pilot Overhead}
Fig.~\ref{fig:pilot overhead} illustrates the cumulative transmit power consumed by pilot signals under Baseline~4 for both the OFDM and MIMO settings. As shown, the pilot power increases monotonically over time and reaches nearly $25$~dB after $100$ time slots. In contrast, the proposed scheme requires no dedicated pilot transmission while still achieving accurate channel prediction and stable closed-loop performance, as evidenced by the preceding results. These findings highlight the substantial transmit-power savings enabled by the proposed pilot-free design compared with conventional pilot-aided approaches.

\begin{figure}
    \centering
    \includegraphics[height=4cm,width=8cm]{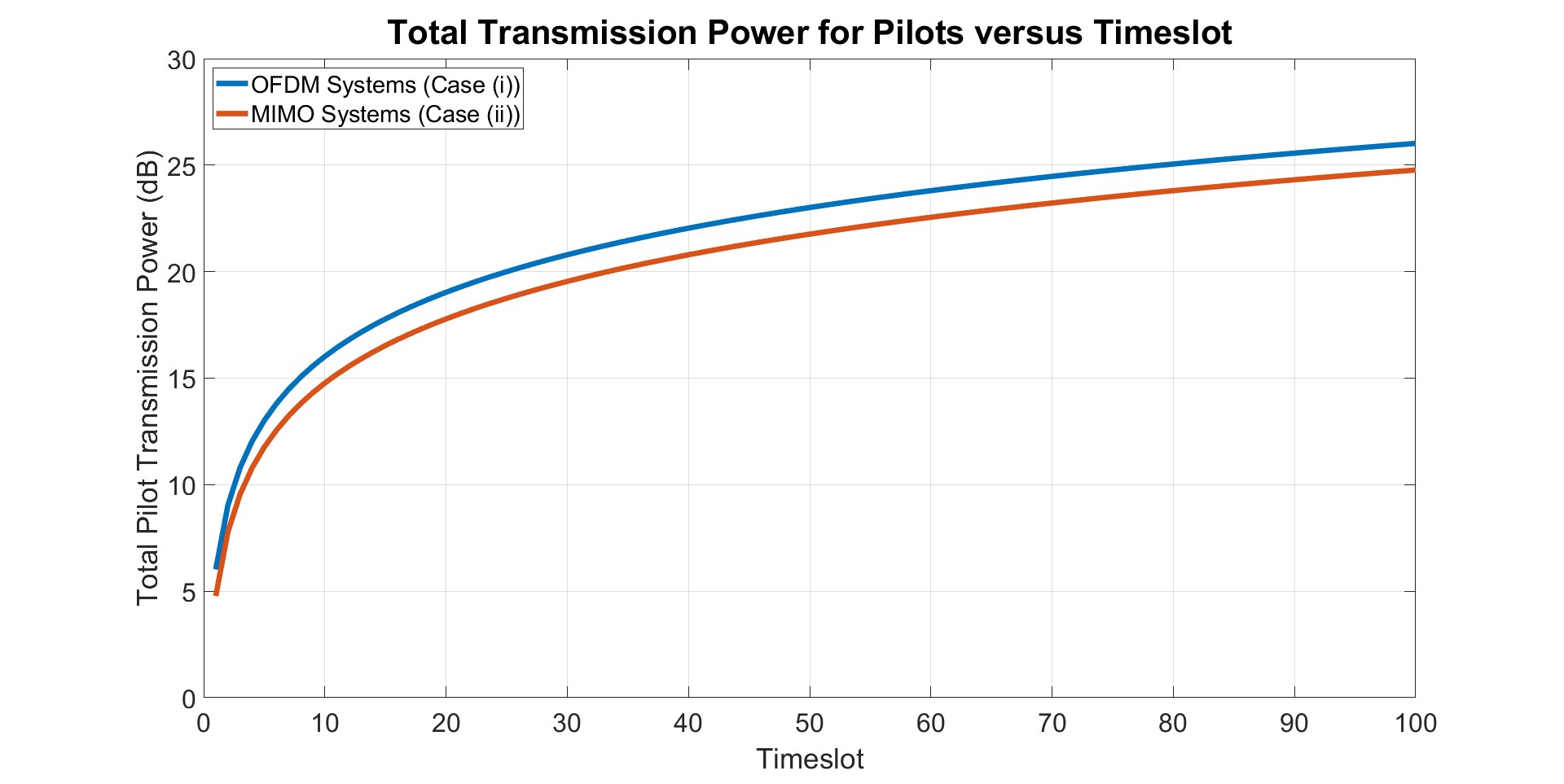}
    \caption{Cumulative transmit power of pilot signals versus time slot.}
    \label{fig:pilot overhead}
\end{figure}

\section{Conclusions and Future Work}
This paper proposed a pilot-free control framework for WNCS operating in time-varying fading channels, where optimal control policy is computed using plant states together with control-aided channel prediction. We first developed a pilot-free design for linear systems over an OFDM architecture and derived structure-aware solutions for both {system} control and channel prediction, along with theoretical guarantees on stability and algorithmic convergence. We then extended the framework to nonlinear systems under a general communication architecture by integrating a KalmanNet-based channel predictor with an MM-DDPG controller. Numerical results demonstrated that the proposed scheme consistently outperforms benchmark methods across a range of SNRs in terms of closed-loop control stability and channel prediction accuracy. Overall, the proposed framework provides a viable avenue for the joint design of system control and communication strategies in future wireless cyber-physical systems.

Future work will consider extensions to partially observed plants and to settings with quantized or rate-limited feedback. Another direction is multi-user operation, where multiple dynamic plants are regulated by one or more remote controllers, introducing challenges in scheduling, interference management, and scalability.


\appendix

\subsection{Proof of Theorem \ref{theorem: bellman equation}}\label{proof:theorem_2}
For $k = K-1, K-2, \ldots, 0$, we consider a finite-horizon truncation of the infinite-horizon average-cost Bellman equation for Problem~\ref{problem:decision_making}, given by
\begin{align}\label{eq: finite-original-dp}
&{\rho}_k(\widehat{\mathbf{H}}(k+1|k)) + {V}_k(\mathbf{x}_k,\widehat{\mathbf{H}}(k+1|k)) = \min_{{\mathbf{u}}_k}\mathbb{E}[{c}_d(\mathbf{x}_k,\mathbf{u}_k) \nonumber\\&+ {V}_{k+1}(\mathbf{x}_{k+1}, \widehat{\mathbf{H}}(k+2|k+1)) \mid \mathbf{x}_0^{k}, \mathbf{u}_0^{k}],
\end{align}
where \( \rho_k(\widehat{\mathbf{H}}(k+1|k)) > 0 \) is the per-stage bias (average-cost baseline)  and \( V_k(\mathbf{x}_k, \widehat{\mathbf{H}}(k+1|k)) \) is the value function parameterized by \( \mathbf{x}_k \) and \( \widehat{\mathbf{H}}(k+1|k)\) satisfying \( V_K(\mathbf{x}_K, \widehat{\mathbf{H}}(K+1|K))=\mathbf{x}_K^H\mathbf{Q}\mathbf{x}_K\) at the terminal stage $k=K$.

To derive the optimality structure in Problem~\ref{problem:decision_making}, we analyze~(\ref{eq: finite-original-dp}) backward {in time}. We start with \( k = K - 1 \), for which {we obtain}
\begin{align}\label{k-1 dp}
    &{\rho}_{K-1}(\widehat{\mathbf{H}}(K|K-1))+{V}_{K-1}(\mathbf{x}_{K-1},\widehat{\mathbf{H}}(K|K-1))\nonumber\\&=\min_{\mathbf{u}_{K-1}}(\mathbf{x}^H_{K-1}\mathbf{Q}\mathbf{x}_{K-1}+\mathbf{u}_{K-1}^H \mathbf{R}\mathbf{u}_{K-1}+\mathbb{E}[(\mathbf{A}\mathbf{x}_{K-1}\nonumber\\&+\mathbf{B}\mathbf{H}_{K}^c\mathbf{u}_{K-1}+\mathbf{B}\mathbf{n}_{K-1}^c+\mathbf{w}_{K-1})^H\mathbf{Q}(\mathbf{A}\mathbf{x}_{K-1}+\mathbf{B}\mathbf{H}_{K}^c\mathbf{u}_{K-1}\nonumber\\&+\mathbf{B}\mathbf{n}_{K-1}^c+\mathbf{w}_{K-1})|\mathbf{x}_0^{K-1},\mathbf{u}_0^{K-1}])\nonumber\\&=\min_{\mathbf{u}_{K-1}}\mathbf{x}^H_{K-1}(\mathbf{Q}+\mathbf{A}^T\mathbf{Q}\mathbf{A})\mathbf{x}_{K-1}+\mathbf{u}_{K-1}^H\mathbf{R}\mathbf{u}_{K-1}+  \mathbf{u}_{K-1}^H\nonumber&\\&
\times (\widehat{\mathbf{H}}(K|K-1))^H\mathbf{B}^T\mathbf{Q}\mathbf{B}\widehat{\mathbf{H}}(K|K-1)\mathbf{u}_{K-1}+\mathbf{u}_{K-1}^H\nonumber\\&\times (\Tr(\mathbf{B}^T\mathbf{Q}\mathbf{B}{\Sigma}(K|K-1))\mathbf{u}_{K-1}+2\mathbf{u}_{K-1}^H(\widehat{\mathbf{H}}(K|K-1))^H\nonumber\\&\mathbf{B}^T\mathbf{Q}\mathbf{A}\mathbf{x}_k+\Tr(\sigma_n^2\mathbf{B}^T\mathbf{Q}\mathbf{B})+\Tr(\mathbf{Q}\mathbf{W}).
\end{align}
This gives
\begin{align}\label{eq:bias_k-1}
    {\rho}_{K-1}(\widehat{\mathbf{H}}(K|K-1))=\Tr(\mathbf{Q}\mathbf{W}+\sigma_n^2\mathbf{B}^T\mathbf{Q}\mathbf{B}),
\end{align}
and the optimal solution \( \bar{\mathbf{u}}_{K-1}^* \) that minimizes the {RHS} of (\ref{eq: finite-original-dp}) and (\ref{k-1 dp}) is given by
\begin{align}\label{eq:control_k-1}
&\bar{\mathbf{u}}_{K-1}^*=-\Bigl( 
\mathbf{R} 
       + (\widehat{\mathbf{H}}(K|K-1))^H \mathbf{B}^T \mathbf{Q}\mathbf{B} \widehat{\mathbf{H}}(K|K-1)\nonumber\\    &+\Tr(\mathbf{B}^T\mathbf{Q}\mathbf{B}{\Sigma}(K|K-1)){\mathbf{I}}_{N}
    \Bigr)^{-1} (\widehat{\mathbf{H}}(K|K-1))^H \nonumber\\&\times \mathbf{B}^T\mathbf{Q}\mathbf{A} \mathbf{x}_{K-1}.
\end{align}
Substituting (\ref{eq:bias_k-1}) and (\ref{eq:control_k-1}) into (\ref{k-1 dp}), the value function \( {V}_{K-1}(\mathbf{x}_{K-1}, \widehat{\mathbf{H}}(K|K-1)) \) is given by
\begin{align}
    &{V}_{K-1}(\mathbf{x}_{K-1},\widehat{\mathbf{H}}(K|K-1))=
    \mathbf{x}_{K-1}{\mathbf{P}}_{K-1}(\widehat{\mathbf{H}}(K|K-1))\mathbf{x}_{K-1},
\end{align}
where the kernel function ${\mathbf{P}}_{K-1}(\widehat{\mathbf{H}}(K|K-1))$ satisfies the recursion
\begin{align}
&{\mathbf{P}}_{K-1}(\widehat{\mathbf{H}}(K|K-1))=\mathbf{Q} + \mathbf{A}^T \mathbf{Q}\mathbf{A} - \mathbf{A}^T\mathbf{Q}\mathbf{B} \widehat{\mathbf{H}}(K|K-1)\nonumber\\&\times \Big( 
 (\widehat{\mathbf{H}}(K|K-1)))^H \mathbf{B}^T \mathbf{Q}\mathbf{B} \widehat{\mathbf{H}}(K|K-1))+ \mathbf{R}+\Tr(\mathbf{B}^T\mathbf{Q}\mathbf{B}\nonumber\\&\times{\Sigma}(K|K-1))\mathbf{I}_{N} \Big)^{-1}
(\widehat{\mathbf{H}}(K|K-1))^H \mathbf{B}^T\mathbf{Q}\mathbf{A}.
\end{align}

Using backward induction, let \( \mathbf{P}_K(\cdot) = \mathbf{Q} \). Then, we obtain that for any \( k = K-1,\ldots,1,0\),
the kernel function \( \mathbf{P}_k(\widehat{\mathbf{H}}(k+1|k)) \) satisfies the following recursion
\begin{align}\label{eq:recur-finite}
&\mathbf{P}_k(\widehat{\mathbf{H}}(k+1|k))
    = \mathbf{Q} + \mathbf{A}^T \mathbf{P}_{k+1}(\alpha \widehat{\mathbf{H}}(k+1|k)) \mathbf{A} - \mathbf{A}^T\nonumber\\& \times\mathbf{P}_{k+1}(\alpha \widehat{\mathbf{H}}(k+1|k)) \mathbf{B} \widehat{\mathbf{H}}(k+1|k) \Big( 
    (\widehat{\mathbf{H}}(k+1|k))^H \mathbf{B}^T\nonumber\\&\times \mathbf{P}_{k+1}(\alpha \widehat{\mathbf{H}}(k+1|k)) \mathbf{B} \widehat{\mathbf{H}}(k+1|k) + \mathbf{R}+\Tr(\mathbf{B}^T\nonumber\\&\times \mathbf{P}_{k+1}(\alpha \widehat{\mathbf{H}}(k+1|k))\mathbf{B}\Sigma(k+1|k))\mathbf{I}_{N} \Big)^{-1}
(\widehat{\mathbf{H}}(k+1|k))^H \mathbf{B}^T\nonumber\\&\times  \mathbf{P}_{k+1}(\alpha \widehat{\mathbf{H}}(k+1|k)) \mathbf{A}.
\end{align}
Furthermore, 
\begin{align}\label{eq:rho_k finite}
&{\rho}_k(\widehat{\mathbf{H}}(k+1|k))=\Tr({\mathbf{P}}_{k+1}(\alpha\widehat{\mathbf{H}}(k+1|k))\mathbf{W})\nonumber\\&+\Tr(\sigma_n^2\mathbf{B}^T{\mathbf{P}}_{k+1}(\alpha\widehat{\mathbf{H}}(k+1|k))\mathbf{B}),
\\&\label{eq:V_k finite}
{V}_k(\mathbf{x}_k,\widehat{\mathbf{H}}(k+1|k))=\mathbf{x}_k^H{\mathbf{P}}_{k}(\widehat{\mathbf{H}}(k+1|k))\mathbf{x}_k,
\end{align}
and the optimal solution $\bar{\mathbf{u}}_k^*$ that minimizes the RHS of (\ref{eq: finite-original-dp}) is given by
\begin{align}\label{eq:u_k finite}
    &\bar{\mathbf{u}}_{k}^*=-\Big( 
        \mathbf{R} 
       + (\widehat{\mathbf{H}}(k+1|k))^H \mathbf{B}^T {\mathbf{P}}_{k+1}(\alpha \widehat{\mathbf{H}}(k+1|k)) \mathbf{B}\nonumber\\& \times \widehat{\mathbf{H}}(k+1|k)+\Tr(\mathbf{B}^T\mathbf{P}_{k+1}(\alpha \widehat{\mathbf{H}}(k+1|k))\mathbf{B}{\Sigma}(k+1|k))\mathbf{I}_{N}
    \Big)^{-1}\nonumber\\&\times   (\widehat{\mathbf{H}}(k+1|k))^H \mathbf{B}^T\mathbf{P}_{k+1}(\alpha \widehat{\mathbf{H}}(k+1|k)) \mathbf{A} \mathbf{x}_k.
\end{align}

We reverse the time index in (\ref{eq:recur-finite}) and denote the reversed time index by $i=K-k-1, \forall 0\leq k\leq K-1.$ Let $\mathbf{P}_{-1}(\cdot)=\mathbf{Q}$. It gives
\begin{align}\label{eq:recur-forward-clean}
&\mathbf{P}_{i+1}(\widehat{\mathbf H}(i+1|i))
= \mathbf Q
 + \mathbf A^{T}\mathbf P_i(\alpha \widehat{\mathbf H}(i+1|i))\mathbf A  \nonumber\\&- \mathbf A^{T}\mathbf P_i(\alpha \widehat{\mathbf H}(i+1|i))\mathbf B\,\widehat{\mathbf H}(i+1|i)
\Big(\widehat{\mathbf H}(i+1|i)^{H}\mathbf B^{T}\nonumber\\&\times \mathbf P_i(\alpha \widehat{\mathbf H}(i+1|i))\mathbf B\,\widehat{\mathbf H}(i+1|i)
      + \mathbf R
      + \Tr\!\big(\mathbf B^{T}\mathbf P_i(\alpha\widehat{\mathbf H}(i+1|i))\nonumber\\&\times \mathbf B\,\Sigma(i+1|i)\big)\mathbf I_{N}\Big)^{-1}
\widehat{\mathbf H}(i+1|i)^{H}\mathbf B^{T}\mathbf P_i(\alpha \widehat{\mathbf H}(i+1|i))\mathbf A,  i\ge 0.
\end{align}
By replacing $\mathbf{P}_i$ and $\mathbf{P}_{i+1}$ in \cite[Appendix~B]{cai2022online} with $\mathbf{P}_i(\alpha\widehat{\mathbf{H}}(i{+}1|i))$ and $\mathbf{P}_{i+1}(\widehat{\mathbf{H}}(i{+}1|i))$, respectively, and following the same proof steps {therein}, we can show that the reversed iteration in~(\ref{eq:recur-forward-clean}), under the sufficient conditions stated in Theorem~2, converges to $\mathbf{P}(\widehat{\mathbf{H}}(i{+}1|i))$ as $i \to \infty$. 
Moreover, note that $\rho_k(\widehat{\mathbf{H}}(k+1|k))$, $V_k(\mathbf{x}_k,\widehat{\mathbf{H}}(k{+}1|k))$, and $\bar{\mathbf{u}}_k^*$ in~(\ref{eq:rho_k finite}), (\ref{eq:V_k finite}), and~(\ref{eq:u_k finite}), respectively, are all parameterized by $\mathbf{P}_{k}(\cdot)$ or $\mathbf{P}_{k+1}(\cdot)$. According to \cite[Theorem~6.9]{lin2017estimation}, the convergence of the reversed iteration in~(\ref{eq:recur-forward-clean}) to $\mathbf{P}(\widehat{\mathbf{H}}(i{+}1|i))$ implies that the quantities $\rho(\widehat{\mathbf{H}}(k{+}1|k))$, $V(\mathbf{x}_k,\widehat{\mathbf{H}}(k{+}1|k))$, and ${\mathbf{u}}_k^*$ in Theorem~2 are characterized by the RHS of~(\ref{eq:rho_k finite}), (\ref{eq:V_k finite}), and~(\ref{eq:u_k finite}), respectively, by replacing $\mathbf{P}_k(\cdot)$ or $\mathbf{P}_{k+1}(\cdot)$ with $\mathbf{P}(\cdot)$. 
This completes the proof.

\subsection{Proof of Theorem 3}\label{proof:theorem_3}
We use a Lyapunov approach to establish mean-square stability of the closed-loop dynamic plant. Specifically, we consider the channel-conditioned quadratic Lyapunov function
\begin{align}\label{eq:lyapunov function}
    \mathcal{L}(\mathbf{x}_k,\widehat{\mathbf{H}}(k+1|k)) = \mathbf{x}^H_k{\mathbf{P}}(\widehat{\mathbf{H}}(k+1|k))\mathbf{x}_k.
\end{align}
Let $\mathbf{u}_k=\mathbf{u}_k^*$ denote the optimal control action applied at time $k$. 
The conditional Lyapunov drift is 
\begin{align}\label{eq:ly-condi-bound}
    &\mathcal{D}(\mathcal{L}(\mathbf{x}_k,\widehat{\mathbf{H}}(k+1|k)))=\mathbb{E}[ \mathcal{L}(\mathbf{x}_{k+1},\alpha\widehat{\mathbf{H}}(k+2|k+1))\nonumber\\&-    \mathcal{L}(\mathbf{x}_k,\widehat{\mathbf{H}}(k+1|k))|\mathbf{x}_k,\widehat{\mathbf{H}}(k+1|k)]\nonumber\\ &=\mathbb{E}[\Tr(\mathbf{x}_{k+1}^H{\mathbf{P}}(\alpha\widehat{\mathbf{H}}(k+1|k))\mathbf{x}_{k+1})-\Tr(\mathbf{x}_{k}^H{\mathbf{P}}(\widehat{\mathbf{H}}(k+1|k))\nonumber\\&\times \mathbf{x}_{k})|\mathbf{x}_k,\widehat{\mathbf{H}}(k+1|k)]
\nonumber\\&=\mathbb{E}[\Tr( (\mathbf{A}\mathbf{x}_k+\mathbf{B}\widehat{\mathbf{H}}(k+1|k)\mathbf{u}_k^*+\mathbf{B}(\mathbf{H}_{k+1}^c-\widehat{\mathbf{H}}(k+1|k))\nonumber\\&\times \mathbf{u}_k^*+ 
\mathbf{B}\mathbf{n}_k^c+\mathbf{w}_k)^H{\mathbf{P}}(\alpha\widehat{\mathbf{H}}(k+1|k))(\mathbf{A}\mathbf{x}_k+\mathbf{B}\widehat{\mathbf{H}}(k+1|k)\nonumber\\&\times \mathbf{u}_k^*+\mathbf{B}(\mathbf{H}_{k+1}^c-\widehat{\mathbf{H}}(k+1|k))\mathbf{u}_k^*+ 
\mathbf{B}\mathbf{n}_k^c+\mathbf{w}_k)\nonumber\\&-\Tr(\mathbf{x}_k^H{\mathbf{P}}(\widehat{\mathbf{H}}(k+1|k))\mathbf{x}_k))|\mathbf{x}_k,\widehat{\mathbf{H}}(k+1|k)]\nonumber\\&\leq \mathbb{E}[\Tr(\mathbf{W}{\mathbf{P}}(\alpha\widehat{\mathbf{H}}(k+1|k)))+\sigma_n^2\Tr(\mathbf{B}^T{\mathbf{P}}(\alpha\widehat{\mathbf{H}}(k+1|k))\nonumber\\&\times \mathbf{B})|\widehat{\mathbf{H}}(k+1|k)]-c_1(\widehat{\mathbf{H}}(k+1|k))\Tr(\mathbf{x}_k^H\mathbf{x}_k),
\end{align}
where $c_1(\widehat{\mathbf{H}}(k+1|k)) \in (0, \infty)$ is a channel-dependent coercivity constant, {obtained} from the stabilizing property of the closed-loop gain under the sufficient conditions of Theorem 2.

By taking expectations over $\widehat{\mathbf{H}}(k+1|k)$ and $\mathbf{x}_k$ in (\ref{eq:ly-condi-bound}), then sum up both sides from $k=0$ to $k=K-1$, and taking the average over $K$ and let $K\rightarrow\infty$, gives
\begin{align}
&\limsup_{K\rightarrow\infty}\frac{1}{K}\sum_{k=0}^{K-1}\mathbb{E}[\Tr(\mathbf{x}_k^H\mathbf{x}_k)]<\nonumber\\&\limsup_{K\rightarrow\infty}\frac{\sum_{k=0}^{K-1}\mathbb{E}[\Tr(\mathbf{W}\bar{\mathbf{P}}(\alpha\widehat{\mathbf{H}}(k+1|k)))]}{c_2K}\nonumber\\&+\frac{\sum_{k=0}^{K-1}\mathbb{E}[\sigma_n^2\Tr(\mathbf{B}^T\bar{\mathbf{P}}(\alpha\widehat{\mathbf{H}}(k+1|k))\mathbf{B})]}{c_2K}<\infty,
\end{align}
where $c_2=\min_{\widehat{\mathbf{H}}(k+1|k)} c_1(\widehat{\mathbf{H}}(k+1|k))\in (0,\infty)$. Upon observing that $\limsup_{K\rightarrow\infty}\frac{1}{K}\sum_{k=0}^{K-1}\mathbb{E}[\Tr(\mathbf{W}{\mathbf{P}}(\alpha$ $\widehat{\mathbf{H}}(k+1|k)))+\sigma_n^2\Tr(\mathbf{B}^T{\mathbf{P}}(\alpha\widehat{\mathbf{H}}(k+1|k))\mathbf{B})]$ is monotonically nondecreasing {with respect to (w.r.t.)} $\limsup_{K\rightarrow\infty}\frac{1}{K}\sum_{k=0}^{K-1}\mathbb{E}[\Tr(\Sigma(k+1|k))]$, we can deduce that (\ref{eq:ly-condi-bound}) also gives that $\limsup_{K\rightarrow\infty}\frac{1}{K}\sum_{k=0}^{K-1}\mathbb{E}[\Tr(\mathbf{x}_k^H\mathbf{x}_k)]$ is monotonically nondecreasing w.r.t. $\limsup_{K\rightarrow\infty}\frac{1}{K}\sum_{k=0}^{K-1}\mathbb{E}[\Tr(\Sigma(k+1|k))]$.  This completes the proof.

\subsection{Proof of Theorem 4}\label{proof:theorem 4}
To simplify the notation, {in the sequel}, we write $\left\{\cdot\right\}_{\ell=1}^{L}$ simply as $\left\{\cdot\right\}$.

\emph{Sufficiency using condition (1):}
For any  $\bar{\mathbf{P}}_\ell$, define the ``effective'' input penalty
\begin{align}
\mathbf{R}_{\rm eff}( \bar{\mathbf{P}}_\ell){\triangleq}\mathbf{R}+\Tr(\mathbf{B}^T\bar{\mathbf{P}}_{\ell'}\mathbf{B}\bar{\Sigma})\mathbf{I}_N
\ \succeq\ \mathbf{R}\succ \mathbf{0}. 
\end{align}
Given any fixed policy $\mathbf{K}=\left\{\mathbf{K}_\ell\right\}$, the associated infinite-horizon quadratic cost of the {Markov jump linear system} is characterized by the coupled Lyapunov equations  
\begin{align}\label{eq:eval}
&\bar{\mathbf{P}}_\ell(\mathbf{K})=\mathbf{Q}+\mathbf{K}_\ell^H\mathbf{R}_{\rm eff}(\bar{\mathbf{P}}_\ell)\mathbf{K}_\ell
\nonumber\\&+(\mathbf{A}+\mathbf{B}\widehat{\mathbf{H}}_\ell \mathbf{K}_\ell)^H\bar{\mathbf{P}}_{\ell'}(\mathbf{K})(\mathbf{A}+\mathbf{B}\widehat {\mathbf{H}}_\ell  \mathbf{K}_\ell).
\end{align}
For each $\mathbf {K}$ that renders all closed-loop matrices $\mathbf{A}+\mathbf{B}\widehat{\mathbf{H}}_\ell \mathbf{K}_\ell$ Schur, \eqref{eq:eval} admits a unique positive definite solution $\bar{\mathbf {P}}(\mathbf{K})=\left\{\bar{\mathbf{P}}_\ell(\mathbf{K})\succ  \mathbf{0}\right\}$.  
In particular, by condition (1), there exists a common stabilizing feedback $\mathbf{F}$ such that $\mathbf{A}+\mathbf{B}\widehat{\mathbf{H}}_\ell \mathbf{F}$ is Schur for all $\ell$. Substituting $\mathbf{K}_\ell \equiv \mathbf{F}$ into \eqref{eq:eval} yields the initial solution $\bar{\mathbf{P}}^{(0)}=\bar{\mathbf {P}}(\mathbf{F})$,
which is uniquely determined by the Neumann series under Schur stability.

For a given $\bar{\mathbf{P}}^{(k)}=\left\{\bar{\mathbf{P}}_\ell^{(k)}\right\}$, 
define the improved control law by solving the quadratic minimization
\begin{align}\label{eq:improve}
&\mathbf{K}_\ell^{(k+1)}=\arg\min_{\mathbf{K}}\{
\mathbf{K}^H\mathbf{R}_{\rm eff}(\bar{\mathbf{P}}_\ell^{(k)})\mathbf{K}
+(\mathbf{A}+\mathbf{B}\widehat{\mathbf{H}}_\ell \mathbf{K})^H\bar{\mathbf{P}}_{\ell'}^{(k)}\nonumber\\&
\times (\mathbf{A}+\mathbf{B}\widehat{\mathbf{H}}_\ell \mathbf{K})
\}\nonumber\\&=(\mathbf{R}_{\rm eff}(\bar{\mathbf{P}}_\ell^{(k)})
+\widehat{\mathbf{H}}_\ell^H\mathbf{B}^T\bar{\mathbf{P}}_{\ell'}^{(k)}\mathbf{B}\widehat{\mathbf{H}}_\ell)^{-1}
\widehat{\mathbf{H}}_\ell^H\mathbf{B}^T\bar{\mathbf{P}}_{\ell'}^{(k)}\mathbf{A}.
\end{align}
We have
\begin{align}\label{eq:quadratic-ineq}
&\mathbf{Q}+(\mathbf{A}+\mathbf{B}\widehat{\mathbf{H}}_\ell \mathbf{K}_\ell^{(k+1)})^H
\bar{\mathbf{P}}_{\ell'}^{(k)}(\mathbf{A}+\mathbf{B}\widehat{\mathbf{H}}_\ell \mathbf{K}_\ell^{(k+1)})
\nonumber\\&+({\mathbf{K}_\ell^{(k+1)}})^H \mathbf{R}_{\rm eff}(\bar{\mathbf{P}}_\ell^{(k)})\mathbf{K}_\ell^{(k+1)}\preceq 
\nonumber\\
&
\mathbf{Q}+(\mathbf{A}+\mathbf{B}\widehat{\mathbf{H}}_\ell \mathbf{K}_\ell^{(k)})^H
\bar{\mathbf{P}}_{\ell'}^{(k)}(\mathbf{A}+\mathbf{B}\widehat{\mathbf{H}}_\ell \mathbf{K}_\ell^{(k)})\nonumber\\&
+({\mathbf{K}_\ell^{(k)}})^H\mathbf{R}_{\rm eff}(\bar{\mathbf{P}}_\ell^{(k)})\mathbf{K}_\ell^{(k)} .
\end{align}

Given the improved control law $\mathbf{K}^{(k+1)}$, we perform a  policy evaluation by solving
\begin{align}\label{eq:eval-k+1}
&\bar{\mathbf{P}}_\ell^{(k+1)}
=\mathbf{Q}+({\mathbf{K}_\ell^{(k+1)}})^H\mathbf{R}_{\rm eff}(\bar{\mathbf{P}}_\ell^{(k+1)})\mathbf{K}_\ell^{(k+1)}
\nonumber\\&+(\mathbf{A}+\mathbf{B}\widehat{\mathbf{H}}_\ell \mathbf{K}_\ell^{(k+1)})^H
\bar{\mathbf{P}}_{\ell'}^{(k+1)}(\mathbf{A}+\mathbf{B}\widehat{\mathbf{H}}_\ell \mathbf{K}_\ell^{(k+1)}).
\end{align}
By inequality \eqref{eq:quadratic-ineq}, the sequence of cost matrices is monotone decreasing: $\bar{\mathbf{P}}^{(k+1)}\ \preceq\ \bar{\mathbf{P}}^{(k)}, \qquad k=0,1,2,\ldots$
Since $\bar{\mathbf{P}}^{(k)}\succ \mathbf{0}$, the sequence $\{\bar{\mathbf{P}}^{(k)}\}$ converges to a fixed point $\bar{\mathbf{P}}^{+}=\lim_{k\to\infty}\bar{\mathbf{P}}^{(k)}\succ \mathbf{0}$
that satisfies (\ref{eq: discrete riccati}).\\
Define the limiting gain $\mathbf{K}_\ell^+$ as
\begin{align}
\!\!\!\!\mathbf{K}_\ell^+
{\triangleq}(\mathbf{R}_{\rm eff}(\bar{\mathbf{P}}_\ell^+)
+\widehat{\mathbf{H}}_\ell^H\mathbf{B}^T\bar{\mathbf{P}}_{\ell'}^+\mathbf{B}\widehat{\mathbf{H}}_\ell)^{-1}
\widehat{\mathbf{H}}_\ell^H\mathbf{B}^T\bar{\mathbf{P}}_{\ell'}^+\mathbf{A}.
\end{align}
By completing the square in \eqref{eq:quadratic-ineq}, we obtain
\begin{align}
&\bar{\mathbf{P}}_{\ell'}^{(k)}
-(\mathbf{A}+\mathbf{B}\widehat{\mathbf{H}}_\ell \mathbf{K}_\ell^{(k+1)})^{\!H}
\bar{\mathbf{P}}_{\ell'}^{(k)}
(\mathbf{A}+\mathbf{B}\widehat{\mathbf{H}}_\ell \mathbf{K}_\ell^{(k+1)}) \nonumber\\
&=
(\mathbf{K}_\ell^{(k)}-\mathbf{K}_\ell^{(k+1)})^{\!H}\,
\mathbf{R}_{\rm eff}(\bar{\mathbf{P}}_\ell^{(k)})\,
(\mathbf{K}_\ell^{(k)}-\mathbf{K}_\ell^{(k+1)})\ \succeq\ \mathbf{0}. 
\end{align}
Hence, for each $\ell$ and $k$, the closed-loop matrix
$\mathbf{A}_\ell^{(k+1)}=\mathbf{A}+\mathbf{B}\widehat{\mathbf{H}}_\ell \mathbf{K}_\ell^{(k+1)}$
admits a  Lyapunov inequality of the form
$\bar{\mathbf{P}}_{\ell'}^{(k)}-(\mathbf{A}_\ell^{(k+1)})^H\bar{\mathbf{P}}_{\ell'}^{(k)}\mathbf{A}_\ell^{(k+1)}\succeq \mathbf{0}$,
which implies $\rho(\mathbf{A}_\ell^{(k+1)})<1$ (Schur).
This implies that
the limit closed loops
$\mathbf{A}_\ell^{+}=\mathbf{A}+\mathbf{B}\widehat{\mathbf{H}}_\ell \mathbf{K}_\ell^{+}$
are also Schur for all $\ell$.
Therefore, $\bar{\mathbf{P}}^{+}$ is a {stabilizing} solution of \eqref{eq: discrete riccati}.

To prove the uniqueness of the stabilizing solution,  
suppose that $\bar{\mathbf{P}}$ and $\bar{\mathbf{S}}$ are two solutions of \eqref{eq: discrete riccati},  
with stabilizing feedbacks $\mathbf{K}_\ell(\bar{\mathbf{P}})$ and $\mathbf{K}_\ell(\bar{\mathbf{S}})$.  
That is, for all $\ell$,  
\begin{align}
\mathbf{A}_\ell(\bar{\mathbf{P}})&=\mathbf{A}+\mathbf{B}\widehat{\mathbf{H}}_\ell \mathbf{K}_\ell(\bar{\mathbf{P}}), \nonumber\\
\mathbf{A}_\ell(\bar{\mathbf{S}})&=\mathbf{A}+\mathbf{B}\widehat{\mathbf{H}}_\ell \mathbf{K}_\ell(\bar{\mathbf{S}})
\end{align}
are Schur matrices.  

Define the blockwise difference 
\begin{align}
\Delta_\ell {\triangleq} \bar{\mathbf{P}}_\ell - \bar{\mathbf{S}}_\ell, 
\qquad 
\boldsymbol{\Delta}= (\Delta_\ell)_{\ell},
\end{align}
and set $\bar{\mathbf{A}}_\ell = \mathbf{A}_\ell(\bar{\mathbf{S}}) 
= \mathbf{A}+\mathbf{B}\widehat{\mathbf{H}}_\ell \mathbf{K}_\ell(\bar{\mathbf{S}})$.  
Subtracting the two Riccati equations and applying a completion--of--squares argument yields, for each $\ell$,
\begin{align}
&\Delta_\ell
-\bar{\mathbf{A}}_\ell^{\!H}\,\Delta_{\ell'}\,\bar{\mathbf{A}}_\ell
=(\mathbf{K}_\ell(\bar{\mathbf{P}})-\mathbf{K}_\ell(\bar{\mathbf{S}}))^{H}
\mathbf{R}_{\rm eff}(\bar{\mathbf{S}}_\ell)\nonumber\\&\times 
(\mathbf{K}_\ell(\bar{\mathbf{P}})-\mathbf{K}_\ell(\bar{\mathbf{S}}))
= \mathbf{W}_\ell \succeq \mathbf{0}.
\label{eq:11}
\end{align}
We introduce the Lyapunov operator $[\widetilde{\mathcal{L}}(\mathbf{Y})]_\ell 
= \bar{\mathbf{A}}_\ell^{H}\,\mathbf{Y}_{\ell'}\,\bar{\mathbf{A}}_\ell.$
Since every $\bar{\mathbf{A}}_\ell$ is Schur, the operator $\widetilde{\mathcal{L}}$ has spectral radius 
$r_\sigma(\widetilde{\mathcal{L}})<1$.  
Collecting \eqref{eq:11} over all $\ell$ gives
\begin{align}
\boldsymbol{\Delta}-\widetilde{\mathcal{L}}(\boldsymbol{\Delta})=\mathbf{W}, 
\qquad \mathbf{W}=(\mathbf{W}_\ell)_\ell \succeq \mathbf{0}.
\end{align}
By the Neumann series expansion of $(\mathbf{I}-\widetilde{\mathcal{L}})^{-1}$, the solution is $=(\mathbf{I}-\widetilde{\mathcal{L}})^{-1}\mathbf{W}
=\sum_{r=0}^{\infty}\widetilde{\mathcal{L}}^{\,r}(\mathbf{W})\ \succeq\ \mathbf{0}.$

If we interchange the roles of $\bar{\mathbf{P}}$ and $\bar{\mathbf{S}}$, the same argument shows that 
$-\boldsymbol{\Delta}\succeq \mathbf{0}$.  
Therefore $\boldsymbol{\Delta}=\mathbf{0}$, which implies $\bar{\mathbf{P}}=\bar{\mathbf{S}}$.  
Hence the stabilizing solution of \eqref{eq: discrete riccati} is unique.

\paragraph*{Sufficiency using condition (2)}
Assume that for every $\ell$, $(\mathbf A,\mathbf B\widehat{\mathbf H}_\ell)$ is stabilizable and $(\mathbf A,\mathbf Q^{1/2})$ is detectable.

From stabilizability, for each mode $\ell$ there exists a gain $\mathbf F_\ell$ such that  $\Gamma_\ell = \mathbf A+\mathbf B\widehat{\mathbf H}_\ell\mathbf F_\ell$
is Schur. Define the Lyapunov operator $\mathcal T(Y)_\ell= \Gamma_\ell^{H}\,Y_{\ell'}\,\Gamma_\ell .$
Since every $\Gamma_\ell$ is Schur, it follows that  $r_\sigma(\mathcal T)<1 ,$
which implies that $(\mathbf A,\mathbf B\widehat{\mathbf H}_\ell)$ is uniformly stabilizable in the sense of \cite{do2005discrete}.  

On the other hand, detectability of $(\mathbf A,\mathbf Q^{1/2})$ guarantees the existence of an observer gain $\mathbf M$ such that $\mathbf A+\mathbf M\mathbf Q^{1/2} $ is Schur.
Setting $\mathbf M_\ell\equiv\mathbf M$, we obtain the detectability operator $\mathcal L(Y)_\ell= (\mathbf A+\mathbf M\mathbf Q^{1/2})^{H}\,Y_{\ell'}\,(\mathbf A+\mathbf M\mathbf Q^{1/2}),$
which also satisfies $r_\sigma(\mathcal L)<1 .$
Hence the pair $(\mathbf A,\mathbf Q^{1/2})$ is uniformly detectable.  

By \cite[Corollary A.16]{do2005discrete}, the above uniform stabilizability and detectability ensure that the coupled Riccati equation \eqref{eq: discrete riccati} admits a stabilizing solution $\bar{\mathbf P}\succeq \mathbf 0$; and by \cite[Lemma A.14]{do2005discrete}, such a stabilizing solution is unique and coincides with the maximal solution.
Moreover, since $\mathbf Q \succ \mathbf 0$ and $\,\mathbf R_{\mathrm{eff}} \succ \mathbf 0$, the corresponding Riccati recursion yields strictly positive definite cost matrices, hence the limiting stabilizing solution is in fact positive definite, {i.e.,} $\bar{\mathbf P} \succ \mathbf 0$.

Therefore, under condition (2) with $\mathbf Q \succ \mathbf 0$, the Riccati equation admits a unique {positive definite} stabilizing solution $\bar{\mathbf P} \succ \mathbf 0$, and the associated feedbacks $\mathbf K_\ell(\bar{\mathbf P})$ render all closed loops Schur. This completes the proof.

\ifCLASSOPTIONcaptionsoff
  \newpage
\fi





\bibliographystyle{IEEEtran}
\bibliography{IEEEabrv,Bibliography}

\vfill


\end{document}